\documentclass[12pt, draftclsnofoot,onecolumn,letterpaper]{IEEEtran}
\usepackage{inputenc}
\usepackage{cite,acronym,paralist,xspace}
\usepackage[cmex10]{amsmath}
\usepackage{amssymb,amsfonts,amsthm}
\usepackage{algorithmic}
\usepackage{pgf}
\usepackage{textcomp}
\usepackage{xcolor}
\usepackage{subcaption}
\usepackage{stmaryrd} 
\usepackage{dsfont}
\usepackage{balance}
\usepackage{diagbox}
\usepackage{enumitem}
\usepackage{cite}

\usepackage[linesnumbered,ruled,vlined]{algorithm2e}
\newtheorem{definition}{Definition}
\newtheorem{theorem}{Theorem}%
\newtheorem{lemma}[theorem]{Lemma}%
\newtheorem{remark}{Remark}









\renewcommand{\leq}{\leqslant} 
\renewcommand{\geq}{\geqslant} 
\newcommand{\argmin}{\mathop{\text{argmin}}} 
\newcommand{\argmax}{\mathop{\text{argmax}}} 








\DeclareMathAlphabet{\eurm}{U}{eur}{m}{n}
\DeclareMathAlphabet{\mathbsf}{OT1}{cmss}{bx}{n}
\DeclareMathAlphabet{\mathssf}{OT1}{cmss}{m}{sl}
\DeclareMathAlphabet{\mathcsf}{OT1}{cmss}{sbc}{n}



\DeclareSymbolFont{bsfletters}{OT1}{cmss}{bx}{n}  
\DeclareSymbolFont{ssfletters}{OT1}{cmss}{m}{n}
\DeclareMathSymbol{\bsfGamma}{0}{bsfletters}{'000}
\DeclareMathSymbol{\ssfGamma}{0}{ssfletters}{'000}
\DeclareMathSymbol{\bsfDelta}{0}{bsfletters}{'001}
\DeclareMathSymbol{\ssfDelta}{0}{ssfletters}{'001}
\DeclareMathSymbol{\bsfTheta}{0}{bsfletters}{'002}
\DeclareMathSymbol{\ssfTheta}{0}{ssfletters}{'002}
\DeclareMathSymbol{\bsfLambda}{0}{bsfletters}{'003}
\DeclareMathSymbol{\ssfLambda}{0}{ssfletters}{'003}
\DeclareMathSymbol{\bsfXi}{0}{bsfletters}{'004}
\DeclareMathSymbol{\ssfXi}{0}{ssfletters}{'004}
\DeclareMathSymbol{\bsfPi}{0}{bsfletters}{'005}
\DeclareMathSymbol{\ssfPi}{0}{ssfletters}{'005}
\DeclareMathSymbol{\bsfSigma}{0}{bsfletters}{'006}
\DeclareMathSymbol{\ssfSigma}{0}{ssfletters}{'006}
\DeclareMathSymbol{\bsfUpsilon}{0}{bsfletters}{'007}
\DeclareMathSymbol{\ssfUpsilon}{0}{ssfletters}{'007}
\DeclareMathSymbol{\bsfPhi}{0}{bsfletters}{'010}
\DeclareMathSymbol{\ssfPhi}{0}{ssfletters}{'010}
\DeclareMathSymbol{\bsfPsi}{0}{bsfletters}{'011}
\DeclareMathSymbol{\ssfPsi}{0}{ssfletters}{'011}
\DeclareMathSymbol{\bsfOmega}{0}{bsfletters}{'012}
\DeclareMathSymbol{\ssfOmega}{0}{ssfletters}{'012}
























\newcommand{\calX}{{\mathcal{X}}}

\SetCommentSty{mycommfont}

\SetKwInput{KwInput}{Input}                
\SetKwInput{KwInitial}{Initialization}                
\SetKwInput{KwOutput}{Output}              

\title{Covert Online Decision Making:\\ From Sequential Hypothesis Testing to Stochastic Bandits}
\author{\IEEEauthorblockN{Meng-Che Chang and Matthieu R. Bloch}}

\begin{document}
\maketitle
\begin{abstract}
    We study the problem of covert online decision making in which an agent attempts to identify a parameter governing a system by probing the system while escaping detection from an adversary. The system is modeled as Markov kernel whose input is controlled by the agent and whose two outputs are observed by the agent and the adversary, respectively. This problem is motivated by applications such as covert sensing or covert radar, in which one tries to perform a sensing task without arousing suspicion by an adversary monitoring the environment for the presence of sensing signals. Specifically, we consider two situations corresponding to different amounts of knowledge of the system. If the kernel is known but governed by an unknown fixed parameter, we formulate the problem as a sequential hypothesis testing problem. If the kernel determining the observations of the agent is unknown but the kernel determining those of the adversary is known, we formulate the problem as a best arm identification problem in a bandit setting. In both situations, we characterize the exponent of the probability of identification error. As expected because of the covertness requirement, the probability of identification error decays exponentially with the \emph{square-root} of the blocklength.
\end{abstract}

\IEEEpeerreviewmaketitle

\section{Introduction}
\label{sec:introduction}

Decision making encompasses many fundamental problems in the area of communication, control, sensing, and machine learning. 
A common objective in decision making problems consists in identifying the unknown but fixed parameters governing the distribution of observations. This type of decision making problem falls under the framework of {\it hypothesis testing} \cite{Chernoff1952,Hoeffding1994,Neyman1992,Cover_Thomas}. When the number of observations is fixed, the problem is called {\it fixed-length hypothesis} testing and the optimum performance has been thoroughly analyzed either in Stein's region \cite{Cover_Thomas} or in the Bayesian region \cite{Chernoff1952}, for which the detection error exponent is characterized by the relative entropy and the Chernoff information between distributions corresponding to difference parameters, respectively. 
When the number of observations varies, the problem is called {\it sequential hypothesis testing} \cite{Wald1948,Chernoff1959,Baum1994}. The detection error probability exponent has again been thoroughly investigated and is in general larger than in the fixed-length setting.
The extension of sequential hypothesis to situations in which the decision maker can causally choose one of several available actions to influence the distribution of its observations, has been analyzed in \cite{Nitinawarat2013,Naghshvar2013}, and named {\it controlled sensing} or {\it active hypothesis testing} in \cite{Nitinawarat2013} and \cite{Naghshvar2011}, respectively. 
Another objective of a decision maker can be to identify the action that results in the largest average rewards from its observations without  knowledge of output distributions. This problem falls in the framework of {\it best-arm identification} (BAI) in the multi-arm bandit (MAB) literature \cite{EvenDar2002,EvenDar2006,Gariver2016}. 
The problem formulation of BAI can be traced back to \cite{EvenDar2002}, where the authors analyze the bandit problem in the {\it probabilistic approximately correct} setting, i.e., the designed algorithm should identify the best arm (or action) with high probability as fast as possible. The problems of controlled sensing and BAI have very similar mathematical models. In controlled sensing, we are given a known set of distribution $\{\nu_{\theta}^x\}_{\theta\in\Theta,x\in\mathcal{X}}$, where $\Theta$ is the set of parameters and $\mathcal{X}$ is the set of actions. For any parameter $\theta\in\Theta$, the set of distributions $\{\nu_{\theta}^x\}_{x\in\mathcal{X}}$ can also be viewed as a bandit machine, where $\mathcal{X}$ is the set of arms. Because of their similarity, we analyze these two problems in a unified framework. 

Concurrently to advances in decision making, \emph{security} has become an increasingly important concern when designing an algorithm or a system because of the growing amount of sensitive data involved. A common security requirement is {\it secrecy} \cite{Bloch2011PhysicalLayerSF,Bloch2013}, by which the objective is to ensure that no information is leaked about the transmitted data through the observations of an adversary. While secrecy is relevant for many communication problems, the concept is not always directly applicable in the context of decision making because the actions carried out do not necessarily carry information by themselves but rather act as probing signals to identify an underlying parameter. As an example, \cite{Chang2020} analyzes secrecy in the context of controlled sensing by defining the performance metric as the ratio of error exponents between the legitimate receiver and an eavesdropper, hence capturing the ability of decision making policies to slow-down the decision making of an adversary. In addition to secrecy, {\it covertness} has emerged as a useful security concept. In covert communications, the goal is to hide the presence of the communication from an adversary. It has been shown in \cite{Bash2013,Bloch2016,Mehrdad2020} that, in order to achieve covertness, a \emph{square-root law} should be satisfied, which states that the number of effective channel uses should on the order of $\sqrt{n}$, where $n$ is the blocklength. Motivated by potential applications to undetectable radar and sensing~\cite{Gagatsos2019}, our goal is to explore the performance of covert decision making in sense that we make precise next. We note that there have been intriguing related works in the context of covert control~\cite{Bai2014,Bai2015,Barak2022} but our focus is on sensing rather than control. There exist conceptual similarities between our approach and recent work investigating covert communication from the perspective of change point detection~\cite{Huang2020,Huang2021}, although the metrics and results are fairly different because of our focus on sensing rather than communication. 

The first works studying covertness in decision making can be found in \cite{Goeckel2017,Gagatsos2019,Tahmasbi2020,Tahmasbi2020c,Tahmasbi2021}, motivated by quantum covert sensing. In particular,~\cite{Tahmasbi2020} analyzes the exponent of active hypothesis testing while maintaining certain covertness constraint in the {\it fixed-length} setting. Since the error exponent of active hypothesis testing is larger 
in the {\it sequential} setting~\cite{Nitinawarat2013, Li2022}, we seek here to investigate the performance of sequential hypothesis testing with covertness constraints.  
As an attempt to study the problem in presence of uncertainty regarding the hypotheses tested, we formulate a related best-arm identification problem, which we show admits to a similar characterization. There exist related works that study the performance of decision making algorithm under the existence of adversary. 
The authors in \cite{Lykouris18} and \cite{Gupta19a} design algorithms that minimize the regret under the influence of adversarial attack, while the problem of BAI with adversarial corruptions is analyzed in \cite{Zhong2021} and \cite{mukherjee2021}. Moreover, the influence of adversarial attacks on either observations or actions in controlled sensing can also be found in \cite{Chang2022_controlsensing} and \cite{Jin2021}. However, none of these work study the problem of BAI or sequential hypothesis testing with covertness constraints. 

Our main contributions are as follows.
\begin{itemize}
    \item We formally introduce the problem of covert sequential decision making, which includes sequential hypothesis testing and best-arm identification, and relate the stopping time and covertness constraints to the adversary's ability to identify the presence of decision making protocol.
    \item We characterize both lower bounds and  upper bounds for the detection error exponent. Our results show that the upper bounds and the lower bounds match for certain general classes of protocols.
    \item Our proofs differ from standard proofs by requiring a careful analysis of the decision making process in the presence of covertness constraints. In particular, we rely extensively on Freedman's inequality throughout our analysis. 
\end{itemize}
While the results presented here a largely of theoretical nature, they shed light on how to design decision making process that achieve their goals without arousing suspicion. Such results could already find applications in the context of quantum sensing and ranging~\cite{Hao2022Demonstration}.

The rest of the paper is organized as follows. After briefly reviewing notation in Section~\ref{sec:notation}, we formally introduce our problem in Section~\ref{sec:model} and state our main results together will illustrative numerical examples in Section~\ref{sec:main_results}. In Section~\ref{sec:math_tool}, we review mathematical tools that are frequently used throughout the paper. 
Finally, the proofs of our main theorems are relegated to Section~\ref{sec:prof_thm1}-Section~\ref{sec:prof_thm4}. 

\section{Notation}
\label{sec:notation}

Let $\mathcal{X}$ be any alphabet set. For any $n\in\mathbb{N}$, $x^n=(x_1,\cdots,x_n)$ is a sequence of variables in $\mathcal{X}^n$, and for any $i\leq j\leq n$, $x_i^j = (x_1,\cdots, x^j)$ is a sub-sequence of of $x^n$. 
The set of distributions on $\mathcal{X}$ is denoted by $\mathcal{P}_{\mathcal{X}}$. 
If $p\in\mathcal{P}_{\mathcal{X}}$ and $q\in\mathcal{P}_{\mathcal{X}}$ be two different distributions on $\mathcal{X}$, we define $\mathbb{V}(p\Vert q) \triangleq \frac{1}{2} \sum_{x\in\mathcal{X}}|p(x)-q(x)|$,  
$\mathbb{D}(p\Vert q)\triangleq\sum_x p(x)\log\frac{p(x)}{q(x)}$ and $\chi_2(p\Vert q)=\sum_{x\in\mathcal{X}}\frac{(p(x)-q(x))^2}{q(x)}$ as the total variational distance,the  relative entropy and the Chi-square distance between $p$ and $q$, respectively. We say $p$ is absolutely continuous with respect to (w.r.t.) $q$, denoted by $p \ll q$, if for all $x\in\calX$, $p(x)=0$ if $q(x)=0$. For any distribution $p\in\mathcal{P}_{\mathcal{X}}$, the mean of the distribution $p$ is denoted by $\mu(p)\triangleq \sum_{x\in\mathcal{X}}xp(x)$. We also denote by $\Vert p-q \Vert_{\infty}\triangleq \max_{x\in\mathcal{X}} |p(x)-q(x)|$. 
$\mathbb{N}^+\triangleq \{1,2,\cdots\}$ is the set of all positive integers. The Landau notation $g(n) = O_{n\rightarrow \infty}(f(n))$ means that $g(n) \leq C f(n)$ for some $0<C<\infty$ for all $n$ sufficiently large. Similarly, the notation $g(\delta) = O_{\delta\rightarrow 0}(f(\delta))$ means that $g(\delta) \leq C f(\delta)$ for some $0<C<\infty$ for all $\delta$ small enough. Other Landau notation are defined similarly and the tilde Landau’s notations, i.e. $\Tilde{O}, \Tilde{o}, \Tilde{\Omega}, \Tilde{\omega}$, and $\Tilde{\Theta}$, are the defined as the same way as conventional Landau's notation but ignoring logarithmic factors. For example, $g(n)=\Tilde{O}_{n\rightarrow\infty}(f(n))$ means that $g(n)\leq Cf(n)\times \textnormal{poly}(\log f(n))$ for some polynomial function and some constant $C$ for all $n$ sufficiently large.  

\section{Problem Formulation}
\label{sec:model}
\subsection{General Model Description}
We consider the following general model of an online decision making problem with an adversary. Let $\{\nu_{\theta}^{x}\}_{x\in[0;K],\theta\in\Theta}$ and $\{q_{\theta}^{x}\}_{x\in[0;K],\theta\in\Theta}$ be the collections of distributions of observations of the agent (Alice) and the adversary (Willie), respectively, where $\Theta$ is the known and finite set of hypotheses and $\mathcal{X}\triangleq[0;K]$ is the set of actions. 
We denote by $X_t\in \mathcal{X}$ the action chosen by the agent at each time $t\in\mathbb{N}^+$. The action $X_t$ determines the index of the distributions from which the observations of Alice and Willie are generated. Specifically, for any $t\in\mathbb{N}^+$, the observation $Y_t$ of Alice and the observation $Z_t$ of Willie are generated from the distribution $\nu^{X_t}_{\theta}$ and $q^{X_t}_{\theta}$, respectively, when the hypothesis is $\theta$. Among the set of actions $\mathcal{X}$, we denote by $0$ the null action, which corresponds to the situation in which no effective action is chosen. We assume that the distribution $\nu_{\theta}^0$ has a zero mean for all $\theta\in\Theta$ and $\mathbb{D}(\nu_{\theta}^0\Vert \nu_{\theta'}^0)=0$ for all $\theta\neq\theta'$, implying that taking the null action is useless in distinguishing different hypotheses. The fact that the null action has zero mean is also known by the agent before taking any actions. Moreover, for all $x\in \mathcal{X}\setminus\{0\}$, we also assume that $0<\mathbb{D}(\nu_{\theta}^x\Vert \nu_{\theta'}^x)<\infty$ for all $\theta\neq \theta'$ so that different hypotheses cannot be distinguished perfectly by any single action $x\in\mathcal{X}\setminus\{0\}$. 
The action $X_t$ is chosen according to some causal distribution $P_{X_t|X^{t-1},Y^{t-1}}$, which depends on past actions and observations. Note that if we fix the hypothesis $\theta$, the set of distributions $\mathcal{V}_{\theta}=\{\nu_{\theta}^x\}_{x\in\mathcal{X}}$ and $\mathcal{Q}_{\theta}=\{q_{\theta}^x\}_{x\in \mathcal{X}}$ can be viewed as stochastic bandit machines in which the set of actions $\mathcal{X}$ is the set of arms in the context of multi-arm bandits. For any $\theta\in\Theta$, we also denote by $x^*(\mathcal{V}_{\theta})$ the arm with which the distribution $\nu_{\theta}^{x^*(\mathcal{V}_{\theta})}$ has the largest mean among $\{\nu_{\theta}^x\}_{x\in\mathcal{X}}$.

Different problem formulations are possible when varying the assumptions in our general model. We list the general descriptions of the problems that we would like to analyze.  
\begin{enumerate}
    \item[(P1)] When the agent knows $\{\mathcal{V}_{\theta}\}_{\theta\in\Theta}$ and $\{\mathcal{Q}_{\theta}\}_{\theta\in\Theta}$ but not the hypothesis $\theta$, we  formulate a covert active hypothesis testing problem, in which the objective is to identify $\theta$ subject to a covertness constraint. 
    \item[(P2)] When the agent knows $\{\mathcal{Q}_{\theta}\}_{\theta\in\Theta}$ and $\theta$ but not $\{\mathcal{V}_{\theta}\}_{\theta\in\Theta}$, we formulate a covert best-arm identification problem, in which the objective is to identify the arm resulting in the largest expected reward (observation) for the stochastic bandit $\mathcal{V}_{\theta}$ subjecting to a covertness constraint. In this problem, we make the assumption that $\nu_{\theta}^x$ and $q_{\theta}^x$ are Gaussian distributions with variance $1$ for all $\theta\in\Theta$ and $x\in\mathcal{X}$. We denote by $\mathcal{E}_{\mathcal{N}}$ the set of Gaussian bandits with $|\mathcal{X}|-1$ non-null arms and one null arm, where the distribution corresponding to each arm is a Gaussian distribution and has a unit variance. 
\end{enumerate}
Formally, an online decision-making policy $\pi=(\phi,\varphi,\psi)$ in both (P1) and (P2) is composed of three elements, namely, 
\begin{inparaenum}
    \item a control policy $\phi=\{P_{X_t|X^{t-1},Y^{t-1}}\}_{t=1}^{\infty}$ that determines the actions,
    \item a stopping rule $\varphi$ that determines when the decision-making policy stops, and
    \item the final decision rule $\psi$ that identifies the estimated hypothesis in (P1) and the best arm of the bandit machine $\mathcal{V}_{\theta}$ in (P2).  
\end{inparaenum}
For each $t\in\mathbb{N}^+$, we denote by $S_{t}\in \mathcal{S}\triangleq\{\textnormal{stop},\textnormal{continue}\}$ the status indicating whether the policy stops or not. Then, a stopping rule $\varphi=\{\varphi_t\}_{t=1}^{\infty}$,  $\varphi_t:\mathcal{X}^{t}\times Y^{t} \mapsto \mathcal{S}$ for all $t\in\mathbb{N}^+$, is a function deciding on the status of $S_t$. The status of $S_t$ is stop whenever $S_{k}=\textnormal{stop}$ for any $k<t$. We also denote by $\tau\triangleq \inf\{t\in\mathbb{N}^+:S_t=\textnormal{stop}\}$ the stopping time of the decision-making policy $\pi$, where $\tau$ is adapted to the filtration $(\mathcal{F}_{t})_{t=0}^{\infty}$ with $\mathcal{F}_t=\sigma(X_1,Y_1,\cdots,X_t,Y_t)$ the $\sigma$-algebra generated by $(X_1,Y_1,\cdots,X_t,Y_t)$. 
By saying the decision making policy stops, we mean that null actions are chosen for all $t>\tau$, i.e., $P_{X_t|X^{t-1}=x^{t-1},Y^{t-1}=y^{t-1}}=\mathbf{1}(X_t=0)$ whenever $(x^{t-1},y^{t-1})$ contains a subsequence $(x^{k},y^{k})$ such that $\varphi_k(x^k,y^k)=\textnormal{stop}$ for some $k<t$.  
We let $\mathbb{P}_{\mathcal{V}_{\theta},\mathcal{Q}_{\theta},\pi}$ be the probability measure of the tuple of sequences $(x^{k},y^{k},z^{k})\in \mathcal{X}^{k}\times \mathcal{Y}^{k}\times \mathcal{Z}^{k}$ of any length $k\in\mathbb{N}^+$ under the decision making policy $\pi$ and the stochastic bandits $\mathcal{V}_{\theta}$ and $\mathcal{Q}_{\theta}$, i.e., 
\begin{align}
    \mathbb{P}_{\mathcal{V}_{\theta},\mathcal{Q}_{\theta},\pi}(x^k,y^k,z^k) = \prod_{i=1}^k P_{X_i|X^{i-1}=x^{i-1},Y^{i-1}=y^{i-1}}(x_i) \nu_{\theta}^{x_i}(y_i)q_{\theta}^{x_i}(z_i)
\end{align}
for any $k\in\mathbb{N}^+$. Similarly, $\mathbb{P}_{\mathcal{V}_{\theta},\pi}$ is the probability measure of the tuple of sequences $(x^k,y^k)\in\mathcal{X}^k\times \mathcal{Y}^k$ for any $k\in\mathbb{N}^+$ under the policy $\pi$ and the bandit $\mathcal{V}_{\theta}$. $\mathbb{E}_{\mathcal{V}_{\theta},\pi}$ and $\mathbb{E}_{\mathcal{V}_{\theta},\mathcal{Q}_{\theta},\pi}$ denote the expectation under $\mathbb{P}_{\mathcal{V}_{\theta},\mathcal{Q}_{\theta},\theta}$ and $\mathbb{P}_{\mathcal{V}_{\theta},\theta}$, respectively. Finally, for any random variable $U$ such that $\sigma(U) \subset (\mathcal{F}_t)_{t=0}^{\infty}$, we denote by $\mathbb{E}_{U;\mathcal{V}_{\theta},\pi}$ the expectation of $U$ under $\mathbb{P}_{\mathcal{V}_{\theta},\pi}$. 
We are now ready to define our problem formally.

\subsection{Covert Active Hypothesis Testing (P1)}
In the context of sequential hypothesis testing, the stopping time $\tau$ should satisfy certain time budget constraints. In this paper, we consider a $\emph{probabilistic}$ time constraint, i.e., the probability that the stopping time exceed the budget $n\in\mathbb{N}^+$ decreases to $0$ asymptotically when $n\rightarrow\infty$,  
\begin{align}
    \lim_{n\rightarrow\infty}\max_{\theta\in\Theta}  \mathbb{P}_{\mathcal{V}_{\theta},\pi}(\tau>n) = 0. 
    \label{eqn:time_budget_1}
\end{align}
In addition to the time budget constraint, we also require the decision making policy $\pi$ to be \emph{covert} with respective to Willie. 
Before elaborating on the covertness constraint, we need to define the status of Alice formally. Let $\mathcal{I}=\{\textnormal{active},\textnormal{idle}\}$ be the set of status of Alice. Alice performs the decision making policy $\pi$ when she is active. When Alice is idle, the null decision making policy $\pi_0$ among which the control policy $\phi=\{\mathbf{1}(X_t=0)\}_{t=1}^{\infty}$ is applied so that the observation $Z_t$ is generated from the distributions $q_{\theta}^0$ for any $t\in\mathbb{N}^+$ when the true hypothesis is $\theta$. Note that the stopping rule $\varphi$ and the final decision rule $\psi$ of the null policy $\pi_0$ can be defined arbitrarily, and they do not affect our analysis. 
Then, the covertness constraint in (P1) is
\begin{align}
    \lim_{n\rightarrow \infty} \mathbb{D}(P_{Z^{n};\theta}\Vert (q_{\theta}^{0})^{\otimes n})\leq\eta \textnormal{ for all }\theta\in\Theta,
    \label{eqn:covertness}
\end{align}
where $P_{Z^n;\theta}(z^n) \triangleq \mathbb{P}_{\mathcal{V}_{\theta},\mathcal{Q}_{\theta},\pi}(z^n)$ for any $z^n\in\mathcal{Z}^n$, and $\eta$ is the parameter governing how covert the strategy should be. The relationship between the covertness constraint in \eqref{eqn:covertness} and Willie's capability to identify the decision making policy $\pi$ is discovered in Remark~\ref{rem:1}.

\begin{remark}
    \label{rem:1}
   For all $k\in\mathbb{N}^+$, let $\rho_k:\mathcal{Z}^k\mapsto\mathcal{I}$ be the decision function of Willie to determine whether Alice is active or not by using $k$ observations. Fix any $\theta\in\Theta$ and for any $k\in\mathbb{N}^+$, we define $\alpha_{\theta,k}\triangleq \mathbb{P}_{\mathcal{V}_{\theta},\mathcal{Q}_{\theta},\pi}(\rho_k(Z^k)=\textnormal{idle})$ and $\beta_{\theta,k}\triangleq \mathbb{P}_{\mathcal{V}_{\theta},\mathcal{Q}_{\theta},\pi_0}(\rho_k(Z^k)=\textnormal{active})$ as the two kinds of error probability. We require Willie's performance for identifying the active policy to be close to the performance of a random guess for all decision functions $\rho_k$ for all $k\in\mathbb{N}^+$. We ensure this by enforcing a lower bound on $\alpha_{\theta,k}+\beta_{\theta,k}$ for all $\theta\in\Theta$ and $k\in\mathbb{N}^+$. For all $k\leq n$,
   \begin{align}
       \alpha_{\theta,k} + \beta_{\theta,k} &= \mathbb{P}_{\mathcal{V}_{\theta},\mathcal{Q}_{\theta},\pi}(\rho_k(Z^k)=\textnormal{idle}) + \mathbb{P}_{\mathcal{V}_{\theta},\mathcal{Q}_{\theta},\pi_0}(\rho_k(Z^k)=\textnormal{active})\\
       &\geq 1 - \mathbb{V}(P_{Z^k;\theta}\Vert (q_{\theta}^0)^{\otimes k}) \label{eqn:rmk_1_0}\\
       &\geq 1 - \sqrt{\mathbb{D}\left(P_{Z^k;\theta}\middle\Vert (q_{\theta}^0)^{\otimes  k}\right)}\label{eqn:rmk_1_1}\\
       &\geq 1 -\sqrt{\mathbb{D}\left(P_{Z^n;\theta}\middle\Vert (q_{\theta}^0)^{\otimes  n}\right)} 
       \label{eqn:rmk_1_2},
   \end{align}
   where \eqref{eqn:rmk_1_0} follows from the definition of total variational distance, \eqref{eqn:rmk_1_1} follows from Pinsker's inequality, and \eqref{eqn:rmk_1_2} follows from the monotonicity of relative entropy. 
   Similarly, for all $k>n$, 
   \begin{align}
       \alpha_{\theta,k} + \beta_{\theta,k} &= \mathbb{P}_{\mathcal{V}_{\theta},\mathcal{Q}_{\theta},\pi}(\rho_k(Z^k)=\textnormal{idle}) + \mathbb{P}_{\mathcal{V}_{\theta},\mathcal{Q}_{\theta},\pi_0}(\rho_k(Z^k)=\textnormal{active})\\
       &\geq  \mathbb{P}_{\mathcal{V}_{\theta},\mathcal{Q}_{\theta},\pi}(\rho_k(Z^k)=\textnormal{idle},\tau\leq n) + \mathbb{P}_{\mathcal{V}_{\theta},\mathcal{Q}_{\theta},\pi_0}(\rho_k(Z^k)=\textnormal{active}),
    \end{align}
where 
    \begin{align}
        &\mathbb{P}_{\mathcal{V}_{\theta},\mathcal{Q}_{\theta},\pi}(\rho_k(Z^k)=\textnormal{idle},\tau\leq n) \nonumber\\
        &=
       \sum_{z^k}\sum_{(x^n,y^n)} \mathbb{P}_{\mathcal{V}_{\theta},\pi}\left(x^n,y^n\right) \mathbf{1}\left(\varphi(x^n,y^n)=\textnormal{stop}\right)\left(\prod_{i=1}^n q_{\theta}^{x_i}(z_i)\prod_{i=n+1}^k q_{\theta}^{0}(z_i) \right)\mathbf{1}(\rho_k(z^k)=\textnormal{idle})\nonumber\\
       &\geq \sum_{z^k}\sum_{(x^n,y^n)} \mathbb{P}_{\mathcal{V}_{\theta},\pi}\left(x^n,y^n\right)\left(\prod_{i=1}^n q_{\theta}^{x_i}(z_i)\prod_{i=n+1}^k q_{\theta}^{0}(z_i)\right)\mathbf{1}(\rho_k(z^k)=\textnormal{idle}) - \mathbb{P}_{\mathcal{V}_{\theta},\pi}(\tau>n). 
   \end{align}
   By defining $\check{P}_{Z^k;\theta}(z^k)=\sum_{(x^n,y^n)} \mathbb{P}_{\mathcal{V}_{\theta},\pi}\left(x^n,y^n\right)\left(\prod_{i=1}^n q_{\theta}^{x_i}(z_i)\prod_{i=n+1}^k q_{\theta}^{0}(z_i)\right)$ for any $z^k\in\mathcal{Z}^k$, we have 
   \begin{align}
       \alpha_{\theta,k} + \beta_{\theta,k} &\geq 1 - \sqrt{\mathbb{D}\left(\check{P}_{Z^k;\theta}\middle\Vert (q_{\theta}^0)^{\otimes k}\right)} - \mathbb{P}_{\mathcal{V}_{\theta},\pi}(\tau>n) \\
       &= 1 - \sqrt{\mathbb{D}\left(P_{Z^n;\theta}\middle\Vert (q_{\theta}^0)^{\otimes n}\right)} - \mathbb{P}_{\mathcal{V}_{\theta},\pi}(\tau>n) 
       \label{eqn:rmk_1_3}
   \end{align}
   for all $k>n$, where in \eqref{eqn:rmk_1_3} we use the fact that $\check{P}_{Z^{k};\theta}(z_n^k)=(q_{\theta}^0)^{\otimes k-n}(z_n^k)$ for any $z_n^k\in\mathcal{Z}^{k-n}$. 
   The probability $\mathbb{P}_{\mathcal{V}_{\theta},\pi}(\tau>n)$ should be arbitrarily small by \eqref{eqn:time_budget_1} when $n$ is sufficiently large. 
   Therefore, for all $k\in\mathbb{N}^+$, we have 
   \begin{align}
       \alpha_{\theta,k} + \beta_{\theta,k} &\geq 1 - \sqrt{\eta}
   \end{align}
   when $n$ is sufficiently large by applying our stopping time constraint in \eqref{eqn:time_budget_1} and the covertness constraint in \eqref{eqn:covertness}. 

\end{remark}

Fix any $\eta>0$, we define $\Lambda_1(\eta)$ as the sets of policies that satisfy the covertness constraint in \eqref{eqn:covertness} and the time budget constraint in \eqref{eqn:time_budget_1}, respectively. Then, given any time budget $n\in\mathbb{N}^+$ and any $\eta>0$, the error probability of any policy $\pi \in\Lambda_1(\eta)$ in (P1) is defined as 
\begin{align}
    P_{\textnormal{err},1}^{(n)}(\pi) \triangleq \max_{\theta\in\Theta} \mathbb{P}_{\mathcal{V}_{\theta},\pi}(\psi(Y^{\tau},X^{\tau})\neq \theta). 
\end{align}
Moreover, we define the error exponent as  
\begin{align}
    \gamma_{1}(\pi) \triangleq \liminf_{n\rightarrow \infty }\frac{-\log P_{\textnormal{err},1}^{(n)}(\pi)}{\sqrt{n}} .
    \label{eqn:def_1}
\end{align} 
\begin{definition}[Achievability in Covert Sequential Testing]
    For any $\eta>0$, we say that the exponent $r$ is achievable with $\eta$-covertness in (P1) if there exists a policy $\pi\in \Lambda_{1}(\eta)$ such that $\gamma_{1}(\pi)>r$. 
\end{definition}
The objective of (P1) is to analyze the supremum of all achievable exponents, i.e., 
\begin{align}
    \gamma^{*}_{1} &= \sup_{\pi \in \Lambda_{1}(\eta)}   \gamma_{1}(\pi). 
\end{align}

\subsection{ Covert Best-Arm Identification (P2)}
In (P2), the hypothesis $\theta\in\Theta$ is assumed known. Therefore, we drop the subscript $\theta$ in $\mathcal{V}_{\theta}$ and $\mathcal{Q}_{\theta}$ to simplify the notation, and we also use the notation $\nu_x$ and $q_{x}$ to represent $\nu_{\theta}^x$ and $q_{\theta}^x$, respectively.  
In contrast to (P1), in which a predefined time budget $n$ exists, we restrict the  probability of incorrectly identifying the optimal arm. Specifically, for each  $\delta>0$, 
we define 
\begin{align}
    P_{\textnormal{err},2}^{(\delta)}(\pi) \triangleq \mathbb{P}_{\mathcal{V},\pi}(\psi(Y^{\tau},X^{\tau})\neq x^*(\mathcal{V})). 
\end{align}
Then, the {\it confidence constraint} requires that 
\begin{align}
    P_{\textnormal{err},2}^{(\delta)}(\pi) \leq \delta,
    \label{eqn:confidence_constraint}
\end{align}
where $\delta>0$ is the predefined value. Fix some small $\kappa>0$, we also define 
\begin{align}
    \tau_{\textnormal{sup}}^{(\delta)} \triangleq \inf\{a\in\mathbb{R}: \mathbb{P}_{\mathcal{V},\pi}\left(\tau>a\right) < \kappa \} 
\end{align}
when the confidence constraint is $\delta$. The definition of $\tau_{\textnormal{sup}}^{(\delta)}$ is similar to essential supremum of $\tau$, except for the non-zero value of $\kappa$. 
Then, the {\it covertness constraint} in this problem is defined as
\begin{align}
    \lim_{\delta \rightarrow 0} \mathbb{D}\left(P_{Z^{\tau_{\textnormal{sup}}^{(\delta)}}}\middle\Vert (q_{}^{0})^{\otimes \tau_{\textnormal{sup}}^{(\delta)}}\right)\leq\eta,
    \label{eqn:covert_constraint_bandit}
\end{align}
where $\eta>0$ is some predefined covertness constraint. 
As discussed in Remark~\ref{rem:2}, the definition of $\tau^{(\delta)}_{\textnormal{sup}}$ and the covertness constraint allow us to analyze the performance of Willie's ability to identify the presence of a best arm identification policy.

\begin{remark}
        For any $k\in\mathbb{N}^+$, let $\rho_k:\mathcal{Z}^{k}\mapsto \mathcal{I}$ be the decision function of Willie to determine the existence of the policy by using $k$ observations. Fix the bandit $\mathcal{V}$, $\mathcal{Q}$ and the policy $\pi$ in (P2), we define $\alpha_k\triangleq \mathbb{P}_{\mathcal{V},\mathcal{Q},\pi}(\rho_k(Z^k)=\textnormal{idle})$ and $\beta_k\triangleq\mathbb{P}_{\mathcal{V},\mathcal{Q},\pi_0}(\rho_k(Z^k)=\textnormal{active})$ as the miss detection and the false alarm probability of Willie's detection policy $\rho_k$, respectively. We are able to lower bound $\alpha_k+\beta_k$ for any $k\in\mathbb{N}$ as done in Remark~\ref{rem:1}. Specifically, if $k\leq \tau^{(\delta)}_{\textnormal{sup}}$, 
        \begin{align}
            \alpha_{k} + \beta_{k} &= \mathbb{P}_{\mathcal{V},\mathcal{Q},\pi}(\rho_k(Z^k)=\textnormal{idle}) + \mathbb{P}_{\mathcal{V},\mathcal{Q},\pi_0}(\rho_k(Z^k)=\textnormal{active})\label{eqn:rmk_2_0}\\
            &\geq 1 - \mathbb{V}(P_{Z^k}\Vert (q_0)^{\otimes k})\\
            &\geq 1 - \sqrt{\mathbb{D}\left(P_{Z^k}\middle\Vert (q_0)^{\otimes  k}\right)}\label{eqn:rmk_2_1}\\
             &\geq 1 -\sqrt{\mathbb{D}\left(P_{Z^{\tau_{\textnormal{sup}}^{(\delta)}}}\middle\Vert (q_0)^{\otimes  {\tau_{\textnormal{sup}}^{(\delta)}}}\right)} 
            \label{eqn:rmk_2_2},
        \end{align}
    where \eqref{eqn:rmk_2_0} follows from the definition of total variational distance, \eqref{eqn:rmk_2_1} follows from Pinsker's inequality, and \eqref{eqn:rmk_2_2} follows from the monotonicity of relative entropy. 
    Similarly, for all $k>\tau_{\textnormal{sup}}^{(\delta)}$, 
   \begin{align}
       \alpha_{k} + \beta_{k} &\geq  \mathbb{P}_{\mathcal{V},\mathcal{Q},\pi}(\rho_k(Z^k)=\textnormal{idle},\tau\leq \tau_{\textnormal{sup}}^{(\delta)}) + \mathbb{P}_{\mathcal{V},\mathcal{Q},\pi_0}(\rho_k(Z^k)=\textnormal{active}),
    \end{align}
where 
    \begin{align}
        &\mathbb{P}_{\mathcal{V},\mathcal{Q},\pi}(\rho_k(Z^k)=\textnormal{idle},\tau\leq \tau_{\textnormal{sup}}^{(\delta)}) \nonumber\\
        &=
       \sum_{z^k}\sum_{(x^{\tau_{\textnormal{sup}}^{(\delta)}},y^{\tau_{\textnormal{sup}}^{(\delta)}})} \mathbb{P}_{\mathcal{V},\pi}\left(x^{\tau_{\textnormal{sup}}^{(\delta)}},y^{\tau_{\textnormal{sup}}^{(\delta)}}\right) \mathbf{1}\left(\varphi(x^{\tau_{\textnormal{sup}}^{(\delta)}},y^{\tau_{\textnormal{sup}}^{(\delta)}})=\textnormal{stop}\right)\nonumber\\
       &\hspace{5cm}\times \Bigg(\prod_{i=1}^{\tau_{\textnormal{sup}}^{(\delta)}} q_{x_i}(z_i)\prod_{i=\tau_{\textnormal{sup}}^{(\delta)}+1}^k q_{0}(z_i) \Bigg)\mathbf{1}(\rho_k(Z^k)=\textnormal{idle})\nonumber\\
       &\geq \sum_{z^k}\sum_{(x^{\tau_{\textnormal{sup}}^{(\delta)}},y^{\tau_{\textnormal{sup}}^{(\delta)}})} \mathbb{P}_{\mathcal{V},\pi}\left(x^{\tau_{\textnormal{sup}}^{(\delta)}},y^{\tau_{\textnormal{sup}}^{(\delta)}}\right)\left(\prod_{i=1}^{\tau_{\textnormal{sup}}^{(\delta)}} q_{x_i}(z_i)\prod_{i=\tau_{\textnormal{sup}}^{(\delta)}+1}^k q_{0}(z_i)\right)\mathbf{1}(\rho_k(Z^k)=\textnormal{idle}) \nonumber\\
       &\quad - \mathbb{P}_{\mathcal{V},\pi}(\tau>\tau_{\textnormal{sup}}^{(\delta)}). 
   \end{align}
   By defining $\check{P}_{Z^k}(z^k)=\sum_{(x^{\tau_{\textnormal{sup}}^{(\delta)}},y^{\tau_{\textnormal{sup}}^{(\delta)}})} \mathbb{P}_{\mathcal{V},\pi}\left(x^{\tau_{\textnormal{sup}}^{(\delta)}},y^{\tau_{\textnormal{sup}}^{(\delta)}}\right)\left(\prod_{i=1}^{\tau_{\textnormal{sup}}^{(\delta)}} q_{x_i}(z_i)\prod_{i=\tau_{\textnormal{sup}}^{(\delta)}+1}^k q_{0}(z_i)\right)$ for any $z^k\in\mathcal{Z}^k$, we have 
   \begin{align}
       \alpha_{k} + \beta_{k} &\geq 1 - \sqrt{\mathbb{D}\left(\check{P}_{Z^k}\middle\Vert (q_0)^{\otimes k}\right)} - \mathbb{P}_{\mathcal{V},\pi}(\tau>\tau_{\textnormal{sup}}^{(\delta)}) \\
       &= 1 - \sqrt{\mathbb{D}\left(P_{Z^{\tau_{\textnormal{sup}}^{(\delta)}}}\middle\Vert (q_0)^{\otimes \tau_{\textnormal{sup}}^{(\delta)}}\right)} - \mathbb{P}_{\mathcal{V},\pi}(\tau>\tau_{\textnormal{sup}}^{(\delta)}) 
       \label{eqn:rmk_2_3}\\
       &\geq 1 - \sqrt{\mathbb{D}\left(P_{Z^{\tau_{\textnormal{sup}}^{(\delta)}}}\middle\Vert (q_0)^{\otimes \tau_{\textnormal{sup}}^{(\delta)}}\right)} - \kappa 
       \label{eqn:rmk_2_4}
   \end{align}
   for all $k>\tau_{\textnormal{sup}}^{(\delta)}$, where in \eqref{eqn:rmk_2_3} we use the fact that $\check{P}_{Z^{k}}(z_{\tau_{\textnormal{sup}}^{(\delta)}}^k)=(q_0)^{\otimes k-\tau_{\textnormal{sup}}^{(\delta)}}(z_{\tau_{\textnormal{sup}}^{(\delta)}}^k)$ for any $z_{\tau_{\textnormal{sup}}^{(\delta)}}^k\in\mathcal{Z}^{k-\tau_{\textnormal{sup}}^{(\delta)}}$, and \eqref{eqn:rmk_2_4} follows from the definition of $\tau_{\textnormal{sup}}^{(\delta)}$. Therefore, for all $k\in\mathbb{N}^+$, we have 
   \begin{align}
       \alpha_k + \beta_k \geq 1 - \sqrt{\eta} - \kappa
   \end{align}
   when $\delta$ is sufficiently small by the covertness constraint in \eqref{eqn:covert_constraint_bandit}. 
    \label{rem:2}
\end{remark}
We also denote by $\Lambda_2(\eta)$ the set of decision making policies that satisfy \eqref{eqn:confidence_constraint} and \eqref{eqn:covert_constraint_bandit}. 
For any policy $\pi\in\Lambda_2(\eta)$, the error exponent is then defined as
\begin{align}
    \gamma_{2}(\pi) \triangleq \liminf_{\delta\rightarrow 0} \frac{-\log \delta}{\sqrt{\tau_{\textnormal{sup}}^{\delta}}}
\end{align}
\begin{definition}[Achievability in Covert Best Arm Identification]
    For any $\eta>0$, we say that the exponent $r$ is achievable with $\eta$-covertness in (P2) if there exists a policy $\pi\in \Lambda_{2}(\eta)$ such that $\gamma_{2}(\pi)>r$. 
\end{definition}

The objective of (P2) is to analyze the supremum of all achievable exponents, i.e., 
\begin{align}
    \gamma^{*}_{2} &= \sup_{\pi \in \Lambda_{2}(\eta)}   \gamma_{2}(\pi). 
\end{align}

\begin{remark}
    Our problem formulation in (P2) belongs to the fixed confidence setting for best arm identification. However, instead of analyzing the averaged stopping time \cite{Gariver2016}, the exponent is defined as the asymptotic ratio between $-\log \delta$ and the square-root of $\tau_{\textnormal{sup}}^{(\delta)}$ when $\delta\rightarrow 0$. We define the exponent in this way because $\tau_{\textnormal{sup}}^{(\delta)}$ is the quantity that helps us define a meaningful covertness constraint as mentioned in Remark~\ref{rem:2}.
    \label{rem:3}
\end{remark}

\section{Main Results}
\label{sec:main_results}

\subsection{Main Results for Covert Active Hypothesis Testing}
Our first theorem gives a lower bound on the optimal exponent $\gamma_1^*$.  
\begin{theorem}
Let $\Theta$ be the set of parameters that are indistinguishable from another parameter by choosing the null action $0$, i.e. $\mathbb{D}(\nu_{\theta}^0\Vert \nu_{\theta'}^0)=0$ for all $\theta\neq \theta'$. For all $\theta\in\Theta$, we assume that no distribution $\Bar{P}_X$ over $\mathcal{X}\setminus\{0\}$ is such that $\sum_{x\neq 0}\Bar{P}_X(x)q_{\theta}^x = q_{\theta}^0$. Then, 
we have
\begin{align}
    \gamma_{1}^* \geq \sqrt{2\eta} \min_{\theta\in\Theta} \max_{\Bar{P}_X\in\mathcal{P}_{\mathcal{X}\setminus\{0\}}}\min_{\theta'\neq\theta}  \frac{\sum_{x\neq 0}\Bar{P}_X(x)\mathbb{D}(\nu_{\theta}^x\Vert \nu_{\theta'}^x)}{\sqrt{\chi_2\left(\sum_{x\neq 0}\Bar{P}_X(x)q_{\theta}^x\Vert q_{\theta}^0\right)}}.
    \label{eqn:main_result1}
\end{align}
\label{thm:main_result1}
\end{theorem}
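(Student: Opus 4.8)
\emph{Proof proposal.} The plan is to prove \eqref{eqn:main_result1} by achievability: for each $\epsilon>0$ I will exhibit one policy $\pi\in\Lambda_1(\eta)$ — necessarily the \emph{same} one for all $\theta$, since $\theta$ is unknown — with $\gamma_1(\pi)$ at least $(1-\epsilon)^2$ times the right-hand side, and then let $\epsilon\downarrow0$. Abbreviate $c_\vartheta\triangleq\max_{\bar P_X\in\mathcal{P}_{\mathcal{X}\setminus\{0\}}}\min_{\theta'\neq\vartheta}\frac{\sum_{x\neq0}\bar P_X(x)\D{\nu_\vartheta^x}{\nu_{\theta'}^x}}{\sqrt{\chi_2\!\left(\sum_{x\neq0}\bar P_X(x)q_\vartheta^x\Vert q_\vartheta^0\right)}}$ and fix a maximizer $\bar P_X^{(\vartheta)}$ (the simplex is compact and, because by hypothesis no mixture of non-null actions reproduces $q_\vartheta^0$, the denominator stays bounded away from $0$, so $c_\vartheta<\infty$ and the maximum is attained). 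Set $\bar D_\vartheta\triangleq\min_{\theta'\neq\vartheta}\sum_{x\neq0}\bar P_X^{(\vartheta)}(x)\D{\nu_\vartheta^x}{\nu_{\theta'}^x}$ and $\alpha_\vartheta\triangleq\bigl(2\eta\big/\chi_2\!\left(\sum_{x\neq0}\bar P_X^{(\vartheta)}(x)q_\vartheta^x\Vert q_\vartheta^0\right)\bigr)^{1/2}$, so that $\alpha_\vartheta\bar D_\vartheta=\sqrt{2\eta}\,c_\vartheta$. Since $P_{\textnormal{err},1}^{(n)}$ is a maximum over $\theta$, it suffices to bound, for an arbitrary fixed true hypothesis $\theta$, the probability that the policy does not output $\theta$.

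\emph{The policy, and the error bound.} Use a \emph{diluted certainty-equivalent sequential test}: at step $t$, let $\hat\theta_t$ be the maximum-likelihood estimate built from the non-null observations so far; independently, activate with probability $\tfrac1{\sqrt n}(1-\epsilon)\alpha_{\hat\theta_t}$, playing $X_t\sim\bar P_X^{(\hat\theta_t)}$ if activated and $X_t=0$ otherwise (no forced exploration is needed: by hypothesis every non-null action has $0<\D{\nu_\vartheta^x}{\nu_{\vartheta'}^x}<\infty$ and therefore separates every pair of hypotheses, so the MLE is exponentially consistent whatever actions get played). With $L_t^{(\vartheta,\theta')}\triangleq\sum_{s\leq t:\,X_s\neq0}\log\frac{\nu_\vartheta^{X_s}(Y_s)}{\nu_{\theta'}^{X_s}(Y_s)}$, stop at $\tau\triangleq\inf\{t:\min_{\theta'\neq\hat\theta_t}L_t^{(\hat\theta_t,\theta')}\geq(1-\epsilon)^2\sqrt{2\eta}\,c_{\hat\theta_t}\sqrt n\}$, play null actions for $t>\tau$, and output $\psi=\hat\theta_\tau$; crucially $\tau$ is \emph{not} truncated at $n$, and \eqref{eqn:time_budget_1} is instead checked below. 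For the error bound, note that under $\mathbb{P}_{\mathcal{V}_\theta,\pi}$ and for every $\vartheta\neq\theta$ the process $e^{-L_t^{(\theta,\vartheta)}}=\prod_{s\leq t:\,X_s\neq0}\frac{\nu_\vartheta^{X_s}(Y_s)}{\nu_\theta^{X_s}(Y_s)}$ is a non-negative supermartingale of mean $\leq1$ — and this holds for \emph{any} causal action rule, which is exactly why the adaptivity of the policy does no harm — while $\{\psi=\vartheta\}$ forces $L_\tau^{(\theta,\vartheta)}\leq-(1-\epsilon)^2\sqrt{2\eta}c_\vartheta\sqrt n$, hence $\sup_t e^{-L_t^{(\theta,\vartheta)}}\geq e^{(1-\epsilon)^2\sqrt{2\eta}c_\vartheta\sqrt n}$. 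Ville's maximal inequality and a union bound over $\vartheta$ then give $\mathbb{P}_{\mathcal{V}_\theta,\pi}(\psi\neq\theta)\leq(|\Theta|-1)e^{-(1-\epsilon)^2\sqrt{2\eta}(\min_\vartheta c_\vartheta)\sqrt n}$ for every $\theta$, so $\gamma_1(\pi)\geq(1-\epsilon)^2\sqrt{2\eta}\min_\vartheta c_\vartheta$. Because the test need not terminate by time $n$, there is no ``out of budget, must guess'' contribution, and the exponent is the relative-entropy one of sequential testing rather than a Chernoff information.

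\emph{Verifying that $\pi\in\Lambda_1(\eta)$.} Under $\mathbb{P}_{\mathcal{V}_\theta,\pi}$ the MLE is exponentially consistent, so $\hat\theta_t=\theta$ for all $t$ past a random time with light tail; from then on the compensator of $L_t^{(\theta,\theta')}$ grows by at least $\bar D_\theta$ per active step, the number of active steps by time $m$ concentrates at $(1-\epsilon)\alpha_\theta m/\sqrt n$, and $L_m^{(\theta,\theta')}$ stays within $o(m/\sqrt n)$ of its compensator — each with probability $1-e^{-\Omega(\sqrt n)}$ by Freedman's inequality. Hence $\tau$ crosses the $\theta$-threshold $(1-\epsilon)^2\alpha_\theta\bar D_\theta\sqrt n$ around time $(1-\epsilon)n<n$, so $\mathbb{P}_{\mathcal{V}_\theta,\pi}(\tau>n)\leq e^{-\Omega(\sqrt n)}\to0$, which is \eqref{eqn:time_budget_1}. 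For covertness, the chain rule gives $\D{P_{Z^n;\theta}}{(q_\theta^0)^{\otimes n}}=\sum_{t\leq n}\mathbb{E}\bigl[\D{P_{Z_t|Z^{t-1};\theta}}{q_\theta^0}\bigr]$ with the expectation over $Z^{t-1}\sim P_{Z^{t-1};\theta}$, and each conditional law is a mixture $(1-p_t)q_\theta^0+\sum_{x\neq0}p_t^x q_\theta^x$ with $\sum_x p_t^x\leq(\max_\vartheta\alpha_\vartheta)/\sqrt n$, so a second-order expansion makes the per-step contribution $\tfrac12\chi_2(\,\cdot\,\Vert q_\theta^0)+O(n^{-3/2})$, equal on the typical event $\{\hat\theta_t=\theta,\ t\leq\tau\}$ to $\tfrac{(1-\epsilon)^2\alpha_\theta^2}{2n}\chi_2\!\left(\sum_x\bar P_X^{(\theta)}(x)q_\theta^x\Vert q_\theta^0\right)+O(n^{-3/2})=\tfrac{(1-\epsilon)^2\eta}{n}+O(n^{-3/2})$; summing over the $\approx(1-\epsilon)n$ active steps, with the $O(\sqrt n)$ steps where $\hat\theta_t\neq\theta$ contributing $O(1/n)$ each and the post-$\tau$ steps $0$, gives $\D{P_{Z^n;\theta}}{(q_\theta^0)^{\otimes n}}\leq\eta+o(1)$, i.e. \eqref{eqn:covertness}. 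Letting $\epsilon\downarrow0$ then yields \eqref{eqn:main_result1}.

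\emph{Expected main obstacle.} The error bound itself is short once one commits to an untruncated stopping rule and invokes Ville's inequality. The real difficulty is the joint verification that the \emph{same} adaptive policy (i) halts before time $n$ with probability $\to1$ uniformly in $\theta$ and (ii) keeps $\D{P_{Z^n;\theta}}{(q_\theta^0)^{\otimes n}}$ below $\eta$ rather than merely $O(1)$ — the latter being all that a crude data-processing estimate delivers, and useless here. Making (ii) work forces the second-order $\chi^2$ expansion, a careful treatment of the non-product structure of $P_{Z^n;\theta}$ through the chain rule, and a proof that the learning transient $\{\hat\theta_t\neq\theta\}$ costs only $o(1)$ in divergence and $o(\sqrt n)$ wasted active steps — which is where Freedman-type martingale concentration is used repeatedly, in line with the paper's stated methodology.
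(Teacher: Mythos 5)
Your proposal is correct and follows essentially the same route as the paper's proof: a $\Theta(1/\sqrt{n})$-diluted, MLE-driven action rule with mixing distribution maximizing the divergence-to-$\sqrt{\chi_2}$ ratio, an SPRT-type threshold of order $\sqrt{2\eta n}\,c_\theta$, a Wald/Ville likelihood-ratio supermartingale bound for the error (equivalent to the paper's change-of-measure at the stopping time), Freedman-based control of the MLE transient (the paper's Lemma~\ref{lem:1}) for the stopping-time constraint, and a chain-rule plus second-order $\chi^2$ expansion for the covertness constraint, exactly as in the paper's divergence analysis. The remaining differences are cosmetic: a multiplicative $(1-\epsilon)$ slack in the threshold instead of the paper's additive $\zeta$, and bounding the relative entropy directly under the true stopped policy rather than passing to the dummy non-stopping policy.
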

We illustrate Theorem~\ref{thm:main_result1} with the following example. Let $\mathcal{X}=\{0,1,2\}$, $\mathcal{Y}=\{0,1\}$, and $\Theta=\{a,b,c\}$ so that $\{\nu_{\theta}^x\}_{x\in\mathcal{X},\theta\in\Theta}$ and $\{q_{\theta}^x\}_{x\in\mathcal{X},\theta\in\Theta}$ are sets of Bernoulli distributions, where the parameters of Bernoulli distributions are given in the following tables.

\begin{table}[htp]
\centering
\caption{$\nu_{\theta}^x(1)$ for all $x\in\mathcal{X}$ and $\theta\in\Theta$}
\begin{tabular}{|c|c|c|c|c|}
\hline
\backslashbox{$\theta$}{$x$}  & 0                        & 1                        & 2                                            \\ \hline
a & 0.0                      & 0.9                      & 0.6                             \\ \hline
b & 0.0                      & 0.9                      & 0.9                             \\ \hline
c & 0.0 & 0.6 & 0.9 \\ \hline
\end{tabular}
\label{table:1}
\end{table}

\begin{table}[htp]
\centering
\caption{$q_{\theta}^x(1)$ for all $x\in\mathcal{X}$ and $\theta\in\Theta$}
\begin{tabular}{|c|c|c|c|c|}
\hline
\backslashbox{$\theta$}{$x$}  & 0                        & 1                        & 2                                            \\ \hline
a & 0.0                      & 0.6                      & 0.9                             \\ \hline
b & 0.0                      & 0.6                      & 0.9                             \\ \hline
c & 0.0 & 0.6 & 0.9 \\ \hline
\end{tabular}
\label{table:2}
\end{table}
Note that the distributions given by Table~\ref{table:1} and Table~\ref{table:2} satisfy the assumption that hypotheses cannot be distinguished by the null action $0$ and for all $\theta\in\Theta$, and   there is no distribution $\Bar{P}_X\in\mathcal{P}_{\mathcal{X}\setminus\{0\}}$ such that $\sum_{x\in\mathcal{X}\setminus\{0\}}\Bar{P}_X(x) q_{\theta}^x = q_{\theta}^0$. When there is no covertness constraint, the optimal value of the exponent $\gamma^{\#}_1(\pi) \triangleq \lim_{n\rightarrow \infty}\frac{-\log P_{\textnormal{err},1}^{(n)}(\pi)}{n}$ defined in \cite{Li2020,Nitinawarat2013} is known to be 
\begin{align}
    \min_{\theta\in\Theta}\max_{\Bar{P}_X\in\mathcal{P}_{\mathcal{X}\setminus\{0\}}} \sum_{x\in\mathcal{X}\setminus\{0\}} \Bar{P}_X(x) \min_{\theta'\neq \theta} \mathbb{D}(\nu_{\theta}^x\Vert \nu_{\theta'}^x),
    \label{eqn:optimal_exponent_wo_covert}
\end{align}
where the policy $\pi$ and $P_{\textnormal{err},1}^{(n)}(\pi)$ are defined similarly in Section~\ref{sec:model} but without covertness constraint. Note that one can choose the actions $x=2$ and $x=1$ to distinguish the state $a$ and the state $b$ from others, respectively, but there is no single action that  distinguishes the state $b$ from other states. Specifically, 
the arguments of $\theta$ and $\Bar{P}_X$ that solve the min-max optimization in \eqref{eqn:optimal_exponent_wo_covert} are given by $\theta=b$ and $\Bar{P}_{X}(x) = \frac{1}{2} \mathbf{1}(x=1) + \frac{1}{2} \mathbf{1}(x=2)$. In contrast, with a covertness constraint, the arguments of $\theta$ and $\Bar{P}_X$ that solve the min-max statement in \eqref{eqn:main_result1} are $\theta=b$ and $\Bar{P}_X(x) = 0.67\times \mathbf{1}(x=1) + 0.33\times \mathbf{1}(x=2)$, in which the policy has a higher probability to choose $x=1$. The phenomenon comes from the fact that $x=1$ is the action that makes the output distribution of Willie more closely resemble the distribution generated by the null action, i.e., $\chi_2(q_{\theta}^{1}\Vert q_{\theta}^0) < \chi_2(q_{\theta}^{2}\Vert q_{\theta}^0)$ for all $\theta\in\Theta$. 

We next provide an upper bound on $\gamma_1(\pi)$ when the policy $\pi$ satisfies additional assumptions. Specifically, we assume that for each time $t\in\mathbb{N}^+$, the action $X_t$ is generated from a distribution that is a function of the ML estimate $\hat{\theta}_{\textnormal{ML}}(t-1)\triangleq \argmax_{\theta'\in\Theta} \mathbb{P}_{\mathcal{V}_{\theta'},\pi}(X^{t-1},Y^{t-1})$ and of the stopping decision $S_{t-1}$. Specifically, for each $t\in\mathbb{N}^+$, we assume that 
\begin{align}
    P_{X_t|X^{t-1},Y^{t-1}}(x) &=
    \begin{cases}
        P_{X;\hat{\theta}_{\textnormal{ML}(t-1)}}(x) \quad \textnormal{ if } S_{t-1}\neq \textnormal{stop}\\
      \mathbf{1}(x=0) \quad \textnormal{ if } S_{t-1}= \textnormal{stop}
    \end{cases}
    \label{eqn:control_policy_assumption_1}
\end{align}
for any $x\in\mathcal{X}$ for some $\{P_{X;\theta}\}_{\theta\in\Theta}$ so that 
$P_{X_t|X^{t-1},Y^{t-1}}=P_{X_t|\hat{\theta}_{\textnormal{ML}}(t-1),S_{t-1}}$. 

\begin{theorem}
Let $\Theta$ be the set of parameters that are indistinguishable from another parameter by choosing the null action $0$, i.e. $\mathbb{D}(\nu_{\theta}^0\Vert \nu_{\theta'}^0) = 0$ for all $\theta\neq \theta'$. For all $\theta\in\Theta$, we assume that no distribution $\Bar{P}_X$ over $\mathcal{X}\setminus\{0\}$ is such that $\sum_{x\in \mathcal{X}\setminus\{0\}}\Bar{P}_X(x)q_{\theta}^{x} = q_{\theta}^0$. For all decision making policies $\pi\in\Lambda_1(\eta)$ in which the control policy at each time $t\in\mathbb{N}^+$ satisfies \eqref{eqn:control_policy_assumption_1}, then 
\begin{align}
    \gamma_{1}(\pi) \leq \sqrt{2\eta} \min_{\theta\in\Theta} \max_{\Bar{P}_X\in\mathcal{P}_{\mathcal{X}\setminus\{0\}}}\min_{\theta'\neq\theta}  \frac{\sum_{x\neq 0}\Bar{P}_X(x)\mathbb{D}(\nu_{\theta}^x\Vert \nu_{\theta'}^x)}{\sqrt{\chi_2\left(\sum_{x\neq 0}\Bar{P}_X(x)q_{\theta}^x\Vert q_{\theta}^0\right)}}.
\end{align}
\label{thm:main_result2}
\end{theorem}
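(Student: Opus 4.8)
The plan is a change-of-measure converse in which the covertness constraint enters as a second-order (chi-square) budget on the actions. Fix a policy $\pi\in\Lambda_1(\eta)$ of the form \eqref{eqn:control_policy_assumption_1}, fix $\theta\in\Theta$, and write $N_x(t)\triangleq\sum_{s=1}^t\mathbf{1}(X_s=x)$, $\varepsilon_t\triangleq\mathbb{P}_{\mathcal{V}_\theta,\pi}(X_t\neq0)$, and $\bar{P}_X^{(t)}\in\mathcal{P}_{\mathcal{X}\setminus\{0\}}$ for the law of $X_t$ conditioned on $\{X_t\neq0\}$, all under $\mathbb{P}_{\mathcal{V}_\theta,\pi}$. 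I will produce, for each large $n$, two matched inequalities: an \emph{information budget} $-\log P_{\textnormal{err},1}^{(n)}(\pi)\leq\min_{\theta'\neq\theta}\sum_{x\neq0}\mathbb{E}_{\mathcal{V}_\theta,\pi}[N_x(\tau\wedge n)]\,\mathbb{D}(\nu_\theta^x\Vert\nu_{\theta'}^x)+O(1)$, and a \emph{covertness budget} $\sum_{t\leq n}\varepsilon_t^2\,\chi_2(\sum_{x\neq0}\bar{P}_X^{(t)}(x)q_\theta^x\Vert q_\theta^0)\leq 2\eta+o(1)$; then Cauchy--Schwarz, relating $\sum_{x\neq0}\mathbb{E}_{\mathcal{V}_\theta,\pi}[N_x(\tau\wedge n)]=\sum_{t\leq n}\varepsilon_t$ to $\sqrt{n}\,(\sum_{t\leq n}\varepsilon_t^2)^{1/2}$, produces the $\sqrt{n}$ scaling and the ratio $\mathbb{D}/\sqrt{\chi_2}$ of \eqref{eqn:main_result1}. (When $\sum_{t\leq n}\varepsilon_t$ is only $O(\log n)$ the information budget gives $-\log P_{\textnormal{err},1}^{(n)}(\pi)=O(\log n)=o(\sqrt n)$ directly, so we may assume otherwise.)

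\textbf{The two budgets.} For the information budget, apply the data-processing inequality with the $\mathcal{F}_{\tau\wedge n}$-measurable event $\{\psi=\theta,\tau\leq n\}$ to the measures $\mathbb{P}_{\mathcal{V}_\theta,\pi}$ and $\mathbb{P}_{\mathcal{V}_{\theta'},\pi}$ restricted to $\mathcal{F}_{\tau\wedge n}$; by \eqref{eqn:time_budget_1} the truncation costs $o(1)$ in the two error probabilities, and the elementary fact that the relative entropy $d(1-p\Vert p)$ between the Bernoulli laws of parameters $1-p$ and $p$ is at least $\log(1/p)-O(1)$ for $p\leq1/2$ turns this into a lower bound on $\mathbb{D}(\mathbb{P}^{\mathcal{F}_{\tau\wedge n}}_{\mathcal{V}_\theta,\pi}\Vert\mathbb{P}^{\mathcal{F}_{\tau\wedge n}}_{\mathcal{V}_{\theta'},\pi})$. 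Since the control law in \eqref{eqn:control_policy_assumption_1} depends only on the data, the control factors cancel in the likelihood ratio, and Wald's identity (valid because $\tau\wedge n$ is bounded) evaluates this relative entropy as $\sum_{x\neq0}\mathbb{E}_{\mathcal{V}_\theta,\pi}[N_x(\tau\wedge n)]\,\mathbb{D}(\nu_\theta^x\Vert\nu_{\theta'}^x)$, the null action contributing nothing since $\mathbb{D}(\nu_\theta^0\Vert\nu_{\theta'}^0)=0$. For the covertness budget, the chain rule and convexity of relative entropy give $\eta+o(1)\geq\mathbb{D}(P_{Z^n;\theta}\Vert(q_\theta^0)^{\otimes n})\geq\sum_{t\leq n}\mathbb{D}(P_{Z_t;\theta}\Vert q_\theta^0)$, where $P_{Z_t;\theta}=(1-\varepsilon_t)q_\theta^0+\varepsilon_t\sum_{x\neq0}\bar{P}_X^{(t)}(x)q_\theta^x$. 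The hypothesis that no mixture $\sum_{x\neq0}\bar{P}_X(x)q_\theta^x$ equals $q_\theta^0$, together with Pinsker and compactness of $\mathcal{P}_{\mathcal{X}\setminus\{0\}}$, gives $\mathbb{D}(P_{Z_t;\theta}\Vert q_\theta^0)\geq c\,\varepsilon_t^2$ for some $c>0$; hence $\sum_{t\leq n}\varepsilon_t^2=O(1)$, so all but $O_\delta(1)$ of the $\varepsilon_t$ lie below any prescribed $\delta$, and on those a second-order expansion of relative entropy yields $\mathbb{D}(P_{Z_t;\theta}\Vert q_\theta^0)\geq\tfrac12(1-O(\delta))\,\varepsilon_t^2\,\chi_2(\sum_{x\neq0}\bar{P}_X^{(t)}(x)q_\theta^x\Vert q_\theta^0)$. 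The $O_\delta(1)$ remaining times contribute $o(\sqrt n)$ to every sum below and are discarded; letting $\delta\downarrow0$ afterwards gives the stated covertness budget.

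\textbf{Reduction to a fixed probing distribution, and conclusion.} The crux is to replace the time-varying $\bar{P}_X^{(t)}$ by a single $\bar{P}_X\in\mathcal{P}_{\mathcal{X}\setminus\{0\}}$, which is what forces the bound to be $\max_{\bar{P}_X}\min_{\theta'}$ rather than the larger $\min_{\theta'}\max_{\bar{P}_X}$. Under \eqref{eqn:control_policy_assumption_1} the non-null conditional law at time $t$ is $P_{X;\hat{\theta}_{\textnormal{ML}}(t-1)}$ restricted and renormalized, so it suffices to show that $\hat{\theta}_{\textnormal{ML}}(t-1)=\theta$ with probability $1-O(n^{-2})$ under $\mathbb{P}_{\mathcal{V}_\theta,\pi}$ once the running count $N_{\textnormal{nn}}(t-1)$ of non-null actions exceeds $K_0\triangleq C\log n$. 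This is where Freedman's inequality is used: for each $\theta'\neq\theta$ the log-likelihood-ratio process $\sum_{s\leq t}\log(\nu_\theta^{X_s}(Y_s)/\nu_{\theta'}^{X_s}(Y_s))$ has nonnegative predictable compensator $\sum_{s\leq t}\mathbb{D}(\nu_\theta^{X_s}\Vert\nu_{\theta'}^{X_s})$ and predictable quadratic variation both of order $N_{\textnormal{nn}}(t)$, so on $\{N_{\textnormal{nn}}(t)\geq K_0\}$ Freedman forces the process to be positive except with probability $e^{-\Theta(K_0)}=O(n^{-2})$ for $C$ large. Split $[1,n]$ at the time of the $K_0$-th non-null action: the initial segment carries total non-null mass $K_0=O(\log n)$, hence contributes $o(\sqrt n)$ to both budgets and is discarded; on the remaining segment $\hat{\theta}_{\textnormal{ML}}=\theta$ with overwhelming probability, so the $\varepsilon_t$-weighted average of the $\bar{P}_X^{(t)}$ differs by $o(\sqrt n)$ in aggregate $\ell_1$-mass from $\bar{P}_X\triangleq P_{X;\theta}|_{\mathcal{X}\setminus\{0\}}/P_{X;\theta}(\mathcal{X}\setminus\{0\})$, and likewise the corresponding $\chi_2$ values agree up to $o(1)$ after weighting. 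Thus the budgets become $-\log P_{\textnormal{err},1}^{(n)}(\pi)\leq(\sum_{t\leq n}\varepsilon_t)\min_{\theta'\neq\theta}\sum_{x\neq0}\bar{P}_X(x)\mathbb{D}(\nu_\theta^x\Vert\nu_{\theta'}^x)+o(\sqrt n)$ and $(\sum_{t\leq n}\varepsilon_t^2)\,\chi_2(\sum_{x\neq0}\bar{P}_X(x)q_\theta^x\Vert q_\theta^0)\leq 2\eta+o(1)$, and $(\sum_{t\leq n}\varepsilon_t)^2\leq n\sum_{t\leq n}\varepsilon_t^2$ combines them into $-\log P_{\textnormal{err},1}^{(n)}(\pi)\leq\sqrt{2\eta n}\,\big(\min_{\theta'\neq\theta}\sum_{x\neq0}\bar{P}_X(x)\mathbb{D}(\nu_\theta^x\Vert\nu_{\theta'}^x)\big)/\sqrt{\chi_2(\sum_{x\neq0}\bar{P}_X(x)q_\theta^x\Vert q_\theta^0)}+o(\sqrt n)$. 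Bounding the right side by its maximum over $\bar{P}_X\in\mathcal{P}_{\mathcal{X}\setminus\{0\}}$, dividing by $\sqrt n$, passing to the $\liminf$ as in \eqref{eqn:def_1}, and finally minimizing over the free choice of $\theta$ yields the claimed inequality.

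\textbf{Main obstacle.} The delicate part is the reduction to a fixed probing distribution: one must control the maximum-likelihood estimate along the adaptive, covertness-constrained trajectory — showing it locks onto $\theta$ after a sub-$\sqrt n$ amount of probing and essentially never drifts away — and then show that the atypical events and the residual mismatch between $\bar{P}_X^{(t)}$ and $\bar{P}_X$ perturb the information budget by only $o(\sqrt n)$ while not spoiling the $\chi_2$-weighted covertness budget. This is exactly where the quantitative form of Freedman's inequality on the likelihood-ratio martingale does the work (and, in parallel, concentrates $N_x(\tau\wedge n)$ around $(\sum_{t\leq n}\varepsilon_t)\bar{P}_X(x)$), together with the uniformity over the compact simplex $\mathcal{P}_{\mathcal{X}\setminus\{0\}}$ of the second-order expansion of $\mathbb{D}$ into $\chi_2$; these are the points at which the argument departs from the classical active-hypothesis-testing converse.
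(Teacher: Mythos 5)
Your proposal is correct in substance, but it follows a genuinely different route from the paper. The paper's converse works through the stopping time: a change-of-measure argument shows that a positive exponent $\gamma$ forces the stopped log-likelihood ratio above $\gamma\sqrt{n}$, Freedman's inequality (its Lemma~\ref{lem:2}) converts this into a lower bound $\tau\gtrsim\gamma\sqrt{n}/\bigl(\sum_{x}P_{X;\theta}(x)\mathbb{D}(\nu_\theta^x\Vert\nu_{\theta'}^x)\bigr)$, the time budget \eqref{eqn:time_budget_1} then pins $\sum_{x\neq0}P_{X;\theta}(x)=\Theta(n^{-1/2})$ (Lemmas~\ref{lem:2} and~\ref{lem:3}), and the covertness constraint is lower-bounded via the time-averaged output distribution over those $\tilde{n}_{\theta,\theta'}$ slots, using the ML lock-on Lemma~\ref{lem:1} and the local $\chi_2$ expansion. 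You instead prove two budgets — a data-processing/Wald information budget $-\log P_{\textnormal{err},1}^{(n)}(\pi)\lesssim\min_{\theta'}\sum_{x\neq0}\mathbb{E}[N_x(\tau\wedge n)]\mathbb{D}(\nu_\theta^x\Vert\nu_{\theta'}^x)$ and a per-time-marginal covertness budget $\sum_t\varepsilon_t^2\,\chi_2(\cdot)\leq 2\eta+o(1)$ — and link them by Cauchy--Schwarz over the $n$ slots, which yields the same constant (tight when the probing rate is flat in time, matching the achievability). This is more modular: you never need a stopping-time lower bound nor the fact that the probing rate is $\Theta(n^{-1/2})$, and your Freedman lock-on indexed by the non-null count (rather than by time, as in Lemma~\ref{lem:1}) is arguably cleaner; what the paper's route buys is intermediate structure (stopping-time and rate characterizations) that parallels its bandit converse. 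Both arguments use assumption \eqref{eqn:control_policy_assumption_1} in the same essential place — fixing the probing distribution to $\bar{P}_{X;\theta}$ once the ML estimate locks on, which is what turns the trivially available $\min_{\theta'}\max_{\bar{P}_X}$ bound into the claimed $\max_{\bar{P}_X}\min_{\theta'}$. Two spots of your sketch need to be firmed up, and they are exactly where you flag the difficulty: (i) the uniform-in-$t$ lock-on statement requires a peeling of Freedman over dyadic levels of the non-null count; (ii) replacing the time-varying conditional law $\bar{P}_X^{(t)}$ by $\bar{P}_{X;\theta}$ inside the $\chi_2$-weighted covertness budget is not a plain $\ell_1$ estimate — you need the pathwise fact that at most $K_0=O(\log n)$ non-null pulls occur while the ML estimate is wrong, together with a vanishing truncation level $\delta_n\to0$ on the $\varepsilon_t$ (so that $\sum_t\Vert r_t\Vert_1^2=o(1)$, $r_t$ being the wrong-ML contribution to $\mathbb{P}(X_t=x)$), after which a Minkowski/Cauchy--Schwarz estimate in the $\chi_2$ quadratic form shows the cross terms are negligible. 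With those details supplied, your argument closes and delivers the bound of Theorem~\ref{thm:main_result2}.
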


\subsection{Main Results for Covert Best Arm Identification}
The exponent $\gamma_2^*$ is lower bounded by the following Theorem. 
\begin{theorem}
    Fix any bandit $\mathcal{V}$ in (P2). 
    If there is no distribution $\Bar{P}_X$ over $\mathcal{X}\setminus\{0\}$ such that $\sum_{x\in\mathcal{X}\setminus\{0\}}\Bar{P}_X(x)q_x = q_0$, then 
we have
\begin{align}
    \gamma_{2}^* \geq \sqrt{2\eta}\max_{\Bar{P}_X\in \mathcal{P}_{\mathcal{X}\setminus\{0\}}}\frac{\min_{\nu'\in\mathcal{E}_{\text{Alt}}(\mathcal{V}_{})}\sum_{x\in \mathcal{X}\setminus\{0\}}\Bar{P}_X(x)\mathbb{D}(\nu_{x}\Vert \nu_{x}')}{\sqrt{\chi_2(\sum_{x\in\mathcal{X}\setminus\{0\}}\Bar{P}_X(x)q_x\Vert q_0)}}, 
    \label{eqn:main_result3}
\end{align}
where $\mathcal{E}_{\text{Alt}}(\mathcal{V}) = \{\mathcal{V}'\triangleq \{\nu_x'\}_{x\in\mathcal{X}}\in\mathcal{E}_{\mathcal{N}}: x^*(\mathcal{V})\cap x^{*}(\mathcal{V}')=\emptyset\}$. 
\label{thm:main_result3}
\end{theorem}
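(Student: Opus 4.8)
The statement is an achievability (lower) bound, so the plan is to exhibit, for every $\bar P_X\in\mathcal{P}_{\mathcal{X}\setminus\{0\}}$ and every small $\epsilon>0$, a policy $\pi\in\Lambda_2(\eta)$ with
\[
\gamma_2(\pi)\;\geq\;\sqrt{2\eta}\,(1-\epsilon)\,\frac{\min_{\nu'\in\mathcal{E}_{\text{Alt}}(\mathcal{V})}\sum_{x\neq 0}\bar P_X(x)\mathbb{D}(\nu_x\Vert\nu_x')}{\sqrt{\chi_2\!\left(\sum_{x\neq0}\bar P_X(x)q_x\,\middle\Vert\,q_0\right)}},
\]
and then to let $\epsilon\to0$ and maximize over $\bar P_X$. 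This mirrors the proof of Theorem~\ref{thm:main_result1}, the changes being that $\theta$ is known (no outer $\min_\theta$), the relevant family of confusing instances is $\mathcal{E}_{\text{Alt}}(\mathcal{V})$ rather than the alternative hypotheses, and the fixed-confidence setting forces the horizon to be tuned to $\delta$. The assumption that no $\bar P_X$ satisfies $\sum_{x\neq0}\bar P_X(x)q_x=q_0$ guarantees $0<\chi_2\!\left(\sum_{x\neq0}\bar P_X(x)q_x\Vert q_0\right)<\infty$ for every $\bar P_X$ (finiteness using that the $q_x$ are unit-variance Gaussians), so the right-hand side of \eqref{eqn:main_result3} is finite and attained.

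\textbf{The policy.} Write $\bar q=\sum_{x\neq0}\bar P_X(x)q_x$ and $c=\min_{\nu'\in\mathcal{E}_{\text{Alt}}(\mathcal{V})}\sum_{x\neq0}\bar P_X(x)\mathbb{D}(\nu_x\Vert\nu_x')$. Given the target confidence $\delta$, fix a horizon $n=n(\delta)$ and a budget of effective pulls $w_n=\lfloor(1-\epsilon)\sqrt{2\eta n/\chi_2(\bar q\Vert q_0)}\rfloor$. The control policy draws a uniformly random subset $T\subseteq\intseq{1}{n}$ of size $w_n$ together with a uniformly random placement onto the slots of $T$ of a fixed multiset containing $m_x=\lfloor w_n\bar P_X(x)\rfloor$ copies of each arm $x\neq0$; it plays the assigned arm at each slot of $T$ and the null arm everywhere else in $\intseq{1}{n}$, and stops at time $n$. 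The decision rule $\psi$ is a generalized-likelihood-ratio test matched to $\mathcal{E}_{\text{Alt}}(\mathcal{V})$. Since $\tau\equiv n(\delta)$ almost surely, $\tau_{\textnormal{sup}}^{(\delta)}=n(\delta)$, so it suffices to lower bound $(-\log\delta)/\sqrt{n(\delta)}$.

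\textbf{What must be checked.} (i) \emph{Covertness.} Under the active policy $P_{Z^n}=R\,q_0^{\otimes n}$ with $R(z^n)=\binom{n}{w_n}^{-1}\sum_{T:\,|T|=w_n}\prod_{t\in T}\ell(z_t)$, up to an asymptotically negligible sampling-without-replacement correction, where $\ell=\bar q/q_0$, $\mathbb{E}_{q_0}[\ell]=1$, $\mathbb{E}_{q_0}[\ell^2]=1+\chi_2(\bar q\Vert q_0)$. A second-moment computation gives $\mathbb{E}_{q_0^{\otimes n}}[R^2]=\mathbb{E}[(1+\chi_2(\bar q\Vert q_0))^{|T\cap T'|}]=\exp(\chi_2(\bar q\Vert q_0)\,w_n^2/n)(1+o(1))$ for independent copies $T,T'$, and---since $\log R$ is a sum of many asymptotically vanishing contributions---a central-limit argument then yields $\mathbb{D}(P_{Z^n}\Vert q_0^{\otimes n})=\mathbb{E}_{q_0^{\otimes n}}[R\log R]=\tfrac12\chi_2(\bar q\Vert q_0)\,w_n^2/n\,(1+o(1))\leq(1-\epsilon)^2\eta(1+o(1))$, which is $\leq\eta$ for small $\delta$, so $\pi$ meets \eqref{eqn:covert_constraint_bandit}. (ii) \emph{Confidence.} The per-arm sample counts are the deterministic integers $m_x$, so the observations of arm $x$ are exactly $\mathcal{N}(\mu_x,1/m_x)$; a union bound together with a Chernoff estimate for the GLR statistic and the identity $\min_{\nu'\in\mathcal{E}_{\text{Alt}}(\mathcal{V})}\sum_{x\neq0}m_x\mathbb{D}(\nu_x\Vert\nu_x')=w_n\,c\,(1+o(1))$ gives $P_{\textnormal{err},2}^{(\delta)}(\pi)\leq\textnormal{poly}(w_n)\exp(-w_n c(1+o(1)))$. (iii) \emph{Exponent.} Choosing $n(\delta)$ as the least integer with $w_{n(\delta)}c\geq-(1+o(1))\log\delta$ puts $\pi\in\Lambda_2(\eta)$ for small $\delta$ and forces $n(\delta)=\dfrac{\chi_2(\bar q\Vert q_0)}{2\eta(1-\epsilon)^2}\dfrac{(\log\delta)^2}{c^2}(1+o(1))$, whence $\gamma_2(\pi)\geq\sqrt{2\eta}\,(1-\epsilon)\,c/\sqrt{\chi_2(\bar q\Vert q_0)}$; letting $\epsilon\to0$ and optimizing over $\bar P_X$ proves \eqref{eqn:main_result3}.

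\textbf{Main obstacle.} The delicate step is the covertness estimate. One must show that, in the regime $n\to\infty$ with $w_n^2/n$ held near $2\eta/\chi_2(\bar q\Vert q_0)$, Willie's log-likelihood ratio $\log R$ is asymptotically Gaussian with negligible higher-order cumulants, so that $\mathbb{D}(P_{Z^n}\Vert q_0^{\otimes n})$ equals $\tfrac12\chi_2(\bar q\Vert q_0)\,w_n^2/n$ to first order rather than merely $\Theta(\chi_2(\bar q\Vert q_0)\,w_n^2/n)$---it is exactly this constant that produces the factor $\sqrt{2\eta}$. This requires controlling the log-normal-type upper tail of $\ell=\bar q/q_0$ for Gaussian mixtures and the dependence introduced by sampling the active slots without replacement. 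Freedman's inequality is the natural tool both for this concentration and for absorbing the floor functions and $o(1)$ corrections uniformly in $\delta$, so that the liminf defining $\gamma_2(\pi)$ carries the stated constant.
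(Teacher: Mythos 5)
There is a genuine gap: your construction is not a valid policy for (P2), because it presupposes knowledge of the unknown bandit $\mathcal{V}$. In (P2) the agent knows $\mathcal{Q}$ but \emph{not} $\mathcal{V}$, and the confidence constraint must hold uniformly over the instance (this is exactly how it is used in the converse, where $\delta \geq \mathbb{P}_{\mathcal{V}',\pi}(\psi \neq x^*(\mathcal{V}'))$ is invoked for every alternative $\mathcal{V}'$; otherwise the policy "output $x^*(\mathcal{V})$ immediately" would make $\gamma_2^*$ infinite and the problem vacuous). Your policy hard-codes the sampling proportions $\bar P_X$ (the maximizer for the true $\mathcal{V}$), the constant $c=\min_{\nu'\in\mathcal{E}_{\text{Alt}}(\mathcal{V})}\sum_{x\neq 0}\bar P_X(x)\mathbb{D}(\nu_x\Vert\nu_x')$, and the horizon $n(\delta)$ calibrated through $c$ — all quantities the agent cannot compute. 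A fixed-composition, fixed-horizon GLR test tuned to the true instance also cannot guarantee $P_{\textnormal{err},2}^{(\delta)}\leq\delta$ for bandits whose hardness exceeds that of $\mathcal{V}$. This is precisely why the paper's proof builds an adaptive, track-and-stop-style policy: non-null arms are pulled with probability $\Theta(1/|\log\delta|)$ according to the plug-in proportions $\bar P^{\zeta}_{\hat{\mathcal{V}}(t-1)}$ computed from the estimated bandit, the stopping rule compares the empirical GLR statistic $R_t$ to a threshold $\Gamma_t$ depending only on $\delta$ and the pull counts, and $\delta$-correctness is obtained from time-uniform concentration (the Lemma 33.8/33.9 machinery), while the stopping-time and continuity lemmas (Lemmas 6–8 and Freedman's inequality) show the plug-in proportions converge fast enough that $\tau_{\textnormal{sup}}^{(\delta)}\leq\tau^*_{\zeta}(1+\epsilon')$.

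A secondary, repairable weakness: your covertness estimate for the fixed-composition scheme rests on a second-moment computation of $\mathbb{E}_{q_0^{\otimes n}}[R^2]$ plus an unproven central-limit claim that $\mathbb{E}[R\log R]$ matches $\tfrac12\chi_2(\bar q\Vert q_0)w_n^2/n$ to first order; with unbounded Gaussian likelihood ratios this step needs a real argument. The paper avoids it entirely by bounding the relative entropy per time step via the chain rule, convexity, and the local approximation $\mathbb{D}\left((1-\alpha)q_0+\alpha\bar q\,\middle\Vert\, q_0\right)\approx\tfrac{\alpha^2}{2}\chi_2(\bar q\Vert q_0)$ from \cite[Lemma 1]{Bloch2016}, combined with the concentration of $\hat{\mathcal{V}}(t)$; if you keep a fixed-composition construction you would at least want to randomize the active slots i.i.d.\ so the output is an exact product of mixtures and the same per-letter bound applies. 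But the essential missing idea is the adaptivity needed because $\mathcal{V}$ is unknown.
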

In this work, we assume that $\mathcal{V}$ is a Gaussian bandit with variance $1$. Without loss of generality we assume that $\nu_1$ has the largest mean among all arms, i.e., $\mu(\nu_1)\geq \mu(\nu_x)$ for all $x\neq 1$, then for all $\Bar{P}_X\in\mathcal{P}_{\mathcal{X}\setminus\{0\}}$, 
\begin{align}
    \min_{\nu'\in\mathcal{E}_{\text{Alt}}(\mathcal{V}_{})}\sum_{x\in \mathcal{X}\setminus\{0\}}\Bar{P}_X(x)\mathbb{D}(\nu_{x}||\nu_{x}') = \frac{1}{2}\min_{x\in \mathcal{X}\setminus\{0,1\}} \frac{\Bar{P}_X(x)\Bar{P}_X(1)(\mu(\nu_1)-\mu(\nu_x)^2)}{\Bar{P}_X(1) + \Bar{P}_X(x)} 
    \label{eqn:eqn:avg_divergence_expression}
\end{align}
by \cite[Problem 33.4.(a)]{Lattimore2020}, and 
\begin{align}
    \chi_2\left(\sum_{x\in\mathcal{X}\setminus\{0\}}\Bar{P}_X(x)q_x\middle\Vert q_0\right) = e^{\left(\sum_{x\in\mathcal{X}\setminus\{0\}}\Bar{P}_X(x) \mu(q_x)\right)^2} - 1,
\end{align}
where we have use the fact that $\mu(q_0)=0$. Therefore, \eqref{eqn:main_result3} can be simplified as 
\begin{align}
    \gamma_2^* \geq \frac{\sqrt{2\eta}}{2} \max_{\Bar{P}_X\in \mathcal{P}_{\mathcal{X}\setminus\{0\}}} \frac{\min_{x\in \mathcal{X}\setminus\{0,1\}} \frac{\Bar{P}_X(x)\Bar{P}_X(1)(\mu(\nu_1)-\mu(\nu_x)^2)}{\Bar{P}_X(1) + \Bar{P}_X(x)}}{\sqrt{e^{\left(\sum_{x\in\mathcal{X}\setminus\{0\}}\Bar{P}_X(x) \mu(q_x)\right)^2} - 1}}. 
    \label{eqn:exponent_BAI_simplified}
\end{align}
We provide an example to illustrate how the covertness constraint affects the optimal strategy. 
Let the bandits $\mathcal{V}$ and $\mathcal{Q}$ be as given in Table~\ref{table:3}. It is shown in \cite{Gariver2016,Lattimore2020} that the optimal value of the exponent $\gamma_2^{\#}(\pi) \triangleq \lim_{\delta \rightarrow 0} \frac{ - \log \delta}{\mathbb{E}_{\mathcal{V},\pi}[\tau]}$ in the conventional setting without covertness constraint is given by 
\begin{align}
    \max_{\Bar{P}_X\in \mathcal{P}_{\mathcal{X}\setminus\{0\}}} \inf_{\mathcal{V}'\in\mathcal{V}_{\textnormal{Alt}(\mathcal{V})}} \sum_{x\in\mathcal{X}\setminus\{0\}}\Bar{P}_X(x)\mathbb{D}(\nu_x\Vert \nu_x').
    \label{eqn:exponent_of_bandit_wo_covert}
\end{align}

\begin{table}[htp]
\centering
\caption{Means of Gaussian bandits $\mathcal{V}$ and $\mathcal{Q}$ }
\begin{tabular}{|c|c|c|c|c|}
\hline
   & x = 0                        & x = 1                        & x = 2                                            \\ \hline
$\mu(\nu_x)$ &  0                      & 1                      & 0.5                             \\ \hline
$\mu(q_x)$ & 0                      & 1                      & 0.5                             \\ \hline
\end{tabular}
\label{table:3}
\end{table}

In our example, the distribution $\Bar{P}_X$ that maximizes \eqref{eqn:exponent_of_bandit_wo_covert} is $\Bar{P}_X(x) = \frac{1}{2}\mathbf{1}(x=1) + \frac{1}{2}\mathbf{1}(x=2)$. In contrast, the distribution $\Bar{P}_X$ that maximizes \eqref{eqn:exponent_BAI_simplified} is $\Bar{P}_X(x) = 0.3\times \mathbf{1}(x=1) + 0.7\times \mathbf{1}(x=2)$, incurred by the fact that pulling arm $1$ makes it easier for Willie to detect the existence of a policy. 

We now provide a upper bound on $\gamma_2^*$ when $\pi$ satisfies additional assumptions. 
Fix any confidence constraint $\delta>0$ and assume that 
\begin{align}
    P_{X_t|X^{t-1},Y^{t-1}}(x) =
    \begin{cases}
        P_{\hat{\mathcal{V}}(t-1)}(x) \quad \textnormal{ if  }S_{t-1} = \textnormal{continue}\\
     \mathbf{1}(x=0) \quad \textnormal{ if  }S_{t-1} = \textnormal{stop}
    \end{cases}
    \label{eqn:control_policy_assumption_3}
\end{align}
for all $x\in\mathcal{X}$ and for all $t\in\mathbb{N}^+$, i.e, the distribution of $X_t$ is a function of the estimated bandit $\hat{\mathcal{V}}(t-1)$ and the stopping decision $S_{t-1}$. We also assume that $\{P_{\mathcal{V}'}\}_{\mathcal{V}'}$ is a set of  distributions on $\mathcal{X}$ that are
continuous functions of the bandit, and that the probability of choosing any $x\in\mathcal{X}\setminus\{0\}$ decreases approximately with the same speed with $|\log \delta|$, i.e., there exists some $0<D<\infty$ such that 
\begin{align}
    \left|\frac{P_{\mathcal{V}'}(x)}{P_{\mathcal{V}''}(x)} - 1\right| \leq D||\mathcal{V}'-\mathcal{V}''||_{\infty} \quad \textnormal{ and } \quad \left|\frac{P_{\mathcal{V}'}(x)}{P_{\mathcal{V}'}(x')} - 1\right| \leq D
    \label{eqn:control_policy_assumption_4}
\end{align}
for any $x,x'\in\mathcal{X}\setminus\{0\}$, $\mathcal{V}'$,  $\mathcal{V}''$ and any confidence level $\delta>0$, where for any $\mathcal{V}=\{\nu'_x\}_{x\in\mathcal{X}}$ and  $\mathcal{V}'=\{\nu''_x\}_{x\in\mathcal{X}}$ the infinity norm $\Vert \mathcal{V}'-\mathcal{V}''\Vert_{\infty} $ is defined as 
\begin{align}
    \Vert \mathcal{V}'-\mathcal{V}''\Vert_{\infty} \triangleq \max_{x\in \mathcal{X}} |\mu(\nu_x') - \mu(\nu_x'') |. 
\end{align}
Note that $\{P_{\mathcal{V}'}\}_{\mathcal{V}'}$ depends on the confidence level $\delta$ implicitly, and \eqref{eqn:control_policy_assumption_4} implies that there exists some $\alpha>0$ such that
\begin{align}
    \max_{x\neq 0} P_{\mathcal{V}'}(x) = \Tilde{\Theta}_{\delta\rightarrow 0}(|\log \delta|^{-\alpha}) \quad \textnormal{ and }\quad 
    \min_{x\neq 0} P_{\mathcal{V}'}(x) = \Tilde{\Theta}_{\delta\rightarrow 0}(|\log \delta|^{-\alpha})
    \label{eqn:control_policy_assumption_5}
\end{align}
for all $\mathcal{V}'$. Finally, we also assume that the stopping time $\tau$ of the policy $\pi$ concentrates, i.e., 
\begin{align}
    \lim_{\delta\rightarrow 0} \mathbb{P}_{\mathcal{V},\pi}\left(\left|\tau - \mathbb{E}_{\mathcal{V},\pi}[\tau] \right| \geq \epsilon \mathbb{E}_{\mathcal{V},\pi}[\tau]\right) = 0
    \label{eqn:tau_concentrate}
\end{align}
for all $\mathcal{V}$ and $\epsilon>0$. 
\begin{theorem}
    Fix any bandit $\mathcal{V}$ in (P2). 
    For all decision making policies $\pi\in\Lambda_2(\eta)$ in which the control policy at each time $t\in\mathbb{N}^+$ satisfies the assumption in \eqref{eqn:control_policy_assumption_3}, \eqref{eqn:control_policy_assumption_4}, \eqref{eqn:control_policy_assumption_5} and \eqref{eqn:tau_concentrate}, we have 
\begin{align*}
    \gamma_{2}(\pi) \leq \sqrt{2\eta}\max_{\Bar{P}_X\in\mathcal{X}\setminus\{0\}}\frac{\min_{\mathcal{V}'\in\mathcal{E}_{\text{Alt}}(\mathcal{V}_{})}\sum_{x\in \mathcal{X}\setminus\{0\}}\Bar{P}_X(x)\mathbb{D}(\nu_{x}||\nu_{x}')}{\sqrt{\chi_2(\sum_{x\in\mathcal{X}\setminus\{0\}}\Bar{P}_X(x)q_x||q_0)}}. 
\end{align*}
\label{thm:main_result4}
\end{theorem}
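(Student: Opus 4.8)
The plan is to run the standard change-of-measure (transportation) argument behind best-arm identification lower bounds, with the twist that here it is the covertness constraint \eqref{eqn:covert_constraint_bandit}, rather than a budget on $\mathbb{E}_{\mathcal{V},\pi}[\tau]$, that caps the number of informative (non-null) pulls. Write $n=\tau_{\textnormal{sup}}^{(\delta)}$, let $N_x(\tau)=\sum_{t=1}^{\tau}\mathbf{1}(X_t=x)$, $N_{\neq0}(\tau)=\sum_{x\neq0}N_x(\tau)$, and note $\mathbb{D}(\nu_0\Vert\nu_0')=0$ since any $\mathcal{V}'\in\mathcal{E}_{\textnormal{Alt}}(\mathcal{V})$ may be taken to agree with $\mathcal{V}$ on the (non-optimal) null arm. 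First, for every $\mathcal{V}'\in\mathcal{E}_{\textnormal{Alt}}(\mathcal{V})$, the data-processing inequality applied to $\{\psi(Y^{\tau},X^{\tau})=x^*(\mathcal{V})\}\in\mathcal{F}_{\tau}$, together with the Wald-type identity for the log-likelihood ratio of a stopped bandit trajectory \cite{Gariver2016,Lattimore2020} and the fact that a valid policy (which does not know the bandit) identifies the best arm with probability at least $1-\delta$ under both $\mathbb{P}_{\mathcal{V},\pi}$ and $\mathbb{P}_{\mathcal{V}',\pi}$, gives, via the usual lower bound on the binary relative entropy, $\sum_{x\neq0}\mathbb{E}_{\mathcal{V},\pi}[N_x(\tau)]\,\mathbb{D}(\nu_x\Vert\nu_x')\ge(1-o(1))\log\frac1\delta$ as $\delta\to0$. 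Introducing the time-averaged non-null profile $\bar{P}_X^{(\delta)}(x)\triangleq\mathbb{E}_{\mathcal{V},\pi}[N_x(\tau)]/\mathbb{E}_{\mathcal{V},\pi}[N_{\neq0}(\tau)]$ and minimizing over $\mathcal{V}'$,
\begin{align}
\log\tfrac1\delta\le(1+o(1))\,\mathbb{E}_{\mathcal{V},\pi}[N_{\neq0}(\tau)]\min_{\mathcal{V}'\in\mathcal{E}_{\textnormal{Alt}}(\mathcal{V})}\sum_{x\neq0}\bar{P}_X^{(\delta)}(x)\,\mathbb{D}(\nu_x\Vert\nu_x').\nonumber
\end{align}
In particular $\mathbb{E}_{\mathcal{V},\pi}[N_{\neq0}(\tau)]\to\infty$, and since $\mathbb{E}_{\mathcal{V},\pi}[N_{\neq0}(\tau)]\le\mathbb{E}_{\mathcal{V},\pi}[\tau]$ also $\tau_{\textnormal{sup}}^{(\delta)}\to\infty$.

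Next I would use the covertness constraint to upper bound $\mathbb{E}_{\mathcal{V},\pi}[N_{\neq0}(\tau)]$. By the chain rule, $\mathbb{D}(P_{Z^{n}}\Vert q_0^{\otimes n})=\sum_{t=1}^{n}\mathbb{E}[\mathbb{D}(P_{Z_t\mid Z^{t-1}}\Vert q_0)]$, where $P_{Z_t\mid Z^{t-1}=z^{t-1}}=q_0+\sum_{x\neq0}p_t^{(x)}(z^{t-1})(q_x-q_0)$ with $p_t^{(x)}(z^{t-1})=\mathbb{E}[\mathbf{1}(S_{t-1}=\textnormal{continue})P_{\hat{\mathcal{V}}(t-1)}(x)\mid Z^{t-1}=z^{t-1}]$ and $r_t\triangleq\sum_{x\neq0}p_t^{(x)}$. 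Assumption \eqref{eqn:control_policy_assumption_5} makes $r_t\le\Tilde{O}_{\delta\to0}(|\log\delta|^{-\alpha})\to0$ uniformly, so a second-order expansion of the relative entropy of a near-$q_0$ mixture yields $\mathbb{D}(P_{Z_t\mid Z^{t-1}}\Vert q_0)=(1+o(1))\tfrac12 r_t^2\,\chi_2(\bar{q}_t\Vert q_0)$ uniformly, with $\bar{q}_t\triangleq\tfrac1{r_t}\sum_{x\neq0}p_t^{(x)}q_x$ (finiteness of $\chi_2(q_x\Vert q_0)=e^{\mu(q_x)^2}-1$ for unit-variance Gaussians with bounded means controls the remainder). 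Now \eqref{eqn:control_policy_assumption_3}--\eqref{eqn:control_policy_assumption_5} force every non-null arm to be pulled $\omega(1)$ times, so the estimator obeys $\Vert\hat{\mathcal{V}}(t-1)-\mathcal{V}\Vert_{\infty}\to0$ in probability for $t\ge\epsilon n$; combined with the Lipschitz property in \eqref{eqn:control_policy_assumption_4}, the conditional quantities collapse to deterministic limits: $\bar{q}_t\to\bar{q}_{\mathcal{V}}\triangleq\tfrac1\rho\sum_{x\neq0}P_{\mathcal{V}}(x)q_x$, $r_t\to\rho\,\mathbf{1}(t\le\tau)$ with $\rho\triangleq\sum_{x\neq0}P_{\mathcal{V}}(x)$, and $\bar{P}_X^{(\delta)}\to\tfrac1\rho P_{\mathcal{V}}|_{\mathcal{X}\setminus\{0\}}$, hence $\chi_2(\sum_{x\neq0}\bar{P}_X^{(\delta)}(x)q_x\Vert q_0)\to\chi_2(\bar{q}_{\mathcal{V}}\Vert q_0)$. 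The first $\epsilon n$ steps and the overshoot $\{\tau>n\}$ contribute only an $O(\epsilon)$ fraction, using \eqref{eqn:tau_concentrate} to obtain $\tau_{\textnormal{sup}}^{(\delta)}=(1+o(1))\mathbb{E}_{\mathcal{V},\pi}[\tau]$ and to bound $\mathbb{E}_{\mathcal{V},\pi}[(\tau-n)^+]$. Putting these together, $\eta\ge(1-o(1))\mathbb{D}(P_{Z^{n}}\Vert q_0^{\otimes n})\ge(1-o(1))\tfrac12\chi_2(\bar{q}_{\mathcal{V}}\Vert q_0)\sum_{t=1}^{n}\mathbb{E}[r_t^2]$, and since $r_t$ concentrates on $\rho\,\mathbf{1}(t\le\tau)$ with $\tau$ concentrated around $\mathbb{E}_{\mathcal{V},\pi}[\tau]$, both sums collapse so that $\sum_{t=1}^{n}\mathbb{E}[r_t^2]=(1+o(1))\rho^2\mathbb{E}_{\mathcal{V},\pi}[\tau]=(1+o(1))(\mathbb{E}_{\mathcal{V},\pi}[N_{\neq0}(\tau)])^2/n$. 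Therefore
\begin{align}
\mathbb{E}_{\mathcal{V},\pi}[N_{\neq0}(\tau)]\le(1+o(1))\sqrt{\frac{2\eta\,\tau_{\textnormal{sup}}^{(\delta)}}{\chi_2\!\left(\sum_{x\neq0}\bar{P}_X^{(\delta)}(x)q_x\,\middle\Vert\,q_0\right)}}.\nonumber
\end{align}

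Finally I would combine the two displays: substituting the bound on $\mathbb{E}_{\mathcal{V},\pi}[N_{\neq0}(\tau)]$ into the first inequality, bounding $\min_{\mathcal{V}'}\sum\bar{P}_X^{(\delta)}\mathbb{D}/\sqrt{\chi_2(\cdot)}$ by $\max_{\bar{P}_X\in\mathcal{P}_{\mathcal{X}\setminus\{0\}}}\min_{\mathcal{V}'}\sum\bar{P}_X\mathbb{D}/\sqrt{\chi_2(\cdot)}$, dividing by $\sqrt{\tau_{\textnormal{sup}}^{(\delta)}}$, and taking $\liminf_{\delta\to0}$ gives exactly the claimed bound on $\gamma_2(\pi)$.

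The main obstacle is the ``profile collapse'' in the second step: one must show that, despite the covertness-imposed slow, $\Tilde{\Theta}(|\log\delta|^{-\alpha})$-rate exploration, $\hat{\mathcal{V}}(t-1)$ is accurate enough over a $1-o(1)$ fraction of the horizon for the conditional Willie distributions $\bar{q}_t$ and the averaged profile $\bar{P}_X^{(\delta)}$ to converge to the \emph{same} limiting object, so that the $\chi_2$ emerging from the covertness expansion matches the $\chi_2$ in the denominator of the target bound. This requires (i) a uniform second-order expansion of $\mathbb{D}(\cdot\Vert q_0)$ for near-$q_0$ Gaussian mixtures, delicate because Gaussian likelihood ratios are unbounded; (ii) concentration of the per-arm pull counts and of $\tau$ around $\mathbb{E}_{\mathcal{V},\pi}[\tau]$, which I expect to establish via Freedman's inequality as elsewhere in the paper; and (iii) care with the conditioning on $Z^{t-1}$, which is not independent of $\hat{\mathcal{V}}(t-1)$ but for which it suffices that the conditional law of $\hat{\mathcal{V}}(t-1)$ given $Z^{t-1}$ concentrates, itself a consequence of the marginal concentration.
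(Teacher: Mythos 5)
Your proposal is correct and follows essentially the same route as the paper's proof: a change-of-measure bound relating $\log(1/\delta)$ to $\sum_{x\neq 0}\mathbb{E}_{\mathcal{V},\pi}[T_x(\tau)]\mathbb{D}(\nu_x\Vert\nu_x')$, the structural assumptions \eqref{eqn:control_policy_assumption_3}--\eqref{eqn:control_policy_assumption_5} plus Freedman-based concentration of $\hat{\mathcal{V}}(t)$ to collapse the pull profile onto the policy's sampling distribution, the covertness constraint expanded to second order to cap the non-null pulling rate at $O(\sqrt{2\eta\,\tau_{\textnormal{sup}}^{(\delta)}/\chi_2})$, and \eqref{eqn:tau_concentrate} to pass between $\mathbb{E}_{\mathcal{V},\pi}[\tau]$ and $\tau_{\textnormal{sup}}^{(\delta)}$. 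The only differences are cosmetic choices of standard tools (the transportation-lemma form instead of Bretagnolle--Huber, and a per-step chain-rule expansion with a Jensen/Cauchy--Schwarz collapse instead of the paper's time-averaged output distribution combined with \cite[Lemma 1]{Bloch2016}), and the ``profile collapse'' you flag as the main obstacle is exactly the step the paper resolves with the events $\mathcal{G}_t$, the $\epsilon$-ball worst case $\mathcal{V}_{\textnormal{min}}$, and continuity of the relative entropy and chi-square terms.
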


\section{Mathematical Tools}
\label{sec:math_tool}

\begin{lemma}[Bretagnolle–Huber's Inequality \cite{Bretagnolle}]
Let $P$ and $P'$ be two different probability measure on a common sigma algebra $\Sigma$ and the sample space $\Omega$. Let $\mathcal{E}\in\Sigma$, then it holds that 
\begin{align}
    P(\mathcal{E}^{c}) + P'(\mathcal{E}) \geq \frac{1}{2}\exp\left(-\mathbb{D}(P\Vert P')\right),
\end{align}
where $\mathcal{E}^{c}=\Omega\setminus\mathcal{E}$.
\end{lemma}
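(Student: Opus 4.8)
The plan is to prove the inequality by reducing it to a lower bound on the ``overlap'' $\int \min(p,p')\,\dd\mu$ of the two densities, and then to control that overlap through the Bhattacharyya coefficient. First I would dispose of the trivial case: if $P\not\ll P'$ then $\mathbb{D}(P\Vert P')=+\infty$ and the right-hand side vanishes, so assume $P\ll P'$ and fix densities $p,p'$ of $P,P'$ with respect to a common dominating measure $\mu$ (for instance $\mu=\tfrac12(P+P')$). For any event $\mathcal{E}\in\Sigma$, bounding each integrand below by $\min(p,p')$ on the relevant piece gives
\begin{align*}
    P(\mathcal{E}^{c}) + P'(\mathcal{E}) = \int_{\mathcal{E}^{c}} p\,\dd\mu + \int_{\mathcal{E}} p'\,\dd\mu \;\geq\; \int_{\mathcal{E}^{c}}\min(p,p')\,\dd\mu + \int_{\mathcal{E}}\min(p,p')\,\dd\mu = \int \min(p,p')\,\dd\mu,
\end{align*}
so it suffices to establish $\int \min(p,p')\,\dd\mu \geq \tfrac12\exp\left(-\mathbb{D}(P\Vert P')\right)$, a statement that no longer mentions $\mathcal{E}$.

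Next I would factor $\sqrt{pp'}=\sqrt{\min(p,p')}\cdot\sqrt{\max(p,p')}$, apply the Cauchy--Schwarz inequality, and use the identity $\int\min(p,p')\,\dd\mu+\int\max(p,p')\,\dd\mu=\int(p+p')\,\dd\mu=2$ to get
\begin{align*}
    \left(\int\sqrt{pp'}\,\dd\mu\right)^{2} \;\leq\; \left(\int\min(p,p')\,\dd\mu\right)\left(\int\max(p,p')\,\dd\mu\right) \;\leq\; 2\int\min(p,p')\,\dd\mu .
\end{align*}
Then, since $\exp$ is convex, Jensen's inequality under $P$ yields a lower bound on the Bhattacharyya coefficient:
\begin{align*}
    \int\sqrt{pp'}\,\dd\mu = \mathbb{E}_P\!\left[\exp\!\left(\tfrac12\log\tfrac{p'}{p}\right)\right] \;\geq\; \exp\!\left(\tfrac12\,\mathbb{E}_P\!\left[\log\tfrac{p'}{p}\right]\right) = \exp\!\left(-\tfrac12\,\mathbb{D}(P\Vert P')\right).
\end{align*}
Combining the last two displays gives $2\int\min(p,p')\,\dd\mu \geq \exp(-\mathbb{D}(P\Vert P'))$, which together with the first reduction proves the lemma.

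An alternative, essentially equivalent, route is to first prove the Bretagnolle--Huber bound in its total-variation form $\mathbb{V}(P\Vert P')\leq\sqrt{1-\exp(-\mathbb{D}(P\Vert P'))}$ and then combine $P(\mathcal{E}^{c})+P'(\mathcal{E})\geq 1-\mathbb{V}(P\Vert P')$ with the elementary estimate $1-\sqrt{1-x}\geq x/2$ valid for $x\in[0,1]$; this repackages the same computation. I do not expect a genuine obstacle, as the inequality is classical; the only point requiring a little care is the measure-theoretic bookkeeping (the choice of dominating measure and the treatment of the set $\{p'=0\}$, where the conventions $0\log 0=0$ and $\sqrt{0}=0$ must be used consistently), and the one genuine idea is the Cauchy--Schwarz split of $\sqrt{pp'}$ paired with the Jensen lower bound on the Bhattacharyya coefficient.
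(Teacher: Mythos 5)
Your proof is correct: the reduction of $P(\mathcal{E}^{c})+P'(\mathcal{E})$ to the overlap $\int\min(p,p')\,\dd\mu$, the Cauchy--Schwarz split giving $\bigl(\int\sqrt{pp'}\,\dd\mu\bigr)^{2}\leq 2\int\min(p,p')\,\dd\mu$, and the Jensen bound $\int\sqrt{pp'}\,\dd\mu\geq\exp\bigl(-\tfrac{1}{2}\mathbb{D}(P\Vert P')\bigr)$ are all valid, and the handling of the case $P\not\ll P'$ is the right disposal of the degenerate situation. The paper itself does not prove this lemma---it is stated in the Mathematical Tools section purely as a cited classical result---so there is no in-paper argument to compare against; your write-up is the standard proof of Bretagnolle--Huber and would serve as a self-contained justification.
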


\begin{lemma}[Relative Entropy Decomposition Lemma \cite{Lattimore2020}] Let $\mathcal{V}=\{\nu_x\}_{x\in\mathcal{X}}$ and $\mathcal{V}'=\{v_x'\}_{x\in\mathcal{X}}$ be two sets of distributions on $\mathcal{Y}$. Given some policy $\pi$, for each $t\in\mathbb{N}$, the distributions of $X_t$ and $Y_t$ follow the structure defined in Section~\ref{sec:model}. If we define $\mathcal{V}_{\pi}$ and $\mathcal{V}_{\pi}'$ as the distributions on $\mathcal{X}^{\infty}\times \mathcal{Y}^{\infty}$ under the policy $\pi$ when the set of distributions on $\mathcal{Y}$ is given by $\mathcal{V}$ and $\mathcal{V}'$, respectively, then 
\begin{align}
    \mathbb{D}(\mathcal{V}_{\pi}\Vert \mathcal{V}'_{\pi}) = \sum_{i\in\mathcal{X}}\mathbb{E}[T_x(\tau)]\mathbb{D}(\nu_x\Vert \nu_x'),
\end{align}
where $T_x(\tau) = \sum_{t=1}^{\infty}\mathbf{1}(X_t=x,S_{t-1}\neq \textnormal{stop})$. 

\end{lemma}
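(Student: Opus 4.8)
The plan is to prove this decomposition identity by the classical divergence-decomposition argument — the chain rule for relative entropy applied one step at a time — the only wrinkle being the idle phase created by the stopping rule. First I would pass from the infinite horizon to finite prefixes: since the $\sigma$-algebras $\sigma(X^n,Y^n)$ increase to the full product $\sigma$-algebra on $\mathcal{X}^\infty\times\mathcal{Y}^\infty$, the monotone-convergence property of relative entropy gives
\begin{align}
\mathbb{D}(\mathcal{V}_{\pi}\Vert\mathcal{V}'_{\pi})=\lim_{n\to\infty}\mathbb{D}(P^{\mathcal{V}}_{X^nY^n}\Vert P^{\mathcal{V}'}_{X^nY^n}),
\end{align}
where $P^{\mathcal{V}}_{X^nY^n}$ and $P^{\mathcal{V}'}_{X^nY^n}$ denote the marginals on $\mathcal{X}^n\times\mathcal{Y}^n$ of $\mathcal{V}_\pi$ and $\mathcal{V}'_\pi$. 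If $\nu_x\not\ll\nu_x'$ for some arm $x$ that is played with positive probability before stopping, both sides of the claimed identity are $+\infty$ and there is nothing to prove, so I assume $\nu_x\ll\nu_x'$ for all relevant $x$ throughout.

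Second, I would apply the chain rule to each finite prefix, peeling off $X_1,Y_1,X_2,Y_2,\dots$ in order,
\begin{align}
\mathbb{D}(P^{\mathcal{V}}_{X^nY^n}\Vert P^{\mathcal{V}'}_{X^nY^n})=\sum_{t=1}^n\mathbb{E}\!\left[\mathbb{D}(P_{X_tY_t\mid X^{t-1},Y^{t-1}}\Vert P'_{X_tY_t\mid X^{t-1},Y^{t-1}})\right],
\end{align}
the outer expectation being over the history $(X^{t-1},Y^{t-1})$ drawn under $P^{\mathcal{V}}$. The key structural point is that the control kernel $P_{X_t\mid X^{t-1},Y^{t-1}}$ belongs to the policy $\pi$ alone and is therefore identical under $\mathcal{V}$ and $\mathcal{V}'$; splitting the pair $(X_t,Y_t)$ once more by the chain rule, its $X_t$-part contributes $\mathbb{D}(P_{X_t\mid\cdot}\Vert P_{X_t\mid\cdot})=0$, and since $Y_t$ conditioned on $X_t=x$ is distributed as $\nu_x$ under $\mathcal{V}$ and as $\nu_x'$ under $\mathcal{V}'$, the per-step term equals
\begin{align}
\mathbb{D}(P_{X_tY_t\mid x^{t-1},y^{t-1}}\Vert P'_{X_tY_t\mid x^{t-1},y^{t-1}})=\sum_{x\in\mathcal{X}}P_{X_t\mid x^{t-1},y^{t-1}}(x)\,\mathbb{D}(\nu_x\Vert\nu_x').
\end{align}
On a history that has already triggered stopping, $P_{X_t\mid\cdot}=\mathbf{1}(X_t=0)$ and this collapses to $\mathbb{D}(\nu_0\Vert\nu_0')=0$ — the null arm being common to all bandits under consideration (equivalently, $\mathbb{D}(\nu_{\theta}^0\Vert\nu_{\theta'}^0)=0$ by the standing model assumption) — so the post-stopping steps contribute nothing, and I may insert the indicator $\mathbf{1}(S_{t-1}\neq\textnormal{stop})$ into the $t$-th summand at no cost.

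Finally I would substitute this back, let $n\to\infty$, and exchange the doubly infinite non-negative sum with the expectation by Tonelli's theorem:
\begin{align}
\mathbb{D}(\mathcal{V}_{\pi}\Vert\mathcal{V}'_{\pi})=\sum_{t=1}^\infty\sum_{x\in\mathcal{X}}\mathbb{E}\!\left[\mathbf{1}(S_{t-1}\neq\textnormal{stop})\,P_{X_t\mid X^{t-1},Y^{t-1}}(x)\right]\mathbb{D}(\nu_x\Vert\nu_x')=\sum_{x\in\mathcal{X}}\mathbb{E}[T_x(\tau)]\,\mathbb{D}(\nu_x\Vert\nu_x'),
\end{align}
using $\mathbb{E}[\mathbf{1}(S_{t-1}\neq\textnormal{stop})P_{X_t\mid X^{t-1},Y^{t-1}}(x)]=\mathbb{P}(X_t=x,S_{t-1}\neq\textnormal{stop})$ in the last step together with the definition $T_x(\tau)=\sum_{t\geq1}\mathbf{1}(X_t=x,S_{t-1}\neq\textnormal{stop})$; the manipulation is valid in $[0,+\infty]$ even if $\tau=\infty$ with positive probability. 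The part I expect to require the most care is precisely this measure-theoretic bookkeeping — justifying the prefix-limit reduction cleanly and verifying that the idle-phase steps truly vanish, which hinges on $\nu_0=\nu_0'$, a feature of the Gaussian-bandit class $\mathcal{E}_{\mathcal{N}}$ with a fixed null arm. Everything else is the textbook divergence-decomposition computation.
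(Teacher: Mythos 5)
Your proof is correct and is essentially the standard divergence-decomposition argument that the paper relies on: the paper gives no in-text proof and simply cites \cite{Lattimore2020}, whose Lemma 15.1 proof is exactly your chain-rule expansion with the common policy kernel cancelling, adapted here by noting that post-stopping steps contribute nothing because the null-arm distributions coincide ($\mathbb{D}(\nu_0\Vert\nu_0')=0$ by the model assumptions). Your added care about absolute continuity, the prefix-limit reduction, and the Tonelli exchange fills in the measure-theoretic details consistently with that cited argument, so there is nothing methodologically different to flag.
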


\begin{lemma}[Bernstein's Inequality \cite{Devroye2001}] 
    Let $U_1, \cdots, U_n$ be independent zero mean random variables. Suppose that $|U_i|\leq M$ almost surely for all $i\in[1;n]$. Then, for any $\mu >0$, it holds that 
    \begin{align}
        \mathbb{P}\left(\sum_{i=1}^n U_i  \geq \mu \right) \leq \exp\left(-\frac{\frac{1}{2}\mu^2}{\sum_{i=1}^n \mathbb{E}[U_i^2]+\frac{1}{3}M\mu}\right). 
    \end{align}
\end{lemma}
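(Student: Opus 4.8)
The plan is to use the Cram\'er--Chernoff exponential moment method together with a moment generating function bound tailored to bounded, zero-mean summands. First I would fix $\lambda>0$ and apply Markov's inequality to the nonnegative random variable $\exp(\lambda\sum_{i=1}^n U_i)$, obtaining
\begin{align}
    \mathbb{P}\left(\sum_{i=1}^n U_i \geq \mu\right) \leq e^{-\lambda\mu}\,\mathbb{E}\left[e^{\lambda\sum_{i=1}^n U_i}\right] = e^{-\lambda\mu}\prod_{i=1}^n \mathbb{E}\left[e^{\lambda U_i}\right],
\end{align}
where the last equality uses the independence of the $U_i$. Thus everything reduces to controlling each factor $\mathbb{E}[e^{\lambda U_i}]$.

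For the moment generating function bound, I would expand the exponential as a power series and use the zero-mean assumption to kill the linear term:
\begin{align}
    \mathbb{E}\left[e^{\lambda U_i}\right] = 1 + \lambda\,\mathbb{E}[U_i] + \sum_{k\geq 2}\frac{\lambda^k\,\mathbb{E}[U_i^k]}{k!} = 1 + \sum_{k\geq 2}\frac{\lambda^k\,\mathbb{E}[U_i^k]}{k!}.
\end{align}
Since $|U_i|\leq M$ almost surely, for every $k\geq 2$ we have $|\mathbb{E}[U_i^k]|\leq M^{k-2}\mathbb{E}[U_i^2]$, hence $\mathbb{E}[e^{\lambda U_i}] \leq 1 + M^{-2}\mathbb{E}[U_i^2]\,(e^{\lambda M}-1-\lambda M)$. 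Applying $1+x\leq e^x$ and multiplying over $i$, with $v\triangleq\sum_{i=1}^n\mathbb{E}[U_i^2]$, gives
\begin{align}
    \mathbb{P}\left(\sum_{i=1}^n U_i \geq \mu\right) \leq \exp\left(-\lambda\mu + \frac{v}{M^2}\left(e^{\lambda M}-1-\lambda M\right)\right).
\end{align}

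The remaining step is to choose $\lambda$ well and convert this Bennett-type exponent into the Bernstein form with the explicit $\tfrac13$ constant. I would use the elementary inequality $e^t-1-t \leq \frac{t^2/2}{1-t/3}$, valid for $0\leq t<3$ (proved by comparing power-series coefficients, since $2\cdot 3^{k-2}\leq k!$ gives $t^k/k! \leq (t^2/2)(t/3)^{k-2}$ for $k\geq 2$), applied with $t=\lambda M$, which is legitimate for $0<\lambda<3/M$. This yields $\exp\!\big(-\lambda\mu + \tfrac{\lambda^2 v/2}{1-\lambda M/3}\big)$, and then the choice $\lambda = \mu/(v + M\mu/3)$ — which indeed satisfies $\lambda M/3<1$ when $v>0$ (the case $v=0$ being trivial) — makes $1-\lambda M/3 = v/(v+M\mu/3)$, so $\tfrac{\lambda^2 v/2}{1-\lambda M/3}=\tfrac12\lambda\mu$ and the exponent collapses to $-\tfrac12\lambda\mu = -\tfrac12\mu^2/(v+\tfrac13 M\mu)$, exactly as claimed.

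The main obstacle is obtaining the sharp $\tfrac13 M\mu$ term rather than a weaker variance-free constant; this hinges entirely on the elementary bound $e^t-1-t\leq\tfrac{t^2/2}{1-t/3}$ (equivalently on the Bennett function estimate $h(u)\geq u^2/(2+2u/3)$ for $h(u)=(1+u)\log(1+u)-u$) and on pairing it with the matching closed-form minimizer $\lambda=\mu/(v+M\mu/3)$. Everything else — Markov's inequality, factorization over independent terms, and the series bound on $\mathbb{E}[U_i^k]$ — is routine. An alternative is to perform the exact minimization $\lambda=\tfrac1M\log(1+\tfrac{M\mu}{v})$ first, obtaining Bennett's inequality $\exp\!\big(-\tfrac{v}{M^2}h(\tfrac{M\mu}{v})\big)$, and then invoke $h(u)\geq u^2/(2+2u/3)$; I would present the direct route above since it avoids the logarithm.
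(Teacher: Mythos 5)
Your proof is correct: the Cram\'er--Chernoff reduction, the moment-comparison bound $|\mathbb{E}[U_i^k]|\leq M^{k-2}\mathbb{E}[U_i^2]$, the elementary estimate $e^t-1-t\leq \tfrac{t^2/2}{1-t/3}$ (via $2\cdot 3^{k-2}\leq k!$), and the closed-form choice $\lambda=\mu/(v+M\mu/3)$ all check out and yield exactly the stated exponent, with the degenerate case $v=0$ handled correctly. Note that the paper itself gives no proof of this lemma --- it is stated as a standard tool and cited to the literature --- so there is nothing to compare against; your argument is the standard textbook derivation of Bernstein's inequality and is complete as written.
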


\begin{lemma}[Freedman's Inequality \cite{Freedman1975,Joel2011}]
    {\it 
    Consider a real valued martingales $\{V_i\}_{i=0}^{\infty}$ adapted to the filtration $\{\mathcal{F}_i\}_{i=0}^{\infty}$ with difference sequence $\{U_i\}_{i=0}^{\infty}$. Assume that the difference sequence is uniformly bounded, i.e., 
    \begin{align*}
        U_i \leq C \quad \textnormal{ \it for all }i. 
    \end{align*}
    For any $k\in\mathbb{N}$, we define the predictable quadratic variation process of the martingale as 
    \begin{align*}
        W_k = \sum_{i=1}^k \mathbb{E}[U_i^2|\mathcal{F}_{i-1}].
    \end{align*}
    Then, for all $\delta>0$ and $\sigma^2>0$, 
    \begin{align*}
        \mathbb{P}\left(\exists k\geq 0: V_k \geq \delta \textnormal{ and }W_k\leq \sigma^2\right) \leq \exp\left(-\frac{\delta^2}{\sigma^2+C\delta/3}\right). 
    \end{align*}
    }
    \label{lem:Freedman}
\end{lemma}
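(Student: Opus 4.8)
The plan is to prove Freedman's inequality by the exponential supermartingale (Chernoff) method, which is the martingale analogue of the proof of Bernstein's inequality stated above: the product of independent moment generating functions is replaced by a single multiplicative supermartingale. Fix $\theta>0$. The only purely deterministic ingredient is the pointwise estimate $e^{\theta x}\leq 1+\theta x+g(\theta)x^2$, valid for every $x\leq C$, where $g(\theta)\triangleq(e^{\theta C}-1-\theta C)/C^2$; this holds because $y\mapsto(e^{y}-1-y)/y^2$ is nondecreasing on $\mathbb{R}$, so its value at $y=\theta x$ does not exceed its value at $y=\theta C$. Applying this with $x=U_i$, taking $\mathbb{E}[\cdot\mid\mathcal{F}_{i-1}]$, using $\mathbb{E}[U_i\mid\mathcal{F}_{i-1}]=0$, and then $1+t\leq e^{t}$, gives the conditional moment bound
\begin{align}
    \mathbb{E}\!\left[e^{\theta U_i}\mid\mathcal{F}_{i-1}\right]\leq\exp\!\left(g(\theta)\,\mathbb{E}[U_i^2\mid\mathcal{F}_{i-1}]\right);
\end{align}
on the event where $\mathbb{E}[U_i^2\mid\mathcal{F}_{i-1}]=\infty$ the target bound is vacuous, so finiteness may be assumed.

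Next I would build the supermartingale. Assuming without loss of generality that $V_0=0$ and $W_0=0$, set $L_k\triangleq\exp(\theta V_k-g(\theta)W_k)$, so $L_0=1$ and $L_k\geq 0$. Since $W_k$ is $\mathcal{F}_{k-1}$-measurable (each summand $\mathbb{E}[U_i^2\mid\mathcal{F}_{i-1}]$ is) and $U_k=V_k-V_{k-1}$, the conditional moment bound yields
\begin{align}
    \mathbb{E}[L_k\mid\mathcal{F}_{k-1}]=L_{k-1}\,e^{-g(\theta)(W_k-W_{k-1})}\,\mathbb{E}\!\left[e^{\theta U_k}\mid\mathcal{F}_{k-1}\right]\leq L_{k-1}.
\end{align}
Thus $\{L_k\}$ is a nonnegative supermartingale with $\mathbb{E}[L_0]=1$.

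I would then convert this into the uniform-in-$k$ statement via optional stopping. Let $\tau\triangleq\inf\{k\geq 0:V_k\geq\delta\text{ and }W_k\leq\sigma^2\}$; the event to be bounded is exactly $\{\tau<\infty\}$, and $\tau$ is a stopping time because $W_k$ is $\mathcal{F}_{k-1}$- hence $\mathcal{F}_k$-measurable, so $\{V_k\geq\delta,\,W_k\leq\sigma^2\}\in\mathcal{F}_k$. The stopped process $L_{\tau\wedge k}$ is again a nonnegative supermartingale, so $\mathbb{E}[L_{\tau\wedge k}]\leq 1$ for every $k$. On $\{\tau\leq k\}$ we have $L_{\tau\wedge k}=L_\tau=\exp(\theta V_\tau-g(\theta)W_\tau)\geq\exp(\theta\delta-g(\theta)\sigma^2)$, using $\theta>0$, $g(\theta)>0$, $V_\tau\geq\delta$, and $W_\tau\leq\sigma^2$. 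Hence $\exp(\theta\delta-g(\theta)\sigma^2)\,\mathbb{P}(\tau\leq k)\leq 1$, and letting $k\to\infty$,
\begin{align}
    \mathbb{P}(\tau<\infty)\leq\exp\!\left(-\theta\delta+g(\theta)\sigma^2\right).
\end{align}

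Finally I would optimize over $\theta$. The elementary bound $e^{y}-1-y\leq\frac{y^2/2}{1-y/3}$ for $0\leq y<3$ (compare the power-series coefficients) gives $g(\theta)\leq\frac{\theta^2/2}{1-C\theta/3}$ for $0<\theta<3/C$; substituting this and then taking $\theta=\delta/(\sigma^2+C\delta/3)\in(0,3/C)$ makes the exponent collapse to the advertised form, finishing the argument. I expect the main points requiring care to be bookkeeping rather than conceptual: confirming that $\tau$ is genuinely a stopping time (this rests on the predictability of $W_k$), justifying the limit $k\to\infty$ through the stopped process instead of invoking optional stopping at the possibly infinite time $\tau$ directly, and disposing of the degenerate cases ($\mathbb{E}[U_i^2\mid\mathcal{F}_{i-1}]$ infinite, $V_0\neq 0$) — each routine once isolated.
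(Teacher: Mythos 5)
The paper offers no proof of this lemma---it is quoted from the cited references (Freedman 1975, Tropp 2011)---and your argument is exactly the standard route used there: the one-sided bound $e^{\theta x}\leq 1+\theta x+g(\theta)x^2$ for $x\leq C$, the nonnegative supermartingale $L_k=\exp(\theta V_k-g(\theta)W_k)$ (valid because $W_k$ is predictable), optional stopping applied to the stopped process, and optimization over $\theta$. Every step of that chain is sound, including the stopping-time and degenerate-case bookkeeping you flag.

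The one concrete problem is your last sentence. With $g(\theta)\leq\frac{\theta^2/2}{1-C\theta/3}$ and $\theta=\delta/(\sigma^2+C\delta/3)$ the exponent collapses to $-\theta\delta+g(\theta)\sigma^2\leq-\frac{\delta^2}{\sigma^2+C\delta/3}+\frac{\delta^2}{2(\sigma^2+C\delta/3)}=-\frac{\delta^2/2}{\sigma^2+C\delta/3}$, i.e.\ to the usual Freedman--Tropp bound $\exp\bigl(-\frac{\delta^2/2}{\sigma^2+C\delta/3}\bigr)$, \emph{not} to the form printed in the lemma, which has $\delta^2$ rather than $\delta^2/2$ in the numerator. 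The printed form cannot be rescued by a better choice of $\theta$: it is false in general, since for sums of many small centered increments $V_k/\sqrt{W_k}$ is asymptotically standard normal and the tail is of order $\exp(-\delta^2/(2\sigma^2))$, which exceeds $\exp(-\delta^2/\sigma^2)$ for $\delta/\sigma$ large. So the missing $1/2$ is a typo in the statement (note the Bernstein lemma just above it does keep the $\tfrac12\mu^2$), your proof establishes the correct version, and you should say so explicitly rather than claiming to reach ``the advertised form.'' Since the paper only invokes the lemma for bounds of the type $e^{-\Omega(\cdot)}$, the constant factor is immaterial to all of its applications.
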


\section{Proof of Theorem~\ref{thm:main_result1}}
\label{sec:prof_thm1}
\subsection{Construction of Covert Policy}
\label{sec:policy}
For any $\theta\in\Theta$ and $t\in\mathbb{N}^{+}$, we define the generalized log likelihood ratio under any policy $\pi\in\Lambda_1(\eta)$ as 
\begin{align}
    A_{\theta}(t) = \log \frac{\mathbb{P}_{\mathcal{V}_{\theta},\pi}(X^t,Y^t)}{\max_{\theta'\neq \theta} \mathbb{P}_{\mathcal{V}_{\theta'},\pi}(X^t,Y^t)}. 
\end{align}
Similarly, for any $\theta\in\Theta$, $\theta'\neq \theta$ and $t\in\mathbb{N}^+$, we define the ordinary log likelihood ratio as
\begin{align}
    A_{\theta,\theta'}(t) = \log \frac{\mathbb{P}_{\mathcal{V}_{\theta},\pi}(X^t,Y^t)}{ \mathbb{P}_{\mathcal{V}_{\theta'},\pi}(X^t,Y^t)}.
\end{align}
We then specify the policy $\pi$ in the achievability proof as follows. 
\paragraph{Stopping rule $\phi$}
The policy stops at $t\in\mathbb{N}^+$ if there exists some $\theta\in\Theta$ such that the log liklihood ratio $A_{\theta,\theta'}(t)$ is greater than a threshold $\Gamma_{\theta,\theta'}$ for all $\theta'\neq \theta$. The stopping time is therefore  
\begin{align}
    \tau = \inf \left\{t\in\mathbb{N}^+: \exists\:\theta\in\Theta\textnormal{ s.t. }  A_{\theta,\theta'}(t)\geq \Gamma_{\theta,\theta'} \:\forall\: \theta'\neq \theta \right\},
    \label{eqn:stop_condition}
\end{align}
where for all $\theta\in\Theta$ and $\theta'\neq \theta$, the threshold $\Gamma_{\theta,\theta'}$ is defined as 
\begin{align}
    \Gamma_{\theta,\theta'} = n\alpha_{\theta}\left(\sum_{x\neq 0}\Bar{P}_{X;\theta}(x)\mathbb{D}(\nu_{\theta}^x\Vert \nu_{\theta'}^x)-\zeta\right),
    \label{eqn:stop_threshold}
\end{align}
\begin{align}
    \Bar{P}_{X;\theta} = \argmax_{\Bar{P}_X\in\mathcal{P}_{\mathcal{X}\setminus \{0\}}} \min_{\theta''\neq\theta} \frac{\sum_{x\in\mathcal{X}\setminus \{0\}} \Bar{P}_X(x)\mathbb{D}(\nu_{\theta}^x \Vert \nu_{\theta''}^x)}{\sqrt{\chi_2\left(\sum_{x\in\mathcal{X}\setminus\{0\}}\Bar{P}_X(x)q_{\theta}^x\middle\Vert q_{\theta}^0\right)}}
    \label{eqn:def_of_P_theta}
\end{align}
and 
\begin{align}
    \alpha_{\theta} = \frac{\sqrt{2\eta}}{\sqrt{n}}\frac{1}{\sqrt{\chi_2\left(\sum_{x\in\mathcal{X}\setminus\{0\}}\Bar{P}_{X;\theta}(x)q_{\theta}^x\middle\Vert q_{\theta}^0\right)}},
    \label{eqn:def_of_alpha}
\end{align}
and $\zeta>0$ is some small value. 

\paragraph{Control policy $\varphi$} 

tFor any $t\in\mathbb{N}^+$, we define $\hat{\theta}_{\textnormal{ML}}(t)$ as the maximum likelihood estimate of the hypothesis $\theta$, i.e., $\hat{\theta}_{\textnormal{ML}}(t) = \argmax_{\theta'\in\Theta} \prod_{i=1}^t \nu_{\theta'}^{X_i}(Y_i)$. Moreover, for all $\theta\in\Theta$, we define the distribution $P_{X;\theta}\in\mathcal{P}_{\mathcal{X}}$ as 
\begin{align}
    P_{X;\theta}(x) = 
    \begin{cases}
        1 - \alpha_{\theta} \quad \textnormal{if }x=0\\
        \alpha_{\theta} \Bar{P}_{X;\theta}(x) \quad \textnormal{if }x\neq 0
    \end{cases}
\end{align}
for all $x\in\mathcal{X}$, where $\Bar{P}_{X;\theta}$ and $\alpha_{\theta}$ are defined in \eqref{eqn:def_of_P_theta} and \eqref{eqn:def_of_alpha}, respectively. 
Then, the control policy $P_{X_t|X^{t-1},Y^{t-1}}$ is given by 
\begin{align}
    P_{X_t|Y^{t-1},X^{t-1}}(x) 
    = 
    \begin{cases}
        P_{X;\hat{\theta}_{\textnormal{ML}}(t-1)}(x) \quad \textnormal{if } S_{t-1} \neq \textnormal{stop}\\
        \mathbf{1}(x=0) \quad \textnormal{if } S_{t-1} = \textnormal{stop}
    \end{cases}
    \label{eqn:control_policy_1}
\end{align}
for all $x\in\mathcal{X}$. 

\paragraph{Final decision rule $\psi$} When the decision making process stops, the estimated hypothesis is given by $\psi(X^{\tau},Y^{\tau}) = \hat{\theta}_{\textnormal{ML}}(\tau)$.

For convenience of analysis, 
for any policy $\pi$ satisfying \eqref{eqn:control_policy_assumption_1}, 
we also define a corresponding dummy policy $\tilde{\pi}$ that never stops and has the control policy 
\begin{align}
    P_{X_t|Y^{t-1},X^{t-1}}(x) = P_{X;\hat{\theta}_{\textnormal{ML}}(t-1)}(x)
    \label{eqn:dummy_control_policy}
\end{align}
for all $x$ and $t\in\mathbb{N}^+$.

\subsection{Analysis of Covert Policy}
\paragraph{Stopping time $\tau$}
For any $\theta\in\Theta$, $\theta'\neq \theta$ and $i\in\mathbb{N}^+$, we define  
\begin{align}
    L_{\theta,\theta'}(i) = \log \frac{\mathbb{P}_{\mathcal{V}_{\theta},\pi}(Y_i|X_i)}{\mathbb{P}_{\mathcal{V}_{\theta'},\pi}(Y_i|X_i)} = \log \frac{\nu_{\theta}^{X_i}(Y_i)}{\nu_{\theta'}^{X_i}(Y_i)}
\end{align}
so that 
\begin{align*}
    A_{\theta,\theta'}(t) = \sum_{i=1}^t L_{\theta,\theta}(i) 
\end{align*}
for any $t\in\mathbb{N}^+$. 
We also define the random variable $N_{\theta}$ as the earliest time such that the ML estimation about the hypothesis is correct for all $t> N_{\theta}$ when the true hypothesis is $\theta\in\Theta$, i.e., 
\begin{align}
    N_{\theta} = \sup\{t\in\mathbb{N}^+:\hat{\theta}_{\textnormal{ML}}(t) \neq \theta \}. 
\end{align}
The standard analysis of the stopping time $\tau$ in the sequential hypothesis testing given in the literature \cite{Nitinawarat2013, Li2022} relies on the fact that the estimate of the true hypothesis is incorrect for only finitely many time steps. 
However, this is not true in our setting because effective actions are selected with a probability shrinking with $n$, i.e., $\alpha_{\theta}=\Theta_{n\rightarrow\infty}(n^{-1/2})$ for all $\theta\in\Theta$.   
Nevertheless, we can still show that $N_{\theta}$ grows much slower than $n$ through the following lemma, proved in Appendix~\ref{apx:A}. 

\begin{lemma}
    Let $\pi$ be some policy that satisfies \eqref{eqn:control_policy_assumption_1}, and $\sum_{x\in\mathcal{X}\setminus\{0\}} P_{X;\theta'}(x) = \Tilde{\Theta}_{n\rightarrow \infty}(n^{-\alpha})$ for all $\theta'\in\Theta$ for some $0<\alpha<1$. Then, 
    \begin{align}
       \mathbb{P}_{\mathcal{V}_{\theta},\Tilde{\pi}}\left(N_{\theta} \geq n^{\alpha+\epsilon}\right) = O_{n\rightarrow\infty}(n^{-\beta})
    \end{align}
    \label{lem:1}
    for arbitrarily large $\beta>0$, $\theta\in\Theta$ and $\epsilon>0$, where $\Tilde{\pi}$ is the corresponding dummy policy of $\pi$ with control policy defined in \eqref{eqn:dummy_control_policy}. 
\end{lemma}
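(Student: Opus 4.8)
The plan is to bound the probability that the ML estimate is wrong at any time $t \ge n^{\alpha+\epsilon}$ by a union bound over the incorrect hypotheses $\theta' \ne \theta$ and over $t$, and then to control each term via a concentration inequality for the log-likelihood ratio process $A_{\theta,\theta'}(t) = \sum_{i=1}^t L_{\theta,\theta'}(i)$. Under $\mathbb{P}_{\mathcal{V}_\theta,\tilde\pi}$, the event $\{\hat\theta_{\textnormal{ML}}(t) \ne \theta\}$ implies that $A_{\theta,\theta'}(t) \le 0$ for some $\theta' \ne \theta$. The key point is that, conditioned on $\mathcal{F}_{i-1}$, the expected increment $\mathbb{E}[L_{\theta,\theta'}(i)\mid \mathcal{F}_{i-1}] = \sum_x P_{X_i\mid \cdot}(x)\,\mathbb{D}(\nu_\theta^x\Vert \nu_{\theta'}^x)$ is at least $\big(\min_{x\ne 0}\mathbb{D}(\nu_\theta^x\Vert\nu_{\theta'}^x)\big)\cdot \sum_{x\ne 0}P_{X_i\mid\cdot}(x)$, and since $\sum_{x\ne 0}P_{X;\theta''}(x) = \tilde\Theta(n^{-\alpha})$ uniformly in $\theta''$, this conditional drift is $\tilde\Omega(n^{-\alpha})$ at every step. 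Hence the compensator of $A_{\theta,\theta'}$ at time $t$ grows like $t \cdot \tilde\Omega(n^{-\alpha})$, which for $t \ge n^{\alpha+\epsilon}$ is $\tilde\Omega(n^\epsilon)$ — a growing positive drift — so $A_{\theta,\theta'}(t) \le 0$ becomes a large-deviations event.

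The technical execution goes as follows. First I would write $A_{\theta,\theta'}(t) = M_{\theta,\theta'}(t) + C_{\theta,\theta'}(t)$, where $C_{\theta,\theta'}(t) = \sum_{i=1}^t \mathbb{E}[L_{\theta,\theta'}(i)\mid\mathcal{F}_{i-1}]$ is the (predictable) compensator and $M_{\theta,\theta'}(t)$ is a martingale with bounded increments — boundedness of $L_{\theta,\theta'}(i)$ follows from the standing assumption $0 < \mathbb{D}(\nu_\theta^x\Vert\nu_{\theta'}^x) < \infty$ together with finiteness of the alphabets (so $|L_{\theta,\theta'}(i)| \le C_0$ for a constant $C_0$ depending only on the distribution family; on the null action $L_{\theta,\theta'}(i)=0$). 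The predictable quadratic variation satisfies $W_k \le C_0 \sum_{i=1}^k \mathbb{E}[|L_{\theta,\theta'}(i)|\mid\mathcal{F}_{i-1}] \le C_0 C_1 \sum_{i=1}^k \sum_{x\ne0}P_{X_i\mid\cdot}(x) = \tilde{O}(k\,n^{-\alpha})$, i.e. the variance and the drift are of the same order. The event $\{\exists t \ge n^{\alpha+\epsilon}: \hat\theta_{\textnormal{ML}}(t)\ne\theta\}$ is contained in $\bigcup_{\theta'\ne\theta}\{\exists t \ge n^{\alpha+\epsilon}: M_{\theta,\theta'}(t) \le -C_{\theta,\theta'}(t)\}$, and since $C_{\theta,\theta'}(t) = \tilde\Omega(t\,n^{-\alpha})$ this forces $-M_{\theta,\theta'}(t) \ge \tilde\Omega(t\,n^{-\alpha})$ while $W_t = \tilde O(t\,n^{-\alpha})$. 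Applying Freedman's inequality (Lemma~\ref{lem:Freedman}) to the martingale $-M_{\theta,\theta'}$ with deviation $\delta_t = \tilde\Omega(t\,n^{-\alpha})$ and variance proxy $\sigma_t^2 = \tilde O(t\,n^{-\alpha})$ gives a bound of the form $\exp(-c\,\delta_t^2/(\sigma_t^2 + C_0\delta_t/3)) = \exp(-c'\,t\,n^{-\alpha}\,\textnormal{polylog})$, which at $t = n^{\alpha+\epsilon}$ is $\exp(-c'\,n^\epsilon\,\textnormal{polylog})$. I would then dyadically partition the range $t \ge n^{\alpha+\epsilon}$ (or use the "there exists $k$" form of Freedman directly) and sum the geometric-type tail over the $O(1)$ many $\theta'$ to conclude that the whole probability is $\exp(-\tilde\Omega(n^\epsilon))$, which is certainly $O(n^{-\beta})$ for every fixed $\beta$.

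One subtlety I would be careful about: the conditional law of $X_i$ under the dummy policy $\tilde\pi$ is $P_{X;\hat\theta_{\textnormal{ML}}(i-1)}$, whose non-null mass is $\tilde\Theta(n^{-\alpha})$ \emph{regardless} of whether $\hat\theta_{\textnormal{ML}}(i-1)$ equals $\theta$ — this is exactly why the uniform hypothesis "$\sum_{x\ne0}P_{X;\theta'}(x) = \tilde\Theta(n^{-\alpha})$ for all $\theta'$" is needed, and it guarantees the drift lower bound holds pathwise. I expect the main obstacle to be the bookkeeping in the union-over-$t$ step: Freedman's inequality as stated already handles "$\exists k$", so the cleanest route is to apply it once per $\theta'$ to the stopped/shifted martingale on the time window $[n^{\alpha+\epsilon},\infty)$ with a slowly-varying-in-$t$ threshold, or to break $[n^{\alpha+\epsilon},\infty)$ into dyadic blocks $[2^j n^{\alpha+\epsilon}, 2^{j+1}n^{\alpha+\epsilon})$, bound each block's failure probability by $\exp(-c' 2^j n^\epsilon \,\textnormal{polylog})$ using the uniform (in $t$) drift-vs-variance comparison on that block, and sum the resulting super-geometric series; the polylogarithmic factors coming from the $\tilde\Theta$ notation only help. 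The remaining estimates — boundedness of $L_{\theta,\theta'}$, the lower bound on $\min_{x\ne0}\mathbb{D}(\nu_\theta^x\Vert\nu_{\theta'}^x)$ being a positive constant, and the continuity/positivity assumptions on the $P_{X;\theta}$ — are all routine consequences of the standing model assumptions in Section~\ref{sec:model}.
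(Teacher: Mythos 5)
Your proposal is correct, and it reaches the same quantitative conclusion ($\exp(-\tilde{\Omega}_{n\to\infty}(n^{\epsilon}))$ per incorrect hypothesis, hence $O_{n\to\infty}(n^{-\beta})$ for every $\beta$) by the same basic engine — a per-step drift of order $\tilde{\Omega}(n^{-\alpha})$ for the log-likelihood ratio, Freedman's inequality (Lemma~\ref{lem:Freedman}), and a union bound over $t\geq n^{\alpha+\epsilon}$ and $\theta'\neq\theta$ — but the decomposition is genuinely different from the paper's. The paper (Appendix~\ref{apx:A}) centers $A_{\theta,\theta'}(t)$ at the \emph{realized} compensator $\sum_{i\leq t}\mathbb{D}(\nu_{\theta}^{X_i}\Vert\nu_{\theta'}^{X_i})$, which is random, and therefore needs a two-stage argument: first a Freedman bound on the counting martingale $\sum_i\mathbf{1}(X_i\neq 0)-\sum_i\sum_{x\neq 0}P_{X_i|\cdot}(x)$ to show at least $t\,n^{-\alpha-\epsilon/4}$ non-null actions occur, and then, on that event, a second Freedman bound on $\sum_i L_{\theta,\theta'}(i)-\sum_i\mathbb{D}(\nu_{\theta}^{X_i}\Vert\nu_{\theta'}^{X_i})$. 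You instead center at the \emph{predictable} compensator $C_{\theta,\theta'}(t)=\sum_i\mathbb{E}[L_{\theta,\theta'}(i)\mid\mathcal{F}_{i-1}]$, and observe — this is the one place where the uniformity of $\sum_{x\neq 0}P_{X;\theta''}(x)=\tilde{\Theta}(n^{-\alpha})$ over $\theta''$ is used, exactly as you note — that $C_{\theta,\theta'}(t)$ is lower bounded \emph{deterministically} by $\min_{x\neq 0}\mathbb{D}(\nu_{\theta}^x\Vert\nu_{\theta'}^x)\cdot t\,\tilde{\Omega}(n^{-\alpha})$, with the predictable quadratic variation of the same order $\tilde{O}(t\,n^{-\alpha})$; this collapses the argument to a single Freedman application per $\theta'$ and dispenses with the auxiliary event on the number of non-null pulls. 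Both routes are valid under the same standing assumptions (bounded log-likelihood ratios, $\nu_{\theta}^0=\nu_{\theta'}^0$ so the null action contributes neither drift nor variance); your version is slightly leaner, while the paper's count-then-center structure is reused almost verbatim in its later arguments (e.g.\ the stopping-time analysis and Lemma~\ref{lem:2}), which is presumably why it is organized that way. One small remark: the dyadic blocking you propose for the union over $t$ is unnecessary — since each term decays like $\exp(-t\,\tilde{\Omega}(n^{-\alpha-\epsilon/2}))$, the straightforward term-by-term sum over $t\geq n^{\alpha+\epsilon}$ already converges to something superpolynomially small, which is how the paper concludes.
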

By inspecting \eqref{eqn:def_of_alpha} and \eqref{eqn:control_policy_1}, the policy constructed in this section satisfies the assumptions in Lemma~\ref{lem:1} with $\alpha=1/2$. 
The proof of Lemma~\ref{lem:1} requires Lemma~\ref{lem:Freedman}, proved by Freedman in \cite{Freedman1975}, which is a Bernstein-style concentration bound on martingales. 
Lemma~\ref{lem:1} states that the ML estimate $\hat{\theta}_{\textnormal{ML}}(t)$ of the hypothesis is correct for any $t\geq n^{1/2+\epsilon}$ with high probability under the dummy policy $\Tilde{\pi}$ of $\pi$, where $\pi$ is the policy defined in Section~\ref{sec:policy}. 

We are now ready to upper bound the  probability that $\tau$ is greater than $n$ as follows. 
\begin{align}
    &\mathbb{P}_{\mathcal{V}_{\theta},\pi}(\tau>n) \\
    &\leq  \mathbb{P}_{\mathcal{V}_{\theta},\Tilde{\pi}}\left(\forall t\in[1;n]\textnormal{ and }\:\forall \theta'\in\Theta \:\exists\: \theta''\neq \theta' \textnormal{ s.t } A_{\theta',\theta''}(t)<\Gamma_{\theta',\theta''}\right)\label{eqn:4_A_2_0}\\
    &\leq \sum_{\theta'\neq\theta }\mathbb{P}_{\mathcal{V}_{\theta},\Tilde{\pi}} \left( \: A_{\theta,\theta'}(n) < \Gamma_{\theta,\theta'}\right) \label{eqn:4_A_2_1}\\
    &\leq \sum_{\theta'\neq\theta }\mathbb{P}_{\mathcal{V}_{\theta},\Tilde{\pi}} \left( \: N_{\theta}> n^{1/2+\epsilon}\right) + \sum_{\theta'\neq\theta }\mathbb{P}_{\mathcal{V}_{\theta},\Tilde{\pi}}\left(\sum_{i=1}^{n^{1/2+\epsilon}}\mathbf{1}(X_i\neq 0) > n^{2\epsilon}  \right) \nonumber\\
    &+\sum_{\theta'\neq\theta } \mathbb{P}_{\mathcal{V}_{\theta},\Tilde{\pi}} \left( \: A_{\theta,\theta'}(n) < \Gamma_{\theta,\theta'},N_{\theta}\leq n^{1/2+\epsilon}, \sum_{i=1}^{n^{1/2+\epsilon}}\mathbf{1}(X_i\neq 0) \leq n^{2\epsilon}\right),
    \label{eqn:stopping_time_analysis}
\end{align}
where in \eqref{eqn:4_A_2_0} we use the definition of the stopping time and the fact that the control policies of $\pi$ and $\Tilde{\pi}$ are the same before the policy stops,  \eqref{eqn:4_A_2_1} comes from the fact that 
\begin{align*}
    \left\{\forall t\in[1;n]\textnormal{ and }\:\forall \theta'\in\Theta \:\exists\: \theta''\neq \theta' \textnormal{ s.t } A_{\theta',\theta''}(t)<\Gamma_{\theta',\theta''}\right\} \subset \left\{ \exists\: \theta' \neq \theta \textnormal{ s.t } A_{\theta,\theta'}(n)<\Gamma_{\theta,\theta'}\right\} 
\end{align*}
and the union bound, and \eqref{eqn:stopping_time_analysis} comes from the law of total probability.  
The first term on the right hand side of  \eqref{eqn:stopping_time_analysis} goes to zero when $n\rightarrow \infty$ by Lemma~\ref{lem:1}, and the second term on the right hand side of \eqref{eqn:stopping_time_analysis} can be upper bounded by $e^{-\Omega_{n\rightarrow \infty}(n^{2\epsilon})}$ by applying Freedman's inequality for martingales as follows. By defining the martingale sequence $\{V_{t}\}_{t=1}^{\infty}$ with 
\begin{align}
    V_t \triangleq \sum_{i=1}^{t}\mathbf{1}(X_i\neq 0) - \sum_{i=1}^{t}\sum_{x\in\mathcal{X}\setminus\{0\}} P_{X_i|Y^{i-1},X^{i-1}}(x)
\end{align}
and $U_t = V_t - V_{t-1}$ for all $t\in\mathbb{N}^+$, 
we have 
\begin{align}
    &\mathbb{P}_{\mathcal{V}_{\theta},\Tilde{\pi}}\left(\sum_{i=1}^{n^{1/2+\epsilon}}\mathbf{1}(X_i\neq 0) > n^{2\epsilon}  \right) \nonumber\\
    &\quad \leq \mathbb{P}_{\mathcal{V}_{\theta},\Tilde{\pi}}\left(\sum_{i=1}^{n^{1/2+\epsilon}}\mathbf{1}(X_i\neq 0) - \sum_{i=1}^{n^{1/2}+\epsilon}\sum_{x\in\mathcal{X}\setminus\{0\}} P_{X_i|Y^{i-1},X^{i-1}}(x) > \Omega_{n\rightarrow \infty}(n^{2\epsilon})  \right)\\
    &\quad \leq \exp\left(-\frac{\Omega_{n\rightarrow\infty}(n^{4\epsilon})}{O_{n\rightarrow \infty}(n^{\epsilon}) + O_{n\rightarrow \infty}(n^{2\epsilon})}\right),
\end{align}
where we use the fact that 
\begin{align}
    n^{2\epsilon} - \sum_{i=1}^{n^{1/2}+\epsilon}\sum_{x\in\mathcal{X}\setminus\{0\}} P_{X_i|Y^{i-1},X^{i-1}}(x) = \Omega_{n\rightarrow \infty}(n^{2\epsilon}),
\end{align}
and 
\begin{align}
    \sum_{i=1}^{n^{1/2+\epsilon}}\mathbb{E}_{\mathcal{V}_{\theta},\Tilde{\pi}}[U_i^2|\mathcal{F}_{i-1}] \leq \sum_{i=1}^{n^{1/2+\epsilon}}\mathbb{E}_{\mathcal{V}_{\theta},\tilde{\pi}}[\mathbf{1}(X_i\neq 0)|\mathcal{F}_{i-1}] \leq O_{n\rightarrow \infty}(n^{\epsilon}). 
\end{align}
To analyze the third term on the right hand side of \eqref{eqn:stopping_time_analysis}, we first define the following events for any $\theta\in\Theta$ and $\theta'\neq \theta$.   
\begin{align*}
    \mathcal{D}_{\theta,\theta'} & \triangleq \left\{(x^{n},y^{n})\in \mathcal{X}^n\times \mathcal{Y}^n:A_{\theta,\theta'}(n)<\Gamma_{\theta,\theta'}, N_{\theta}\leq n^{1/2+\epsilon}, \sum_{i=1}^{n^{1/2+\epsilon}} \mathbf{1}(X_i\neq 0)\leq n^{2\epsilon}\right\}.\\
    \mathcal{D}'_{\theta,\theta'} &\triangleq \left\{(x^n_{n^{1/2+\epsilon}+1},y^n_{n^{1/2+\epsilon}+1}): \sum_{t=n^{1/2+\epsilon}+1}^n L_{\theta,\theta'}(t)<\Gamma_{\theta,\theta'} - n^{2\epsilon}\min_{x\in\mathcal{X}\setminus\{0\},y\in\mathcal{Y}} \log\frac{{\nu}^x_{\theta}(y)}{{\nu}^x_{\theta'}(y)}\right\}.
\end{align*}
Then, 
\begin{align}
    &\mathbb{P}_{\mathcal{V}_{\theta},\Tilde{\pi}} \left( \: A_{\theta,\theta'}(n) < \Gamma_{\theta,\theta'},N_{\theta}\leq n^{1/2+\epsilon}, \sum_{i=1}^{n^{1/2+\epsilon}}\mathbf{1}(X_i\neq 0) \leq n^{2\epsilon}\right) \nonumber\\
    &= \sum_{(x^n,y^n)\in \mathcal{D}_{\theta,\theta'}} \mathbb{P}_{\mathcal{V}_{\theta},\Tilde{\pi}}(x^n,y^n)  \label{eqn:4_A_2_2}\\
    &= \sum_{(x^n,y^n)\in \mathcal{D}_{\theta,\theta'}} \mathbb{P}_{\mathcal{V}_{\theta},\Tilde{\pi}}(x^{n^{1/2+\epsilon}},y^{n^{1/2+\epsilon}}) \prod_{t=n^{1/2+\epsilon}+1}^n P_{X;\theta}(x_t) \nu_{\theta}^{x_t}(y_t)  \label{eqn:4_A_2_3}\\
    &\leq \sum_{\left(x^n_{n^{1/2+\epsilon}+1},y^n_{n^{1/2+\epsilon}+1}\right)\in\mathcal{D}_{\theta,\theta'}'} \prod_{t=n^{1/2+\epsilon}+1}^n P_{X;\theta}(x_t) \nu_{\theta}^{x_t}(y_t)  \label{eqn:4_A_2_4}\\
    &= \mathbb{P}_{(P_{X;\theta}\circ \nu_{\theta}^x)^{\otimes (n - n^{1/2+\epsilon})}}\left( \sum_{t=n^{1/2+\epsilon}+1}^n L_{\theta,\theta'}(t) < \Gamma_{\theta,\theta'} - n^{2\epsilon}\min_{x\in\mathcal{X}\setminus\{0\},y\in\mathcal{Y}}\log \frac{\nu_{\theta}^x(y)}{\nu_{\theta'}^x(y)}\right)
    \label{eqn:4_A_2_5},
\end{align}
where \eqref{eqn:4_A_2_2} follows from the definition of $\mathcal{D}_{\theta,\theta'}$, \eqref{eqn:4_A_2_3} follows from the definition of the control policy and the fact that $N_{\theta}\leq n^{1/2+\epsilon}$ when the event $\mathcal{D}_{\theta,\theta'}$ holds, \eqref{eqn:4_A_2_4} follows by lower bounding $L_{\theta,\theta'}(t)$ by $\min_{x\in\mathcal{X}\setminus\{0\},y\in\mathcal{Y}} \log\frac{\nu^x_{\theta}(y)}{\nu_{\theta'}^x(y)}$ for all $t\leq n^{1/2+\epsilon}$ satisfying $X_t\neq 0$ and marginalizing over the sequences $(x^{n^{1/2+\epsilon}},y^{n^{1/2+\epsilon}})\in\mathcal{X}^{n^{1/2+\epsilon}}\times \mathcal{Y}^{n^{1/2+\epsilon}}$, and in \eqref{eqn:4_A_2_5} we use the notation \\ $\mathbb{P}_{(P_{X;\theta}\circ \nu_{\theta}^x)^{\otimes (n-n^{1/2+\epsilon})}}$ to emphasize that the tuples $\{(X_t,Y_t)\}_{t=n^{1/2+\epsilon}+1}^n$ in \eqref{eqn:4_A_2_5} are generated i.i.d from the distribution $P_{X;\theta} \circ \nu_{\theta}^x$, where $P_{X;\theta}\circ \nu_{\theta}^x(x,y) \triangleq P_{X;\theta}(x)\nu_{\theta}^x(y)$ for all $x\in\mathcal{X}$ and $y\in\mathcal{Y}$. Therefore, $\{L_{\theta,\theta'}(t)\}_{t=n^{1/2+\epsilon}+1}^n$ are generated independently in \eqref{eqn:4_A_2_5}. 
The event in the right hand side of \eqref{eqn:4_A_2_5} can be written as 
\begin{align*}
    &\sum_{t=n^{1/2+\epsilon}+1}^n L_{\theta,\theta'}(t) - (n-n^{1/2+\epsilon})\left(\sum_{x\in\mathcal{X}}P_{X;\theta}(x)\mathbb{D}(\nu_{\theta}^x\Vert \nu_{\theta'}^x) \right) \\
    &< \Gamma_{\theta,\theta'} - (n-n^{1/2+\epsilon})\left(\sum_{x\in\mathcal{X}}P_{X;\theta}(x)\mathbb{D}(\nu_{\theta}^x\Vert \nu_{\theta'}^x) \right) - n^{2\epsilon}\min_{x\in\mathcal{X}\setminus\{0\},y\in\mathcal{Y}}\log \frac{\nu_{\theta}^x(y)}{\nu_{\theta'}^x(y)},
\end{align*}
where 
\begin{align*}
    &\Gamma_{\theta,\theta'} - (n-n^{1/2+\epsilon})\left(\sum_{x\in\mathcal{X}}P_{X;\theta}(x)\mathbb{D}(\nu_{\theta}^x\Vert \nu_{\theta'}^x) \right) - n^{2\epsilon}\min_{x\in\mathcal{X},y\in\mathcal{Y}}\log \frac{\nu_{\theta}^x(y)}{\nu_{\theta'}^x(y)} \\
    &= -n\alpha_{\theta}\zeta + n^{1/2+\epsilon}\left(\sum_{x\in\mathcal{X}}P_{X;\theta}(x)\mathbb{D}(\nu_{\theta}^x\Vert p_{\theta'}^x) \right) - n^{2\epsilon}\min_{x\in\mathcal{X},y\in\mathcal{Y}}\log \frac{\nu_{\theta}^x(y)}{\nu_{\theta'}^x(y)} \\
    &\leq - C_1 \zeta  n^{1/2}
\end{align*}
for some $C_1>0$ for all $n$ sufficiently large by plugging in the definition of $\Gamma_{\theta,\theta'}$ and choosing $\epsilon$ arbitrarily small so that the term $-n\alpha_{\theta}\zeta$ in the definition of $\Gamma_{\theta,\theta'}$ dominates in the above expression. 
Then, we apply Bernstein's inequality to upper bound the right hand side of \eqref{eqn:4_A_2_5} by
\begin{align}
    \exp\left(-\frac{1}{2}\frac{C_1^2\zeta^2 n}{\sum_{t=n^{1/2+\epsilon}+1}^n \mathbb{E}_{P_{X;\theta}\circ \nu_{\theta}^x}[L_{\theta,\theta'}(t)^2]+\frac{1}{3}M C_1\zeta n^{1/2}}\right),
    \label{eqn:4_A_2_6}
\end{align}
where $0<M<\infty$ is some constant such that $|L_{\theta,\theta'}(t)|<M$ almost surely for all $t\in\mathbb{N}^+$ and for all $n^{1/2+\epsilon}< t < n$, and 
\begin{align}
    \mathbb{E}_{P_{X;\theta}\circ \nu_{\theta}^x}[L_{\theta,\theta'}(t)^2] = \sum_{x\in\mathcal{X}\setminus\{0\}}\alpha_{\theta}\Bar{P}_{X;\theta}(x)\sum_{y\in\mathcal{Y}} \nu_{\theta}^x(y) \left(\log \frac{\nu_{\theta}^x(y)}{\nu_{\theta'}^x(y)}\right)^2 &\leq O_{n\rightarrow\infty}(n^{-1/2}), 
\end{align}
where we use the fact that $\nu_{\theta}^0(y)\left(\log \frac{\nu_{\theta}^0(y)}{\nu_{\theta'}^0(y)}\right)^2$ is $0$ for all $y\in\mathcal{Y}$. Therefore, \eqref{eqn:4_A_2_6} becomes $e^{-\Omega_{n\rightarrow\infty}(n^{1/2})}$, which decrease to zero when $n\rightarrow 0$. Finally, combining \eqref{eqn:stopping_time_analysis} and \eqref{eqn:4_A_2_6}, we conclude that 
\begin{align}
    \lim_{n\rightarrow\infty} \mathbb{P}_{\mathcal{V}_{\theta},\pi }(\tau>n) = 0
    \label{eqn:4_A_2_7}
\end{align}
for any $\theta\in\Theta$ and $\zeta>0$ when the stopping time is defined in \eqref{eqn:stop_condition} with the threshold given by \eqref{eqn:stop_threshold}. 

\paragraph{Relative Entropy }
We denote the output distribution of Willie under the dummy policy $\Tilde{\pi}$ of $\pi$ by  $\Tilde{P}_{Z^n;\theta}(z^n)\triangleq \mathbb{P}_{\mathcal{V}_{\theta},\mathcal{Q}_{\theta},\Tilde{\pi}}(z^n)$ for any $z^n\in\mathcal{Z}^n$, and $\Tilde{P}_{Z^n;\theta}(z^n) = \prod_{i=1}^n \Tilde{P}_{Z^{i}|Z^{i-1};\theta}(z_i)$ is the factorization $\Tilde{P}_{Z^n;\theta}$ by the product of conditional probabilities under the dummy policy $\tilde{\pi}$. Since $\Tilde{P}_{Z^n;\theta}$ is the output distribution of $Z^n$ when the policy does not stop, it holds that 
\begin{align}
    \mathbb{D}(P_{Z^{n};\theta}\Vert (q_{\theta}^{0})^{\otimes n})\leq \mathbb{D}(\Tilde{P}_{Z^{n};\theta}\Vert (q_{\theta}^{0})^{\otimes n}). 
\end{align}
Given any $\theta\in\Theta$, by the chain rule, the relative entropy $\mathbb{D}(\Tilde{P}_{Z^{n};\theta}\Vert (q_{\theta}^{0})^{\otimes n})$ can be expressed as 
\begin{align}
    \mathbb{D}(\Tilde{P}_{Z^{n};\theta}\Vert (q_{\theta}^{0})^{\otimes n}) 
    &= \sum_{i=1}^n \mathbb{E}_{Z^{i-1};\mathcal{V}_{\theta},\mathcal{Q}_{\theta},\Tilde{\pi}}\left[\mathbb{D}(\Tilde{P}_{Z_i|Z^{i-1};\theta}\Vert q_{\theta}^0)\right] \\
    &\leq \sum_{i=1}^n \mathbb{E}_{Z^{i-1};\mathcal{V}_{\theta},\mathcal{Q}_{\theta},\Tilde{\pi}} \mathbb{E}_{\hat{\theta}_{\textnormal{ML}}(i-1)|Z^{i-1};\mathcal{V}_{\theta} ,\mathcal{Q}_{\theta},\Tilde{\pi}} \left[\mathbb{D}\left(\Tilde{P}_{Z_i|Z^{i-1},\hat{\theta}_{\textnormal{ML}}(i-1);\theta}\middle\Vert q_{\theta}^0\right)\label{eqn:div_analysis_1} \right]\\
    &= \sum_{i=1}^n \mathbb{E}_{\hat{\theta}_{\textnormal{ML}}(i-1);\mathcal{V}_{\theta},\mathcal{Q}_{\theta},\Tilde{\pi}} \mathbb{D}\left(\Tilde{P}_{Z_i|\hat{\theta}_{\textnormal{ML}}(i-1);\theta}\middle\Vert q_{\theta}^0\right) \label{eqn:div_analysis_2} \\
    &\leq \sum_{i=1}^n \mathbb{P}_{\mathcal{V}_{\theta},\Tilde{\pi}}(\hat{\theta}_{\textnormal{ML}}(i-1)\neq \theta) \max_{\theta'\in\Theta}\mathbb{D}(\Tilde{P}_{Z_i|\hat{\theta}_{\textnormal{ML}}(i-1)=\theta';\theta}\Vert q_{\theta}^0) \nonumber\\
    &\quad+ \sum_{i=1}^n \mathbb{P}_{\mathcal{V}_{\theta},\Tilde{\pi}}(\hat{\theta}_{\textnormal{ML}}(i-1) =\theta) \mathbb{D}\left(\sum_{x\in\mathcal{X}}{P}_{X;\theta}(x)q^x_{\theta}\middle\Vert q_{\theta}^0\right)\\
    &\leq O_{n\rightarrow\infty}(n^{-1}) \left(n^{1/2+\epsilon}+\sum_{i=n^{1/2+\epsilon}+1}^{n} \mathbb{P}_{\mathcal{V}_{\theta},\tilde{\pi}}(N_{\theta} \geq i) \right) \nonumber\\
    &\quad+ n \mathbb{D}\left(\sum_{x\in\mathcal{X}}{P}_{X;\theta}(x)q^x_{\theta}\middle\Vert q_{\theta}^0\right) 
    \label{eqn:div_analysis_3}\\
    &\leq o_{n\rightarrow\infty}(1) + n\left(\frac{(\alpha_{\theta})^2} {2}\chi_2\left(\sum_{x\in\mathcal{X}\setminus\{0\}}\Bar{P}_{X;\theta}(x)q_{\theta}^x\middle\Vert q_{\theta}^0\right) + o_{n\rightarrow \infty}((\alpha_{\theta})^2) \right)\label{eqn:div_analysis_4}\\
    &\leq \eta + o_{n\rightarrow\infty}(1),
\end{align}
where \eqref{eqn:div_analysis_1} follows from the convexity of the relative entropy, Jensen's inequality and the fact that 
\begin{align}
    \Tilde{P}_{Z_i|Z^{i-1};\theta}(z) & = \mathbb{P}_{\mathcal{V}_{\theta},\mathcal{Q}_{\theta},\Tilde{\pi}}(z|Z^{i-1}) \\
    &= \sum_{\theta'\in\Theta}\mathbb{P}_{\mathcal{V}_{\theta},\mathcal{Q}_{\theta},\Tilde{\pi}}(\hat{\theta}_{\textnormal{ML}}(i-1)=\theta'|Z^{i-1})\mathbb{P}_{\mathcal{V}_{\theta},\mathcal{Q}_{\theta},\Tilde{\pi}}(z|\hat{\theta}_{\textnormal{ML}}(i-1)=\theta',Z^{i-1})\\
    &= \mathbb{E}_{\hat{\theta}_{\textnormal{ML}}(i-1)|Z^{i-1};\mathcal{V}_{\theta},\mathcal{Q}_{\theta},\Tilde{\pi}} \left[ \Tilde{P}_{Z_i|\hat{\theta}_{\textnormal{ML}}(i-1),Z^{i-1};\theta}(z)\right]
\end{align}
for any $z\in\mathcal{Z}$, 
\eqref{eqn:div_analysis_2} follows because $\Tilde{P}_{Z_i|Z^{i-1},\hat{\theta}_{\textnormal{ML}}(i-1);\theta}=\Tilde{P}_{Z_i|
\hat{\theta}_{\textnormal{ML}(i-1)};\theta}$ by our construction of the policy and the fact that $Z_i$ is independent of $Z^{i-1}$ conditioned on $X_i$,  \eqref{eqn:div_analysis_3} follows by upper bounding $\max_{\theta'\in\Theta}\mathbb{D}(\Tilde{P}_{Z_i|\hat{\theta}(i-1)=\theta';\theta}\Vert q_{\theta}^0)$  by $O_{n\rightarrow\infty}(n^{-1})$ because 
\begin{align}
    \Tilde{P}_{Z_i|\hat{\theta}(i-1)=\theta';\theta} &= \sum_{x\in\mathcal{X}} P_{X;\theta'}(x) q_{\theta}^x 
\end{align}
for any $\theta'\in\Theta$, 
$\sum_{x\neq 0}P_{X;\theta'}(x)=O_{n\rightarrow\infty}(n^{-1/2})$ for any $\theta'\in\Theta$, and we apply the result from \cite[Lemma 1]{Bloch2016}, and in \eqref{eqn:div_analysis_4} we use the result from Lemma~\ref{lem:1} so that $\sum_{i=n^{1/2+\epsilon}}^n \mathbb{P}_{\mathcal{V}_{\theta},\Tilde{\pi}}(N_{\theta}\geq i) \leq o_{n\rightarrow\infty}(1)$ and apply \cite[Lemma 1]{Bloch2016} again on the term $\mathbb{D}\left(\sum_{x\in\mathcal{X}}{P}_{X;\theta}(x)q^x_{\theta}\middle\Vert q_{\theta}^0\right)$. 
Therefore, we conclude that 
\begin{align}
    \lim_{n\rightarrow \infty} \mathbb{D}(P_{Z^{n};\theta}\Vert (q_{\theta}^{0})^{\otimes n}) \leq \eta
\end{align}
for all $\theta\in\Theta$ and the covertness constraint is satisfied. 

\paragraph{Estimation error}
    For any $\theta\in\Theta$, the decision error probability can be upper bounded by 
    \begin{align}
        \mathbb{P}_{\mathcal{V}_{\theta},\pi}(\psi(X^{\tau},Y^{\tau})\neq\theta) &\leq \sum_{\theta'\neq\theta} \mathbb{P}_{\mathcal{V}_{\theta},\pi}\left(\psi(X^{\tau},Y^{\tau}) = \theta'\right)\\
        &\leq \sum_{\theta'\neq\theta} \mathbb{E}_{\mathcal{V}_{\theta},\pi}\left[\mathbf{1}(A_{\theta',\theta}(\tau)\geq \Gamma_{\theta',\theta})\right] \\
        &= \sum_{\theta'\neq\theta} \mathbb{E}_{\mathcal{V}_{\theta},\pi}\left[\mathbf{1}(A_{\theta,\theta'}(\tau)\leq -\Gamma_{\theta',\theta})\right]\\
        &= \sum_{\theta'\neq\theta} \mathbb{E}_{\mathcal{V}_{\theta'},\pi}\left[e^{A_{\theta,\theta'}(\tau)}\mathbf{1}(A_{\theta,\theta'}(\tau)\leq -\Gamma_{\theta',\theta})\right]\\
        &\leq \sum_{\theta'\neq\theta} e^{-\Gamma_{\theta',\theta}}\\
        &\leq |\Theta| e^{-\min_{\theta'\neq\theta}\Gamma_{\theta',\theta}}. 
    \end{align}
    Therefore,
    \begin{align}
        \max_{\theta\in\Theta}\mathbb{P}_{\mathcal{V}_{\theta},\pi}(\psi(X^{\tau},Y^{\tau})\neq\theta) &\leq |\Theta|e^{-\min_{\theta\in\Theta}\min_{\theta'\neq\theta}\Gamma_{\theta',\theta}}\\
        &= |\Theta|e^{-\min_{\theta\in\Theta}\min_{\theta'\neq\theta} n\alpha_{\theta'}\left(\sum_{x\neq 0}\Bar{P}_{X;\theta'}(x)\mathbb{D}(\nu_{\theta'}^x\Vert \nu_{\theta}^x)-\zeta\right)}. 
        \label{eqn:error_probability}
    \end{align}

    \paragraph{Achievable exponent}
    By \eqref{eqn:error_probability},  the definition of $\gamma_{1}(\pi)$ and by making $\zeta>0$ arbitrarily small, we have 
    \begin{align}
        \gamma_{1}^*  &\geq  \sqrt{2\eta} \min_{\theta\in\Theta}\min_{\theta'\neq\theta}  \frac{\sum_{x\neq 0}\Bar{P}_{X;\theta'}(x)\mathbb{D}(\nu_{\theta'}^x\Vert \nu_{\theta}^x)}{\sqrt{\chi_2\left(\sum_{x\neq 0}\Bar{P}_{X;\theta'}(x)q_{\theta'}^x\Vert q_{\theta'}^0\right)}}\\
        &= \sqrt{2\eta} \min_{\theta'\in\Theta}\min_{\theta\neq\theta'}  \frac{\sum_{x\neq 0}\Bar{P}_{X;\theta'}(x)\mathbb{D}(\nu_{\theta'}^x\Vert \nu_{\theta}^x)}{\sqrt{\chi_2\left(\sum_{x\neq 0}\Bar{P}_{X;\theta'}(x)q_{\theta'}^x\Vert q_{\theta'}^0\right)}}\\
        &= \sqrt{2\eta} \min_{\theta'\in\Theta} \max_{\Bar{P}_X\in\mathcal{P}_{\mathcal{X}\setminus\{0\}}}\min_{\theta\neq\theta'}  \frac{\sum_{x\neq 0}\Bar{P}_{X}(x)\mathbb{D}(\nu_{\theta'}^x\Vert \nu_{\theta}^x)}{\sqrt{\chi_2\left(\sum_{x\neq 0}\Bar{P}_X(x)q_{\theta'}^x\Vert q_{\theta'}^0\right)}}
        \label{eqn:claim_exponent},
    \end{align}
    where \eqref{eqn:claim_exponent} follows from the definition of $\Bar{P}_{X;\theta'}$ in \eqref{eqn:def_of_P_theta}, which completes the proof of Theorem~\ref{thm:main_result1}.

\section{Proof of Theorem~\ref{thm:main_result2}}
\label{sec:prof_thm2}

In the converse, we assume that the decision making policy $\pi$ satisfies the property described in \eqref{eqn:control_policy_assumption_1}. 
We assume without loss of generality that the policy $\pi$ can achieve a positive detection error exponent, i.e., $\gamma_{1}(\pi)>0$. If $\gamma_{1}(\pi)=0$, our converse result holds trivially. 
We first need to build up the relationship between achievable exponent and the log-likelihood ratio, and \cite{Li2020} provides a neat way to do so. We rewrite part of their proofs in the following equations for completeness. 
For any $\theta\in\Theta$, $\theta'\neq \theta$ and $\lambda>0$, it holds that 
\begin{align}
    &\mathbb{P}_{\mathcal{V}_{\theta'},\pi}(\psi(X^{\tau},Y^{\tau})=\theta') - e^{\sqrt{n}\lambda} \mathbb{P}_{\mathcal{V}_{\theta},\pi}(\psi(X^{\tau},Y^{\tau})=\theta') \nonumber\\
    &= \mathbb{E}_{\mathcal{V}_{\theta'},\pi}\left[ \mathbf{1}\left(\psi(X^{\tau},Y^{\tau})=\theta'\right)\right] - \mathbb{E}_{\mathcal{V}_{\theta'},\pi}\left[ e^{\sqrt{n}\lambda+A_{\theta,\theta'}(\tau)_{\theta,\theta'}}\mathbf{1}\left(\psi(X^{\tau},Y^{\tau})=\theta'\right)\right] \nonumber\\
    &\leq \mathbb{E}_{\mathcal{V}_{\theta'},\pi}\left[ \mathbf{1}(\psi(X^{\tau},Y^{\tau})=\theta') \mathbf{1}(A_{\theta,\theta'}(\tau)\leq -\sqrt{n}\lambda)\right]
    \label{eqn:conv_1}\\
    &\leq \mathbb{P}_{\mathcal{V}_{\theta'},\pi}(A_{\theta,\theta'}(\tau)\leq -\sqrt{n}\lambda),
    \label{eqn:conv_2}
\end{align}
where \eqref{eqn:conv_1} follows because $1-e^{\sqrt{n}\lambda+A_{\theta,\theta'}(\tau)} \leq 1$ for all $A_{\theta,\theta'}(\tau) \leq -\sqrt{n}\lambda$ and $1-e^{\sqrt{n}\lambda+A_{\theta,\theta'}(\tau)} < 0 $ for all $A_{\theta,\theta'}(\tau) > -\sqrt{n}\lambda$. By re-arranging \eqref{eqn:conv_2}, we have that 
\begin{align}
    \mathbb{P}_{\mathcal{V}_{\theta},\pi}(\psi(X^{\tau},Y^{\tau})=\theta') &\geq e^{-\sqrt{n}\lambda} \left(\mathbb{P}_{\mathcal{V}_{\theta'},\pi}(\psi(X^{\tau},Y^{\tau})=\theta') - \mathbb{P}_{\mathcal{V}_{\theta'},\pi}(A_{\theta,\theta'}(\tau)\leq -\sqrt{n}\lambda)\right) \\
    &\geq e^{-\sqrt{n}\lambda} \left(1-e^{-\sqrt{n}\xi_2} - \mathbb{P}_{\mathcal{V}_{\theta'},\pi}\left(A_{\theta,\theta'}(\tau)\leq -\sqrt{n}\lambda\right)\right)
\end{align}
for some $\xi_2>0$ when $n$ is sufficiently large by the assumption that $\gamma_{1}(\pi)>0$. Then, 
\begin{align}
    \frac{-1}{\sqrt{n}}\log \mathbb{P}_{\mathcal{V}_{\theta},\pi}(\psi(X^{\tau},Y^{\tau})=\theta') \leq \lambda - \frac{1}{\sqrt{n}} \log \left(1-o_{n\rightarrow \infty}(1) - \mathbb{P}_{\mathcal{V}_{\theta'},\pi}\left(A_{\theta,\theta'}(\tau)\leq -\sqrt{n}\lambda\right)\right),
    \label{eqn:conv_3}
\end{align}
which implies that the detection error exponent $\gamma_{1}(\pi)$ is upper bounded by $\lambda$ whenever there exist some $\theta\in\Theta$ and $\theta'\neq \theta$ such that 
\begin{align}
    \lim_{n\rightarrow\infty} \mathbb{P}_{\mathcal{V}_{\theta'},\pi}\left(A_{\theta,\theta'}(\tau)\leq -\sqrt{n}\lambda\right) < 1. 
    \label{eqn:conv_4}
\end{align}
Equation~\eqref{eqn:conv_4} implies that if the detection error exponent $\gamma_{1}(\pi)$ is greater than some $\gamma>0$, then it holds that  
\begin{align}
    \lim_{n\rightarrow\infty} \mathbb{P}_{\mathcal{V}_{\theta},\pi}\left(A_{\theta,\theta'}(\tau) \geq \sqrt{n}\kappa\gamma\right) = 1
    \label{eqn:conv_lowerbound_llr}
\end{align}
for all $0<\kappa<1$ for any $\theta\in\Theta$ and $\theta'\neq \theta$. 
By using the above equality, we are able to derive a lower bound on the stopping time as shown in Lemma~\ref{lem:2} if $\gamma_{1}(\pi)$ can achieve the exponent $\gamma$. 
\begin{lemma}
    If the policy $\pi$ that satisfies \eqref{eqn:control_policy_assumption_1} achieves the detection error exponent $\gamma>0$ in (P1), then it holds that 
    \begin{align}
        \lim_{n\rightarrow\infty }\mathbb{P}_{\mathcal{V}_{\theta},\pi}\left(\tau < \frac{ \rho\gamma \sqrt{n} }{\sum_{x}P_{X;\theta}(x)\mathbb{D}(\nu_{\theta}^x\Vert \nu_{\theta'}^x)}\right) = 0 
    \end{align}
    for all $0<\rho<1$ and for all $\theta\in\Theta$ and $\theta'\neq \theta$. 
    \label{lem:2}
\end{lemma}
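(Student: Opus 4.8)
The plan is to derive the lower bound on $\tau$ from the already‑established bound \eqref{eqn:conv_lowerbound_llr} by showing that $A_{\theta,\theta'}$ cannot climb to the level $\sqrt{n}\kappa\gamma$ in fewer than $t^\star\triangleq\lceil\rho\gamma\sqrt{n}/d_{\theta,\theta'}\rceil$ steps, where $d_{\theta,\theta'}\triangleq\sum_{x}P_{X;\theta}(x)\mathbb{D}(\nu_{\theta}^x\Vert\nu_{\theta'}^x)$—which equals $\sum_{x\neq0}P_{X;\theta}(x)\mathbb{D}(\nu_{\theta}^x\Vert\nu_{\theta'}^x)$ since $\nu_{\theta}^0=\nu_{\theta'}^0$—and $\rho<\kappa<1$. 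First I would record that, because $\nu_{\theta}^0=\nu_{\theta'}^0$, every increment $L_{\theta,\theta'}(i)$ with $X_i=0$—in particular every increment with $i>\tau$—vanishes, so $A_{\theta,\theta'}(\cdot)$ is frozen after $\tau$ and $A_{\theta,\theta'}(t^\star)=A_{\theta,\theta'}(\tau)$ on $\{\tau<t^\star\}$. Hence
\begin{align*}
  \mathbb{P}_{\mathcal{V}_{\theta},\pi}(\tau<t^\star)\ \leq\ \mathbb{P}_{\mathcal{V}_{\theta},\pi}\!\left(A_{\theta,\theta'}(\tau)<\sqrt{n}\kappa\gamma\right)+\mathbb{P}_{\mathcal{V}_{\theta},\pi}\!\left(A_{\theta,\theta'}(t^\star)\geq\sqrt{n}\kappa\gamma\right),
\end{align*}
and since the first term vanishes by \eqref{eqn:conv_lowerbound_llr}, the lemma reduces to showing that the second term vanishes.

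To control $A_{\theta,\theta'}(t^\star)$ I would establish two facts. (i) The covertness constraint \eqref{eqn:covertness}, applied with \emph{each} $\theta''\in\Theta$ taken as the true hypothesis, forces $\sum_{x\neq0}P_{X;\theta''}(x)=O_{n\to\infty}(n^{-1/2})$ for every $\theta''$: expanding $\mathbb{D}\big(\tilde{P}_{Z^n;\theta''}\Vert(q_{\theta''}^0)^{\otimes n}\big)$ by the chain rule as in \eqref{eqn:div_analysis_1}, each conditional term has the form $\mathbb{D}\big((1-p_i)q_{\theta''}^0+p_i\bar q_i\Vert q_{\theta''}^0\big)$ with $p_i$ the conditional probability of a non-null action and $\bar q_i$ a mixture of $\{q_{\theta''}^x\}_{x\neq0}$, and the no‑convex‑combination hypothesis together with compactness of $\mathcal{P}_{\mathcal{X}\setminus\{0\}}$ yields a uniform bound $\mathbb{D}\big((1-p)q_{\theta''}^0+p\bar q\Vert q_{\theta''}^0\big)\geq\underline c\,p^2$, whence $\eta\geq\underline c\sum_i\mathbb{E}[p_i^2]\geq\underline c\,n^{-1}\big(\sum_i\mathbb{P}_{\mathcal{V}_{\theta''},\pi}(X_i\neq0)\big)^2$ and the claim follows. (ii) With probability $1-o(1)$ the estimate $\hat\theta_{\textnormal{ML}}$ is correct past time $n^{1/2+\epsilon}$: from (i) and \eqref{eqn:conv_lowerbound_llr} one first gets $\tau\geq n^{1/2+\epsilon}$ with high probability (on $\{\tau<n^{1/2+\epsilon}\}$ one has $|A_{\theta,\theta'}(\tau)|\leq M\#\{i\leq n^{1/2+\epsilon}:X_i\neq0\}=O(n^{2\epsilon})=o(\sqrt{n})$ with high probability—by a Bernstein/Freedman bound on the non‑null‑action count exactly as in the stopping‑time analysis of Theorem~\ref{thm:main_result1}—which contradicts $A_{\theta,\theta'}(\tau)\geq\sqrt{n}\kappa\gamma$); then Lemma~\ref{lem:1} applied to the dummy policy $\tilde{\pi}$, the coincidence of $\pi$ and $\tilde{\pi}$ before $\tau$, and the freezing of $\hat\theta_{\textnormal{ML}}$ after $\tau$ give $\mathbb{P}_{\mathcal{V}_{\theta},\pi}\big(\exists\,i\in(n^{1/2+\epsilon},t^\star]:\hat\theta_{\textnormal{ML}}(i-1)\neq\theta\big)=o(1)$.

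The concentration step then runs as follows. By the coupling of $\pi$ with $\tilde{\pi}$ before $\tau$ and the freezing of $A_{\theta,\theta'}$ afterwards, $A_{\theta,\theta'}^{\pi}(t^\star)=A_{\theta,\theta'}^{\tilde{\pi}}(t^\star\wedge\tau)\leq\max_{r\leq t^\star}A_{\theta,\theta'}^{\tilde{\pi}}(r)$, so it is enough to bound $\mathbb{P}_{\mathcal{V}_{\theta},\tilde{\pi}}(\max_{r\leq t^\star}A_{\theta,\theta'}(r)\geq\sqrt{n}\kappa\gamma)$. Intersecting with the high‑probability event $\{N_\theta\leq n^{1/2+\epsilon}\}\cap\{\#\{i\leq n^{1/2+\epsilon}:X_i\neq0\}\leq n^{2\epsilon}\}$ and running the conditioning‑and‑marginalization argument of \eqref{eqn:4_A_2_2}--\eqref{eqn:4_A_2_5} (on that event the control for $i>n^{1/2+\epsilon}$ is $P_{X;\theta}$, and $|A_{\theta,\theta'}(n^{1/2+\epsilon})|\leq Mn^{2\epsilon}$ with $M\triangleq\max_{x\neq0,y}|\log(\nu_{\theta}^x(y)/\nu_{\theta'}^x(y))|<\infty$), this probability is at most
\begin{align*}
  \mathbb{P}_{(P_{X;\theta}\circ\nu_{\theta}^x)^{\otimes(t^\star-n^{1/2+\epsilon})}}\!\Big(\max_{1\leq r\leq t^\star-n^{1/2+\epsilon}}\sum_{i=1}^{r}L_{\theta,\theta'}(i)\ \geq\ \sqrt{n}\kappa\gamma-Mn^{2\epsilon}\Big),
\end{align*}
where the $L_{\theta,\theta'}(i)$ are now i.i.d.\ copies with increments in $[-M,M]$, mean $d_{\theta,\theta'}$, and variance at most $M^2\sum_{x\neq0}P_{X;\theta}(x)$. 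Centering this walk and applying Freedman's inequality (Lemma~\ref{lem:Freedman}): its predictable quadratic variation over the first $t^\star-n^{1/2+\epsilon}$ steps is at most $t^\star M^2\sum_{x\neq0}P_{X;\theta}(x)$, which is $O(\sqrt{n})$ because $d_{\theta,\theta'}\geq\big(\min_{x\neq0}\mathbb{D}(\nu_{\theta}^x\Vert\nu_{\theta'}^x)\big)\sum_{x\neq0}P_{X;\theta}(x)$ so the $\sum_{x\neq0}P_{X;\theta}(x)$ factor cancels against the denominator of $t^\star$, while the centered walk is required to exceed $\sqrt{n}\kappa\gamma-Mn^{2\epsilon}-t^\star d_{\theta,\theta'}\geq(\kappa-\rho)\gamma\sqrt{n}-o(\sqrt{n})=\Omega(\sqrt{n})$; Freedman's bound is therefore $\exp\big(-\Omega(n)/O(\sqrt{n})\big)=e^{-\Omega(\sqrt{n})}\to0$, which closes the argument.

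The step I expect to be the main obstacle is handling the transient phase in which $\hat\theta_{\textnormal{ML}}$ is still wrong. Because non‑null actions are chosen with probability only $\Theta(n^{-1/2})$, one cannot guarantee that $\hat\theta_{\textnormal{ML}}$ is correct before a horizon of order $\sqrt{n}$ up to sub‑polynomial factors---a scale that is \emph{not} negligible against the target $\sqrt{n}$---so the argument cannot tolerate an $\Omega(1)$ per‑step drift of $A_{\theta,\theta'}$ on those steps, and it must lean on the covertness‑forced fact (i) that \emph{every} control distribution $P_{X;\theta''}$, not only the one matching the true hypothesis, puts $O(n^{-1/2})$ mass on non‑null actions. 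Pinning down that uniform bound---together with the cancellation that keeps the quadratic variation at order $\sqrt{n}$---is the delicate part; the rest closely parallels the stopping‑time analysis already carried out for Theorem~\ref{thm:main_result1}.
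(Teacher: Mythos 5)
Your overall skeleton is fine: the freezing of $A_{\theta,\theta'}$ after $\tau$, the reduction to bounding $\mathbb{P}_{\mathcal{V}_{\theta},\pi}\left(A_{\theta,\theta'}(t^\star)\geq\sqrt{n}\kappa\gamma\right)$ via \eqref{eqn:conv_lowerbound_llr}, and the observation that the quadratic variation over a window of length $t^\star$ is $O(\sqrt{n})$ because $\sum_{x\neq 0}P_{X;\theta}(x)$ cancels against the denominator of $t^\star$, are all consistent with the paper's Freedman/Bernstein computation. The problem is the pillar you yourself identify as critical, step (i). First, the quantity you expand, $\mathbb{D}\bigl(\tilde{P}_{Z^n;\theta''}\Vert(q_{\theta''}^{0})^{\otimes n}\bigr)$, is the \emph{dummy-policy} output divergence; the covertness constraint \eqref{eqn:covertness} only bounds the divergence under the actual (stopping) policy, and the paper's inequality goes the wrong way for you ($\mathbb{D}(P_{Z^n;\theta})\leq\mathbb{D}(\tilde{P}_{Z^n;\theta})$). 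Second, even with the correct quantity, the chain rule as in \eqref{eqn:div_analysis_1} plus a quadratic lower bound only yields $\sum_{i\leq n}\mathbb{P}_{\mathcal{V}_{\theta''},\pi}(X_i\neq 0)=O(\sqrt{n})$, i.e.\ an \emph{aggregate} bound on the expected number of non-null actions over the horizon. This does not imply $\sum_{x\neq 0}P_{X;\theta''}(x)=O(n^{-1/2})$: under $\pi$ the per-step non-null probability carries the factor $\mathbb{P}(\hat\theta_{\textnormal{ML}}(i-1)=\theta'',\tau\geq i)$, so a policy that stops (or leaves the state $\hat\theta_{\textnormal{ML}}=\theta''$) after $o(n)$ steps can satisfy covertness while $P_{X;\theta''}$ puts much more than $n^{-1/2}$ mass on non-null actions. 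Converting the aggregate bound into a per-step bound requires a lower bound on how long the policy remains active with $\hat\theta_{\textnormal{ML}}=\theta''$ — which is precisely the stopping-time lower bound being proved; in the paper this conversion is Lemma~\ref{lem:3}, whose proof uses Lemma~\ref{lem:2}, so assuming it here is circular. With (i) unavailable, your invocation of Lemma~\ref{lem:1} and your $O(n^{2\epsilon})$ bound on the transient non-null count both collapse; note additionally that Lemma~\ref{lem:1} requires two-sided control $\sum_{x\neq 0}P_{X;\theta'}(x)=\tilde{\Theta}(n^{-\alpha})$ for \emph{all} $\theta'$, and in the paper the matching lower bound again comes out of Lemma~\ref{lem:2} via \eqref{eqn:lower_bound_alpha}.

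The paper's proof sidesteps all of this with a crossing argument that needs no scaling of the control distributions and never touches Lemma~\ref{lem:1}. It works with $\min_{\theta''\neq\theta}A_{\theta,\theta''}$ and uses the fact that positivity of this minimum is equivalent to correctness of the ML estimate under assumption \eqref{eqn:control_policy_assumption_1}: if the minimum exceeds $\sqrt{n}\kappa\gamma$ by time $t\leq t^\star$, there is a last time $\ell$ at which it was below an $O(1)$ level, and on $[\ell,t]$ the ML estimate equals $\theta$ throughout, so on that window the actions are drawn from $P_{X;\theta}$ (change of measure as in \eqref{eqn:4_A_2_2}--\eqref{eqn:4_A_2_5}); Bernstein and Freedman then show the climb of $\Omega(\sqrt{n})$ above the drift $\leq\kappa\zeta\sqrt{n}\gamma$ has probability $e^{-\Omega(\sqrt{n})}$, and a union over the polynomially many pairs $(\ell,t)$ finishes. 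If you want to salvage your route, you would have to replace (i) by an argument of this kind that confines the analysis to windows where the ML estimate is known to be $\theta$, rather than trying to control the behavior of $P_{X;\theta''}$ for wrong $\theta''$.
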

\begin{proof}
    The details of the proof are shown in Appendix~\ref{apx:prof_lem6}. The overall idea of the proof is as follows. From \eqref{eqn:conv_lowerbound_llr}, we know that the likelihood ratio $A_{\theta,\theta'}(\tau)$ at the stopping time is greater than $\sqrt{n}\kappa \gamma$ for any $0<\kappa<1$, $\theta'\neq \theta$ when the true hypothesis is $\theta$. Moreover, we  can always find a time index $\ell \leq \tau$ such that $A_{\theta,\theta'}(\ell)<O_{n\rightarrow\infty}(1)$ and $A_{\theta,\theta'}(t)>0$ for all $t\geq \ell$, implying that $X_{t}$ is generated from the distribution $P_{X;\theta}$ for all $t\geq \ell$. We can then show by Freedman's inequality on martingales that the amount of time required for the likelihood ratio $A_{\theta,\theta'}(t)$ to reach $\sqrt{n}\kappa\gamma$ from the value $A_{\theta,\theta'}(\ell)$ is at least $\frac{ (1-\delta)\kappa\gamma \sqrt{n} }{\sum_{x}P_{X;\theta}(x)\mathbb{D}(\nu_{\theta}^x\Vert \nu_{\theta'}^x)} $ for any $\delta>0$ with high probability. Then, by redefining $(1-\delta)\kappa$ as $\rho$, we complete the proof. 
\end{proof}
For any $\theta\in\Theta$ and $\theta'\neq\theta$, we define 
\begin{align}
    \Tilde{n}_{\theta,\theta'} \triangleq \frac{ \rho\gamma \sqrt{n} }{\sum_{x}P_{X;\theta}(x)\mathbb{D}(\nu_{\theta}^x\Vert \nu_{\theta'}^x)}.
    \label{eqn:n_tilde}
\end{align}
We know that $\Tilde{n}_{\theta,\theta'}\leq n$ for any $\theta\in\Theta$ and $\theta'\neq\theta$ by the stopping time constraint and Lemma~\ref{lem:2}. This also implies that 
\begin{align}
    \sum_{x\neq 0}P_{X;\theta}(x) \geq \frac{\rho \gamma}{\sqrt{n} \sum_{x\neq 0}\Bar{P}_{X;\theta}(x)\mathbb{D}(\nu_{\theta}^x\Vert \nu_{\theta'}^x)}
    \label{eqn:lower_bound_alpha}
\end{align}
for all $0<\rho<1$ for any $\theta\in\Theta$ and $\theta'\neq \theta$ if the exponent $\gamma$ is achievable by rearranging \eqref{eqn:n_tilde}, where for any $\theta\in\Theta$ and $x\in\mathcal{X}\setminus\{0\}$
\begin{align}
    \Bar{P}_{X;\theta}(x) \triangleq  \frac{P_{X;\theta}(x)}{\sum_{x\in\mathcal{X}\setminus\{0\}}P_{X;\theta}(x)}. 
\end{align}
Note that we lower bound $\sum_{x\in\mathcal{X}\setminus\{0\}}P_{X;\theta}(x)$ in \eqref{eqn:lower_bound_alpha}, which is the probability of selecting the non-null action when the ML estimate of the hypothesis is $\theta$. One can see that the value of $\sum_{x\in\mathcal{X}\setminus\{0\}}P_{X;\theta}(x)$ can not be too high in order to satisfy the covertness constraint. Our next step is to establish the connection between $P_{X;\theta}$ and the covertness constraint $\eta$. 
Therefore, we proceed by lower bounding the relative entropy $\mathbb{D}(P_{Z^{n};\theta}||(q_{\theta}^{0})^n) $ as follows.
For any $\theta'\neq \theta$, we have
\begin{align}
    \mathbb{D}(P_{Z^{n};\theta}||(q_{\theta}^{0})^{\otimes n}) &\geq \mathbb{D}(P_{Z^{\Tilde{n}_{\theta,\theta'}};\theta}||(q_{\theta}^{0})^{\otimes \Tilde{n}_{\theta,\theta'}})
    \label{eqn:prof_thm2_divergencebound1}\\
    &\geq \Tilde{n}_{\theta,\theta'} \mathbb{D}(\Bar{P}_{Z;\theta}||q_{\theta}^{0})
    \label{eqn:prof_thm2_divergencebound2}, 
\end{align}
where in \eqref{eqn:prof_thm2_divergencebound1} we use monotonicity of the relative entropy and the fact that $n\geq \tilde{n}_{\theta,\theta'}$, and \eqref{eqn:prof_thm2_divergencebound2} follows from \cite[Equation (13)]{Wang2016b}, where for any $z\in\mathcal{Z}$, 
\begin{align*}
    \Bar{P}_{Z;\theta}(z) &\triangleq \frac{1}{\Tilde{n}_{\theta,\theta'}} \sum_{i=1}^{\Tilde{n}_{\theta,\theta'}} P_{Z_i;\theta}(z)\\
    &= \frac{1}{\Tilde{n}_{\theta,\theta'}} \sum_{i=1}^{\Tilde{n}_{\theta,\theta'}} \left(\sum_{\theta'}\mathbb{P}_{\mathcal{V}_{\theta},\pi}(\hat{\theta}_{\textnormal{ML}}(i-1)=\theta',\tau\geq i)\left(\sum_{x\in\mathcal{X}}P_{X;\theta'}(x)q^x_{\theta}(z)\right)+\mathbb{P}_{\mathcal{V}_{\theta},\pi}(\tau<i)q_{\theta}^0(z)\right) 
\end{align*}
is the time averaged distribution on $\mathcal{Z}$. 
Note that we can define another distribution $\widehat{P}_{Z;\theta}$ as
\begin{align*}
    \widehat{P}_{Z;\theta} &\triangleq \frac{1}{\Tilde{n}_{\theta,\theta'}} \sum_{i=1}^{\Tilde{n}_{\theta,\theta'}} \Bigg(\mathbb{P}_{\mathcal{V}_{\theta},\pi}(\hat{\theta}_{\textnormal{ML}}(i-1)=\theta,\tau\geq i)\left(\sum_{x\in\mathcal{X}}P_{X;\theta}(x)q^x_{\theta}(z)\right) \nonumber\\
    &\hspace{6cm}+\mathbb{P}_{\mathcal{V}_{\theta},\pi}(\hat{\theta}_{\textnormal{ML}}(i-1)\neq \theta \textnormal{ or }\tau<i)q_{\theta}^0(z)\Bigg)\\
    &= \left( \widehat{\alpha}\left(\sum_{x\neq 0}\widehat{P}_{X;\theta}(x)q_{\theta}^x(z)\right) + (1-\widehat{\alpha})q_{\theta}^{0}(z)\right),
\end{align*}
where 
\begin{align}
    \widehat{\alpha} \triangleq 
    \frac{1}{\Tilde{n}_{\theta,\theta'}}\sum_{i=1}^{\Tilde{n}_{\theta,\theta'}}\mathbb{P}_{\mathcal{V}_{\theta},\pi}(\hat{\theta}_{\textnormal{ML}}(i-1)=\theta,\tau \geq i) \sum_{x\neq 0}P_{X;\theta}(x),
    \label{eqn:def_alpha_tilde}
\end{align}
and for all $x\in\mathcal{X}\setminus\{0\}$ 
\begin{align}
    \widehat{P}_{X;\theta}(x) &\triangleq \frac{ \frac{1}{\Tilde{n}_{\theta,\theta'}}\sum_{i=1}^{\Tilde{n}_{\theta,\theta'}}\mathbb{P}_{\mathcal{V}_{\theta},\pi}(\hat{\theta}_{\textnormal{ML}}(i-1)=\theta,\tau \geq i) P_{X;\theta}(x)}{\widehat{\alpha}} \\
    &= \frac{P_{X;\theta}(x)}{\sum_{x\in\mathcal{X}\setminus\{0\}}P_{X;\theta}(x)}\\
    &= \Bar{P}_{X;\theta}(x). 
\end{align}
$\widehat{P}_{Z;\theta}$ is therefore
the time averaged distribution on $\mathcal{Z}$ when the control policy is given by
\begin{align}
    P_{X_t|X^{t-1},Y^{t-1}}(x) &= \mathbf{1}(\hat{\theta}_{\textnormal{ML}}(t-1)=\theta,\tau\geq i)P_{X;\theta}(x) \\
    &+ \left(1-\mathbf{1}(\hat{\theta}_{\textnormal{ML}}(t-1)=\theta,\tau\geq i)\right) \mathbf{1}(x=0),
\end{align}
which is the control policy that determines the action from the non-trivial policy $P_{X;\theta}(x)$ only when $\hat{\theta}_{\textnormal{ML}}(t-1)=\theta$ and the decision making policy does not stop, so that 
\begin{align}
    \mathbb{D}(\Bar{P}_{Z;\theta}||q_{\theta}^{0}) \geq \mathbb{D}(\widehat{P}_{Z;\theta}||q_{\theta}^{0}). 
    \label{eqn:prof_thm2_divergencebound3}
\end{align}
Moreover, the following lemma proved in Appendix~\ref{apx:proof_lem3} implies that $\Tilde{n}_{\theta,\theta'}=\Omega(n)$ for all $\theta\in\Theta$ and $\theta'\neq \theta$. 
\begin{lemma} 
    It holds that $\sum_{x\neq 0}P_{X;\theta}(x)=O_{n\rightarrow \infty}(n^{-1/2})$ for any $\theta\in\Theta$. 
    \label{lem:3}
\end{lemma}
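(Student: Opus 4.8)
The plan is to push the covertness constraint through the chain \eqref{eqn:prof_thm2_divergencebound1}--\eqref{eqn:prof_thm2_divergencebound3} already in place, which for every $\theta'\neq\theta$ yields $\mathbb{D}(P_{Z^n;\theta}\Vert(q_\theta^0)^{\otimes n})\geq\Tilde{n}_{\theta,\theta'}\,\mathbb{D}(\widehat{P}_{Z;\theta}\Vert q_\theta^0)$, and to convert it into a bound on $\beta_n\triangleq\sum_{x\neq0}P_{X;\theta}(x)$. Write $\xi_n\triangleq\frac{1}{\Tilde{n}_{\theta,\theta'}}\sum_{i=1}^{\Tilde{n}_{\theta,\theta'}}\mathbb{P}_{\mathcal{V}_\theta,\pi}(\hat{\theta}_{\textnormal{ML}}(i-1)=\theta,\tau\geq i)\in[0,1]$, so that $\widehat{\alpha}=\xi_n\beta_n$ by \eqref{eqn:def_alpha_tilde}, and note from \eqref{eqn:n_tilde}, together with $\mathbb{D}(\nu_\theta^0\Vert\nu_{\theta'}^0)=0$, that $\Tilde{n}_{\theta,\theta'}\,\beta_n=\rho\gamma\sqrt{n}\big/\big(\sum_{x\neq0}\Bar{P}_{X;\theta}(x)\mathbb{D}(\nu_\theta^x\Vert\nu_{\theta'}^x)\big)$. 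The whole argument then reduces to showing that $\xi_n$ is bounded away from $0$ for all $n$ large.

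Granting that, the rest is short. Writing $\widehat{P}_{Z;\theta}=\widehat{\alpha}\,\Bar{q}_\theta+(1-\widehat{\alpha})q_\theta^0$ with $\Bar{q}_\theta\triangleq\sum_{x\neq0}\Bar{P}_{X;\theta}(x)q_\theta^x$, one has $\mathbb{V}(\widehat{P}_{Z;\theta}\Vert q_\theta^0)=\widehat{\alpha}\,\mathbb{V}(\Bar{q}_\theta\Vert q_\theta^0)$, so Pinsker's inequality (as used in \eqref{eqn:rmk_1_1}) gives $\mathbb{D}(\widehat{P}_{Z;\theta}\Vert q_\theta^0)\geq\widehat{\alpha}^2\,\mathbb{V}(\Bar{q}_\theta\Vert q_\theta^0)^2\geq v_\theta^2\,\widehat{\alpha}^2$, where $v_\theta\triangleq\min_{\Bar{P}_X\in\mathcal{P}_{\mathcal{X}\setminus\{0\}}}\mathbb{V}\big(\sum_{x\neq0}\Bar{P}_X(x)q_\theta^x\,\big\Vert\,q_\theta^0\big)>0$; positivity holds because $\mathcal{P}_{\mathcal{X}\setminus\{0\}}$ is compact, $\mathbb{V}$ is continuous, and, by the standing hypothesis of Theorem~\ref{thm:main_result2}, no such mixture equals $q_\theta^0$. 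Combining with the covertness constraint $\limsup_{n\rightarrow\infty}\mathbb{D}(P_{Z^n;\theta}\Vert(q_\theta^0)^{\otimes n})\leq\eta$ and the identity for $\Tilde{n}_{\theta,\theta'}\beta_n$, I obtain $\eta+o(1)\geq v_\theta^2\,\xi_n^2\,\Tilde{n}_{\theta,\theta'}\,\beta_n^2=\frac{v_\theta^2\rho\gamma\,\xi_n^2}{\sum_{x\neq0}\Bar{P}_{X;\theta}(x)\mathbb{D}(\nu_\theta^x\Vert\nu_{\theta'}^x)}\,\sqrt{n}\,\beta_n$. Since $\sum_{x\neq0}\Bar{P}_{X;\theta}(x)\mathbb{D}(\nu_\theta^x\Vert\nu_{\theta'}^x)\leq\max_{x\neq0}\mathbb{D}(\nu_\theta^x\Vert\nu_{\theta'}^x)<\infty$ and $\gamma>0$, a lower bound $\xi_n\geq\xi_0>0$ rearranges this to $\sqrt{n}\,\beta_n=O_{n\rightarrow\infty}(1)$, i.e.\ $\beta_n=O_{n\rightarrow\infty}(n^{-1/2})$.

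To lower bound $\xi_n$, I would use $\mathbb{P}_{\mathcal{V}_\theta,\pi}(\hat{\theta}_{\textnormal{ML}}(i-1)=\theta,\tau\geq i)\geq1-\mathbb{P}_{\mathcal{V}_\theta,\pi}(\tau<i)-\mathbb{P}_{\mathcal{V}_\theta,\pi}(\hat{\theta}_{\textnormal{ML}}(i-1)\neq\theta,\tau\geq i)$ and average over $i\leq\Tilde{n}_{\theta,\theta'}$. For the stopping part, since $\Tilde{n}_{\theta,\theta'}$ has precisely the form appearing in Lemma~\ref{lem:2} with $0<\rho<1$, that lemma gives $\mathbb{P}_{\mathcal{V}_\theta,\pi}(\tau<\Tilde{n}_{\theta,\theta'})\rightarrow0$, hence $\frac{1}{\Tilde{n}_{\theta,\theta'}}\sum_{i\leq\Tilde{n}_{\theta,\theta'}}\mathbb{P}_{\mathcal{V}_\theta,\pi}(\tau<i)=o(1)$. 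For the estimation part, couple $\pi$ with its dummy policy $\Tilde{\pi}$ (they agree until the stopping rule fires), so that $\mathbb{P}_{\mathcal{V}_\theta,\pi}(\hat{\theta}_{\textnormal{ML}}(i-1)\neq\theta,\tau\geq i)\leq\mathbb{P}_{\mathcal{V}_\theta,\Tilde{\pi}}(\hat{\theta}_{\textnormal{ML}}(i-1)\neq\theta)\leq\mathbb{P}_{\mathcal{V}_\theta,\Tilde{\pi}}(N_\theta\geq i-1)$; splitting the average at a burn-in index $i_0$ leaves at most $i_0/\Tilde{n}_{\theta,\theta'}+\mathbb{P}_{\mathcal{V}_\theta,\Tilde{\pi}}(N_\theta\geq i_0)$, and one wants $i_0$ small enough that $i_0/\Tilde{n}_{\theta,\theta'}=o(1)$ (recall $\Tilde{n}_{\theta,\theta'}=\Theta(\sqrt{n}/\beta_n)$) yet large enough, relative to the reciprocal of the probing rate, that $\mathbb{P}_{\mathcal{V}_\theta,\Tilde{\pi}}(N_\theta\geq i_0)=o(1)$ via the Freedman-type argument behind Lemma~\ref{lem:1} applied to the martingale part of the log-likelihood ratios $A_{\theta,\theta''}(\cdot)$. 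The already-established lower bound \eqref{eqn:lower_bound_alpha}, namely $\sum_{x\neq0}P_{X;\theta'}(x)=\Omega(n^{-1/2})$ uniformly over $\theta'\in\Theta$, is what makes such a choice of $i_0$ feasible, after which $\xi_n\rightarrow1$.

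The hard part is exactly this burn-in control. With a constant probing rate the maximum-likelihood estimate would lock onto $\theta$ in $O(1)$ steps, but here the probing rate is only $\Theta(n^{-1/2})$, so the window over which the log-likelihood ratios must be tracked is super-polynomial in that reciprocal rate; moreover, because the control distribution switches with $\hat{\theta}_{\textnormal{ML}}$, the estimate could in principle linger at a wrong hypothesis whose own probing rate is small, and ruling that out cleanly likely needs either a short bootstrap (first a crude polynomial upper bound on $\sum_{x\neq0}P_{X;\theta'}(x)$, then the present argument) or a careful reuse of \eqref{eqn:lower_bound_alpha} and \eqref{eqn:conv_lowerbound_llr} to bound each such sojourn. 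Everything else — the decompositions \eqref{eqn:def_alpha_tilde}--\eqref{eqn:prof_thm2_divergencebound3}, Pinsker, and the rearrangement above — is routine bookkeeping.
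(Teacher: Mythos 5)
Your reduction is sound as algebra (the identities $\widehat{\alpha}=\xi_n\beta_n$, $\tilde{n}_{\theta,\theta'}\beta_n=\rho\gamma\sqrt{n}\big/\sum_{x\neq 0}\Bar{P}_{X;\theta}(x)\mathbb{D}(\nu_\theta^x\Vert\nu_{\theta'}^x)$, and $\mathbb{V}(\widehat{P}_{Z;\theta}\Vert q_\theta^0)=\widehat{\alpha}\,\mathbb{V}(\Bar{q}_\theta\Vert q_\theta^0)$ are all correct, and using Pinsker instead of the $\chi_2$ bound of \cite[Lemma 1]{Bloch2016} only costs constants), but the proposal stops exactly where the lemma's actual content lies: you never prove that $\xi_n$ is bounded away from zero. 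The route you sketch for it — a Lemma~\ref{lem:1}-type Freedman burn-in so that $\mathbb{P}_{\mathcal{V}_\theta,\Tilde{\pi}}(N_\theta\geq i_0)=o(1)$ with $i_0=o(\tilde{n}_{\theta,\theta'})$ — requires an a priori \emph{upper} bound on the probing rates $\sum_{x\neq0}P_{X;\theta'}(x)$ of polynomial order (that is the hypothesis of Lemma~\ref{lem:1}); the only thing available at this point is the lower bound \eqref{eqn:lower_bound_alpha}, and an upper bound is precisely what Lemma~\ref{lem:3} is supposed to deliver. You flag this circularity yourself ("a short bootstrap \dots or a careful reuse"), but neither escape is carried out, so the proof is incomplete at its one non-routine step.

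The paper dissolves this circularity with a different structure: a proof by contradiction. It assumes $\sum_{x\neq0}P_{X;\theta}(x)=\Tilde{\Theta}_{n\rightarrow\infty}(n^{-\alpha})$ for some $\alpha<1/2$; under this hypothesis the probing rate is pinned down, so Lemma~\ref{lem:1} applies with that $\alpha$ and gives $N_\theta\leq n^{\alpha+\epsilon}$ with high probability, while $\tilde{n}_{\theta,\theta'}=\Tilde{\Omega}_{n\rightarrow\infty}(n^{1/2+\alpha})\gg n^{\alpha+\epsilon}$, so the averaged probability (your $\xi_n$) is $1-o_{n\rightarrow\infty}(1)$ and $\widehat{\alpha}\geq C\sum_{x\neq0}P_{X;\theta}(x)$. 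Feeding this into \eqref{eqn:prof_thm2_divergencebound2}--\eqref{eqn:prof_thm2_divergencebound3} and \cite[Lemma 1]{Bloch2016} yields $\mathbb{D}(P_{Z^n;\theta}\Vert(q_\theta^0)^{\otimes n})\geq\Omega_{n\rightarrow\infty}\big(n^{1/2}\sum_{x\neq0}P_{X;\theta}(x)\big)=\Tilde{\Omega}_{n\rightarrow\infty}(n^{1/2-\alpha})=\omega_{n\rightarrow\infty}(1)$, contradicting covertness. In other words, the paper's contradiction hypothesis supplies exactly the rate information your burn-in argument is missing; without that device (or some bootstrap you would have to actually execute), your argument cannot close.
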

From Lemma~\ref{lem:2}, we know that 
$\sum_{x\neq 0}P_{X;\theta}(x) = \Omega_{n\rightarrow \infty}(n^{-1/2})$ for any $\theta\in\Theta$, otherwise, the policy fails to meet the stopping time constraint, i.e., $\Tilde{n}_{\theta,\theta'}$ becomes $\omega(n)$. Therefore, $\sum_{x\neq 0}P_{X;\theta}(x) = \Theta(n^{-1/2})$. 
By our assumption on $P_{X_t|X^{t-1},Y^{t-1}}$ and the fact $\sum_{x\neq 0}P_{X;\theta}(x) = \Theta_{n\rightarrow \infty}(n^{-1/2})$ for any $\theta\in\Theta$, we can apply Lemma~\ref{lem:1} and claim that $\mathbb{P}_{\mathcal{V}_{\theta},\Tilde{\pi}}\left(N_{\theta} \geq n^{1/2+\epsilon}\right) = O_{n\rightarrow \infty}(n^{-\beta})$ for any $\epsilon>0$ and $\beta>0$. 
Then, for any $\theta$ and $\theta'\neq \theta$, it holds that 
\begin{align}
    &\frac{1}{\Tilde{n}_{\theta,\theta'}}\sum_{i=1}^{\Tilde{n}_{\theta,\theta'}}\mathbb{P}_{\mathcal{V}_{\theta},\pi}(\hat{\theta}_{\textnormal{ML}}(i-1)=\theta,\tau \geq i) \nonumber\\
    &\quad \geq 1 - \frac{1}{\Tilde{n}_{\theta,\theta'}}\sum_{i=1}^{\Tilde{n}_{\theta,\theta'}} \left(\mathbb{P}_{\mathcal{V}_{\theta},\pi}(\hat{\theta}_{\textnormal{ML}}(i-1)\neq \theta,\tau \geq i) + \mathbb{P}_{\mathcal{V}_{\theta},\pi}(\tau< i)\right)\\
    &\quad \geq 1 - \frac{1}{\Tilde{n}_{\theta,\theta'}}\left(\sum_{i=1}^{n^{1/2+\epsilon}} 1  + \sum_{i=n^{1/2+\epsilon}+1}^{\Tilde{n}_{\theta,\theta'}} \mathbb{P}_{\mathcal{V}_{\theta},\Tilde{\pi}}\left(N_{\theta} \geq i-1 \right)\right) - o_{n\rightarrow \infty}(1) \label{eqn:conv_5}\\
    &\quad \geq 1 - o_{n\rightarrow \infty}(1) \label{eqn:conv_6}, 
\end{align}
where \eqref{eqn:conv_5} follows from Lemma~\ref{lem:2} and \eqref{eqn:conv_6} follows from  the fact that $\mathbb{P}_{\mathcal{V}_{\theta},\Tilde{\pi}}\left(N_{\theta} \geq i \right) = O_{n\rightarrow \infty}(n^{-\beta})$ for any $i\geq n^{1/2+\epsilon}$, $\epsilon>0$ and $\beta>0$. 
Equation \eqref{eqn:conv_6} and the definition of $\widehat{\alpha}$ imply that 
\begin{align}
     \widehat{\alpha} \geq \left(\sum_{x\neq 0}P_{X;\theta}(x)\right)(1-o_{n\rightarrow \infty}(1)).
     \label{eqn:prof_thm2_alpha_tilede_aprox}
\end{align}
Then, by plugging \eqref{eqn:prof_thm2_alpha_tilede_aprox} in the covertness constraint \eqref{eqn:prof_thm2_divergencebound2}, we obtain
\begin{align}
    \eta &\geq \Tilde{n}_{\theta,\theta'} \mathbb{D}(\widehat{P}_{Z;\theta}||q_{\theta}^{0}) \\
    &\geq \Tilde{n}_{\theta,\theta'} \frac{\widehat{\alpha}^2}{2} \chi_2\left(\sum_{x\neq 0}\widehat{P}_{X;\theta}(x)q_{\theta}^x(z)\middle\Vert q_{\theta}^0\right)(1-o_{n\rightarrow \infty}(1)) 
    \label{enq:prof_thm2_bound_relative_entropy_1}\\
    &\geq \left(\frac{ \rho\gamma \sqrt{n} }{\sum_{x}P_{X;\theta}(x)\mathbb{D}(\nu_{\theta}^x\Vert \nu_{\theta'}^x)}\right)\frac{\left(\sum_{x\neq 0}P_{X;\theta}(x)\right)^2}{2}\chi_2\left(\sum_{x\neq 0}\widehat{P}_{X;\theta}(x)q_{\theta}^x(z)\middle\Vert q_{\theta}^0\right)(1-o_{n\rightarrow \infty}(1))
    \label{enq:prof_thm2_bound_relative_entropy_2}\\
    &\geq \frac{\rho \gamma \sqrt{n} \left(\sum_{x\neq 0}P_{X;\theta}(x)\right)\chi_2\left(\sum_{x\neq 0}\Bar{P}_{X;\theta}(x)q_{\theta}^x(z)\middle\Vert q_{\theta}^0\right)}{2\sum_{x\neq 0}\Bar{P}_{X;\theta}(x)\mathbb{D}(\nu_{\theta}^x\Vert \nu_{\theta'}^x)}(1-o_{n\rightarrow \infty}(1))
    \label{enq:prof_thm2_bound_relative_entropy_3}\\
    &\geq \frac{\rho^2 \gamma^2 \chi_2\left(\sum_{x\neq 0}\Bar{P}_{X;\theta}(x)q_{\theta}^x(z)\middle\Vert q_{\theta}^0\right)}{2\left(\sum_{x\neq 0}\Bar{P}_{X;\theta}(x)\mathbb{D}(\nu_{\theta}^x\Vert \nu_{\theta'}^x)\right)^2}(1-o_{n\rightarrow \infty}(1))
    \label{enq:prof_thm2_bound_relative_entropy_4}
\end{align}
for any $\theta\in\Theta$ and $\theta'\neq \theta$, where \eqref{enq:prof_thm2_bound_relative_entropy_1} follows from \cite[Lemma~1]{Bloch2016}, \eqref{enq:prof_thm2_bound_relative_entropy_2} follows from \eqref{eqn:prof_thm2_alpha_tilede_aprox} and the definition of $\Tilde{n}_{\theta,\theta'}$, \eqref{enq:prof_thm2_bound_relative_entropy_3} follows from the definition of $\Bar{P}_{X;\theta}$, and in \eqref{enq:prof_thm2_bound_relative_entropy_4} we use \eqref{eqn:lower_bound_alpha}.  So, by taking $n$ arbitrarily large and $\rho$ arbitrarily close to $1$, we have  
\begin{align}
    \gamma \leq \sqrt{2\eta} \frac{ \sum_{x\neq 0}\Bar{P}_{X;\theta}(x)\mathbb{D}(\nu_{\theta}^x\Vert \nu_{\theta'}^x)}{\sqrt{\chi_2\left(\sum_{x\neq 0}\Bar{P}_{X;\theta}(x)q_{\theta}^x(z)\middle\Vert q_{\theta}^0\right)}} . 
\end{align}
for any $\theta\in\Theta$ and $\theta'\neq \theta$ for the specific set of $\{P_{X;\theta}\}_{\theta\in\Theta}$. 
By taking the maximum over all $\Bar{P}_{X;\theta}\in\mathcal{P}_{\mathcal{X}\setminus\{0\}}$, we conclude that 
\begin{align}
    \gamma_{1}(\pi) \leq \min_{\theta\in\Theta}\max_{\Bar{P}_{X}\in\mathcal{P}_{\mathcal{X}\setminus\{0\}}}\min_{\theta'\neq \theta} \sqrt{2\eta} \frac{ \sum_{x\neq 0}\Bar{P}_{X}(x)\mathbb{D}(\nu_{\theta}^x\Vert \nu_{\theta'}^x)}{\sqrt{\chi_2\left(\sum_{x\neq 0}\Bar{P}_{X}(x) q_{\theta}^x\middle\Vert q_{\theta}^0\right)}}.
\end{align}

\section{Proof of Theorem~\ref{thm:main_result3}}
\label{sec:prof_thm3}
\subsection{Construction of Policy}
We first specify the policy $\pi=(\phi,\varphi,\psi)$. 
Let $\hat{\mathcal{V}}(t)$ be the estimated bandit of the bandit $\mathcal{V}$ at the time $t$, i.e., $\hat{\mathcal{V}}(t)=\{\hat{\nu}_x(t)\}_{x\in\mathcal{X}}$, where $\hat{\nu}_x(t)$ is the Gaussian distribution with mean 
\begin{align}
    \mu(\hat{\nu}_x(t)) = \frac{1}{T_x(t)}\sum_{\ell=1}^t Y_{\ell} \mathbf{1}(X_{\ell} = x),
\end{align}
for all $x\in\mathcal{X}$, where 
\begin{align}
    T_x(t) = \sum_{\ell=1}^t \mathbf{1}(X_{\ell}=x)
\end{align}
is the number of pulls on the arm $x\in\mathcal{X}$. We assume that the variance is $1$ and is known, and hence, 
\begin{align*}
     \hat{\nu}_x(t) = \mathcal{N}\left( \mu(\hat{\nu}_x(t)),1\right),
\end{align*}
where $\mathcal{N}(a,b)$ is the Gaussian distribution with $a$ and $b$ the value of the mean and the variance, respectively. 
We also denote by 
$\mathcal{E}_{\text{Alt}}(\hat{\mathcal{V}}(t)) \triangleq \{\mathcal{V}'\in\mathcal{E}_{\mathcal{N}}: x^*(\hat{\mathcal{V}}(t))\cap x^{*}(\mathcal{V}')=\emptyset\}$ the set of bandits whose best arm is different from the estimated bandit $\hat{\mathcal{V}}(t)$. 
Then, we define 
\begin{align}
    R_t \triangleq \inf_{\mathcal{V}'\in\mathcal{E}_{\text{Alt}}(\hat{\mathcal{V}}(t))} \sum_{x\in\mathcal{X}\setminus\{0\}} T_x(t) \mathbb{D}(\hat{\nu}_x(t)||\nu_x'). 
\end{align}
The policy $\pi$ is then constructed through the following steps. 
\subsubsection{Stopping Rule $\phi$}
The stopping time is defined as 
\begin{align}
    \tau = \inf\left\{t: R_t> \Gamma_t\right\}, 
\end{align}
where 
\begin{align}
    \Gamma_t = K \log\left((T_{\mathcal{X}\setminus\{0\}}(t))^2 + T_{\mathcal{X}\setminus\{0\}}(t) \right) + f^{-1}(\delta),
\end{align}
\begin{align}
    T_{\mathcal{X}\setminus\{0\}}(t) \triangleq  \sum_{\ell=1}^t \mathbf{1}(X_{\ell}\neq 0)
\end{align}
is the number of pulls on effective arms up to the time $t$, and 
for all $a>0$, 
\begin{align}
    f(a) \triangleq \exp(K-a)\left(\frac{a}{K}\right)^K.  
\end{align}
Note that $f(a)$ is an exponentially decreasing function of the input $a>0$, and this makes $f^{-1}(\delta)= |\log \delta|(1+o_{\delta\rightarrow 0}(1))$, which grows with speed $|\log \delta|$ when $\delta$ is shrinking. 
\subsubsection{Control Policy $\varphi$}
It is known that for the best arm identification algorithm to work properly, each arm should be chosen enough number of times. There are several ways to deal with this. The most straightforward one is introducing an initial phase, in which each arm is picked uniformly. In this work, we consider a different method named $C$-tracking in \cite{Gariver2016}. The idea is to ensure that the control policy at each time $t$ has a non-zero probability in choosing all arms in $\mathcal{X}\setminus\{0\}$. 
Specifically, 
fix any $\zeta>0$, we define 
\begin{align}
    \mathcal{P}_{\mathcal{X}\setminus\{0\}}^{\zeta} \triangleq \{\Bar{P}\in\mathcal{P}_{\mathcal{X}\setminus\{0\}}:\Bar{P}(x) \geq \zeta \textnormal{ for all }x\in\mathcal{X}\setminus\{0\}\}
\end{align}
so that for any $\Bar{P}\in \mathcal{P}_{\mathcal{X}\setminus\{0\}}^{\zeta}$, $\Bar{P}(x)$ is non-zero for all $x\in\mathcal{X}\setminus\{0\}$. 
For any bandit $\mathcal{V}'=\{\nu'_x\}_{x\in\mathcal{X}}$, we define 
\begin{align}
    \Bar{P}_{\mathcal{V}'}^{\zeta} = \argmax_{\Bar{P}\in\mathcal{P}_{\mathcal{X}\setminus \{0\}}^{\zeta}} \frac{\inf_{\mathcal{V}''\in \mathcal{E}_{\text{Alt}}(\mathcal{V}')}\sum_{x\in\mathcal{X}\setminus\{0\}} \Bar{P}(x)\mathbb{D}(\nu_x'||\nu_x'') }{\sqrt{\chi_2\left(\sum_{x\in\mathcal{X}\setminus\{0\}}\Bar{P}(k) q_x||q_0\right)}}
    \label{eqn:def_of_P_theta_bandit}
\end{align}
and 
\begin{align}
    \alpha_{\mathcal{V}'}^{\zeta} = \frac{2\eta }{\chi_2\left(\sum_{x\in\mathcal{X}\setminus\{0\}}\Bar{P}_{\mathcal{V}'}^{\zeta}(x)q_x\middle\Vert q_0\right)} \times \frac{\inf_{\mathcal{V}''\in \mathcal{E}_{\text{Alt}}(\mathcal{V}')}\sum_{x\in\mathcal{X}\setminus\{0\}} \Bar{P}_{\mathcal{V}'}^{\zeta}(x)\mathbb{D}(\nu_x'||\nu_x'') }{|\log \delta|}.
    \label{eqn:def_of_alpha_bandit}
\end{align}
Moreover, for any bandit $\mathcal{V}'$, we also define the distribution $P_{\mathcal{V}'}^{\zeta}\in\mathcal{P}_{\mathcal{X}}^{\zeta}$ as 
\begin{align}
    P_{\mathcal{V}'}^{\zeta}(x) = \begin{cases}
        \alpha_{\mathcal{V}'}^{\zeta}\Bar{P}_{\mathcal{V}'}^{\zeta}(x) \quad \textnormal{if }x \neq 0\\
        1 - \alpha_{\mathcal{V}'}^{\zeta} \quad \textnormal{if }x = 0. 
    \end{cases}
\end{align}
Then, the control policy is given by 
\begin{align}
    P_{X_t|Y^{t-1},X^{t-1}}(x) = 
    \begin{cases}
        P_{\hat{\mathcal{V}}(t-1)}^{\zeta}(x) \textnormal{ if }S_{i-1} \neq \textnormal{stop}\\
        \mathbf{1}(x=0) \quad \textnormal{if } S_{i-1}=\textnormal{stop} 
    \end{cases}
\end{align}
It can be observed effective arms are chosen 
with the probability $\Theta_{\delta\rightarrow 0}(1/|\log \delta|)$ when the policy does not stop. 
\subsubsection{Final Decision Rule $\psi$}
The estimated best arm is given by $x^*(\hat{\mathcal{V}}(\tau))$. 

Similar to what we have done in Section~\ref{sec:prof_thm1}, 
for any policy $\pi$ satisfying \eqref{eqn:control_policy_assumption_3}, 
we also define its corresponding dummy policy as $\tilde{\pi}$, which never stops and has the control policy 
\begin{align}
    P_{X_t|Y^{t-1},X^{t-1}}(x) = P_{\hat{\mathcal{V}}(t-1)}^{}(x)
    \label{eqn:dummy_control_policy2}
\end{align}
for all $x$ and $t\in\mathbb{N}^+$. 

\subsection{Analysis of the Policy}
\subsubsection{Stopping time $\tau$}

Fix any $\zeta>0$ and any $\epsilon>0$, we first define 
\begin{align}
    N_{\mathcal{V}}(\epsilon) &\triangleq \sup\left\{t: \left\Vert\hat{\mathcal{V}}(t)-\mathcal{V}\right\Vert_{\infty} > \epsilon \textnormal{ or } \left\Vert\Bar{P}_{\hat{\mathcal{V}}(t)}^{\zeta} -\Bar{P}_{\mathcal{V}}^{\zeta} \right\Vert_{\infty} > \epsilon \textnormal{ or } \left\vert\frac{\alpha_{\hat{\mathcal{V}}(t)}^{\zeta}}{\alpha_{\mathcal{V}}^{\zeta}}-1\right\vert>\epsilon\right\}. 
\end{align}
The following lemma proved in Appendix~\ref{apx:D} shows that $N_{\mathcal{V}}(\epsilon)$ is upper bounded by $|\log\delta|^{1+\gamma}$ with high probability for any $\epsilon>0$ and $\gamma>0$ under the dummy policy $\Tilde{\pi}$. 

\begin{lemma}
    Let $\pi$ be a policy satisfying \eqref{eqn:control_policy_assumption_4}, and \eqref{eqn:control_policy_assumption_5} is satisfied with some $0<\alpha<1$. 
    Then, for all $\epsilon>0$ and any $\gamma>0$, it holds that 
    \begin{align}
        \mathbb{P}_{\mathcal{V},\Tilde{\pi}}\left( N_{\mathcal{V}}(\epsilon) \geq |\log \delta|^{\alpha+\gamma}\right) = O_{\delta \rightarrow 0}(|\log \delta|^{-\beta})
    \end{align}  
    for arbitrarily large $\beta>0$, where $\Tilde{\pi}$ is the dummy policy of $\pi$ defined in \eqref{eqn:dummy_control_policy2}.  
    \label{lem:bandit_0}
\end{lemma}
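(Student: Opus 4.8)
The plan is to mirror the proof of Lemma~\ref{lem:1}: reduce the compound event in the definition of $N_{\mathcal{V}}(\epsilon)$ to a pure concentration statement for the empirical means $\mu(\hat\nu_x(t))$, and then control that concentration with a martingale argument that correctly accounts for the fact that, by time $t$, each effective arm $x\in\mathcal{X}\setminus\{0\}$ has been pulled only $\Tilde{\Theta}_{\delta\rightarrow0}(t\,|\log\delta|^{-\alpha})$ times. Write $m\triangleq|\log\delta|^{\alpha+\gamma}$. The first step is a \emph{reduction via continuity}: I would argue that the maps $\mathcal{V}'\mapsto\Bar{P}^{\zeta}_{\mathcal{V}'}$ and $\mathcal{V}'\mapsto\alpha^{\zeta}_{\mathcal{V}'}/\alpha^{\zeta}_{\mathcal{V}}$ defined through \eqref{eqn:def_of_P_theta_bandit}--\eqref{eqn:def_of_alpha_bandit} are continuous at $\mathcal{V}$, \textbf{uniformly in $\delta$} --- the factor $|\log\delta|^{-1}$ in \eqref{eqn:def_of_alpha_bandit} cancels in the ratio, so these objects depend on $\delta$ only through $\mathcal{E}_{\text{Alt}}$, which is $\delta$-free. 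Continuity of $\alpha^{\zeta}_{\cdot}/\alpha^{\zeta}_{\mathcal{V}}$ and of the value of the maximization in \eqref{eqn:def_of_P_theta_bandit} follows from continuity of $\chi_2(\cdot\Vert q_0)$ and of $\mathcal{V}'\mapsto\mathcal{E}_{\text{Alt}}(\mathcal{V}')$; continuity of the maximizer $\Bar{P}^{\zeta}_{\mathcal{V}'}$ comes from a maximum-theorem argument using compactness and convexity of $\mathcal{P}^{\zeta}_{\mathcal{X}\setminus\{0\}}$ together with uniqueness of the optimizer for a Gaussian bandit (via the explicit form \eqref{eqn:eqn:avg_divergence_expression} of the numerator). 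Granting this, there is $\epsilon'\in(0,\epsilon]$, independent of $\delta$, such that $\Vert\hat{\mathcal{V}}(t)-\mathcal{V}\Vert_{\infty}\le\epsilon'$ forces all three clauses in the definition of $N_{\mathcal{V}}(\epsilon)$ to fail, so
\[
\{N_{\mathcal{V}}(\epsilon)\ge m\}\subseteq\bigcup_{t\ge m}\bigcup_{x\in\mathcal{X}}\bigl\{|\mu(\hat\nu_x(t))-\mu(\nu_x)|>\epsilon'\bigr\},
\]
and a union bound reduces the claim to $\mathbb{P}_{\mathcal{V},\Tilde{\pi}}\bigl(|\mu(\hat\nu_x(t))-\mu(\nu_x)|>\epsilon'\bigr)\le\exp\bigl(-\Tilde{\Omega}_{\delta\rightarrow0}(t\,|\log\delta|^{-\alpha})\bigr)$ for all $t\ge m$ and $x\in\mathcal{X}$, the case $x=0$ being easier since arm $0$ is pulled $\Theta_{\delta\rightarrow0}(t)$ times.

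For the key bound, fix $x\in\mathcal{X}\setminus\{0\}$ and $t\ge m$. By \eqref{eqn:control_policy_assumption_5} the dummy control policy of \eqref{eqn:dummy_control_policy2} puts mass at least $\Tilde{\Omega}_{\delta\rightarrow0}(|\log\delta|^{-\alpha})$ on $x$ at every step, so the compensator $\sum_{\ell=1}^{t}\mathbb{P}_{\mathcal{V},\Tilde{\pi}}(X_\ell=x\mid\mathcal{F}_{\ell-1})$ lies between $c\,t\,|\log\delta|^{-\alpha}$ and $\Tilde{O}_{\delta\rightarrow0}(t\,|\log\delta|^{-\alpha})$ for some $c>0$. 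Applying Lemma~\ref{lem:Freedman} to the bounded martingale $\sum_{\ell=1}^{t}\bigl(\mathbb{P}_{\mathcal{V},\Tilde{\pi}}(X_\ell=x\mid\mathcal{F}_{\ell-1})-\mathbf{1}(X_\ell=x)\bigr)$ (increments bounded by $1$, predictable quadratic variation at most $\Tilde{O}_{\delta\rightarrow0}(t\,|\log\delta|^{-\alpha})$) yields $\mathbb{P}_{\mathcal{V},\Tilde{\pi}}\bigl(T_x(t)<\tfrac{c}{2}t\,|\log\delta|^{-\alpha}\bigr)\le\exp\bigl(-\Tilde{\Omega}_{\delta\rightarrow0}(t\,|\log\delta|^{-\alpha})\bigr)$. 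On the complement $\{T_x(t)\ge\tfrac{c}{2}t\,|\log\delta|^{-\alpha}\}$ I would use the standard coupling in which the rewards of arm $x$ are read off an i.i.d.\ stream $Z^{(x)}_1,Z^{(x)}_2,\dots\sim\mathcal{N}(\mu(\nu_x),1)$, so that $\mu(\hat\nu_x(t))-\mu(\nu_x)=\tfrac{1}{T_x(t)}\sum_{j\le T_x(t)}(Z^{(x)}_j-\mu(\nu_x))$; a Gaussian tail bound together with a union over $n_0\in[\tfrac{c}{2}t\,|\log\delta|^{-\alpha},\,t]$ then gives $\mathbb{P}_{\mathcal{V},\Tilde{\pi}}\bigl(|\mu(\hat\nu_x(t))-\mu(\nu_x)|>\epsilon',\,T_x(t)\ge\tfrac{c}{2}t\,|\log\delta|^{-\alpha}\bigr)\le\Tilde{O}_{\delta\rightarrow0}\bigl(\exp(-\tfrac{(\epsilon')^2c}{4}\,t\,|\log\delta|^{-\alpha})\bigr)$. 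This coupling step replaces a direct application of Freedman's inequality, which is unavailable because the Gaussian reward increments are unbounded.

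Combining the two estimates, every term in the union bound is $\le\exp\bigl(-\Tilde{\Omega}_{\delta\rightarrow0}(t\,|\log\delta|^{-\alpha})\bigr)$; since $t\ge m$ gives $t\,|\log\delta|^{-\alpha}\ge|\log\delta|^{\gamma}$, the geometric-type sum over $t\ge m$ is at most $\Tilde{O}_{\delta\rightarrow0}(|\log\delta|^{\alpha})\exp\bigl(-\Tilde{\Omega}_{\delta\rightarrow0}(|\log\delta|^{\gamma})\bigr)$, which is $o_{\delta\rightarrow0}(|\log\delta|^{-\beta})$ for every $\beta>0$ because $\gamma>0$, and the extra factor $|\mathcal{X}|<\infty$ does not affect this; this proves the lemma. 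The step I expect to be the real obstacle is the uniform continuity used in the reduction --- specifically showing that the \emph{maximizer} $\Bar{P}^{\zeta}_{\mathcal{V}'}$ of the ratio in \eqref{eqn:def_of_P_theta_bandit}, an $\argmax$ whose numerator is itself an infimum over $\mathcal{E}_{\text{Alt}}(\mathcal{V}')$, varies continuously with $\mathcal{V}'$ near $\mathcal{V}$, since $\argmax$ is in general only upper hemicontinuous and one must invoke uniqueness of the optimizer. A secondary nuisance is keeping all the $\Tilde{\Omega}/\Tilde{O}$ constants uniform in $\delta$ while the per-step effective-pull probability itself decays like $|\log\delta|^{-\alpha}$.
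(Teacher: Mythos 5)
Your proposal is correct and follows essentially the same route as the paper's Appendix~\ref{apx:D} proof: reduce via continuity of $\Bar{P}^{\zeta}_{\mathcal{V}'}$ and $\alpha^{\zeta}_{\mathcal{V}'}$ to concentration of the empirical means, union bound over $t\geq|\log\delta|^{\alpha+\gamma}$, control $T_x(t)$ with Freedman's inequality on the counting martingale, and use a sub-Gaussian tail bound for the empirical mean on the event that $T_x(t)$ is large. The only cosmetic differences are that the paper takes the pull-count threshold $t|\log\delta|^{-\alpha-\gamma/2}$ (which absorbs the polylog factors you worry about) and conditions on $T_x(t)=k$ rather than spelling out the i.i.d.-stream coupling, which is the same device the paper itself uses in its confidence analysis.
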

One can also observe that 
\begin{align}
    \Gamma_t \geq f^{-1}(\delta) = \Omega_{\delta\rightarrow 0}(|\log \delta|) 
\end{align}
so that $\Gamma_t > a|\log \delta|$ for some $a>0$ for all $\delta$ small enough. 
By using the fact that the policy only pick effective arms with probability $O_{\delta \rightarrow 0}(1/|\log \delta|)$ for any estimated bandit $\hat{\mathcal{V}}(t)$ for all $t\in\mathbb{N}$, the stopping time $\tau$ is lower bounded by $\Omega_{\delta \rightarrow 0}(|\log \delta|^2)$ with high probability as shown in the following lemma. 
\begin{lemma}
    There exists some $b>0$ such that 
    \begin{align}
        \mathbb{P}_{\mathcal{V},\pi}\left(\tau \leq b|\log \delta|^2\right) = O_{\delta\rightarrow 0}(|\log \delta|^{-\beta})
    \end{align}
    for arbitrarily large $\beta>0$. 
    \label{lem:bandit_1}
\end{lemma}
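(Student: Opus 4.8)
The plan is to prove that, with probability $1-O_{\delta\to0}(|\log\delta|^{-\beta})$, one has $R_t\leq\Gamma_t$ for every $t\leq b|\log\delta|^2$; since $\tau=\inf\{t:R_t>\Gamma_t\}$, this forces $\tau>b|\log\delta|^2$. As in the stopping-time analysis of Theorem~\ref{thm:main_result1}, $\pi$ and its dummy policy $\tilde{\pi}$ of \eqref{eqn:dummy_control_policy2} generate identical trajectories up to and including the first time $R_t$ exceeds $\Gamma_t$, so $\{\tau\leq b|\log\delta|^2\}$ has the same probability under $\mathbb{P}_{\mathcal{V},\pi}$ and $\mathbb{P}_{\mathcal{V},\tilde{\pi}}$, and we may work under $\tilde{\pi}$. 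Two preliminary remarks: first, $\Gamma_t\geq f^{-1}(\delta)\geq\tfrac12|\log\delta|$ for $\delta$ small enough, because $f^{-1}(\delta)=|\log\delta|(1+o_{\delta\to0}(1))$; second, writing $\hat{x}_t:=x^*(\hat{\mathcal{V}}(t))$, the fact that all arms are unit-variance Gaussians together with the computation behind \eqref{eqn:eqn:avg_divergence_expression}, with the weights replaced by the pull counts $T_x(t)$, gives
\begin{align*}
    R_t = \tfrac12 \min_{x\in\mathcal{X}\setminus\{0,\hat{x}_t\}} \frac{T_{\hat{x}_t}(t)\,T_x(t)}{T_{\hat{x}_t}(t)+T_x(t)}\bigl(\mu(\hat{\nu}_{\hat{x}_t}(t))-\mu(\hat{\nu}_x(t))\bigr)^2 ,
\end{align*}
which vanishes unless every non-null arm has been pulled at least once and is otherwise at most $\tfrac12 T_{x_0}(t)\bigl(\mu(\hat{\nu}_{\hat{x}_t}(t))-\mu(\hat{\nu}_{x_0}(t))\bigr)^2$, where $x_0=x_0(t)$ is the least-pulled arm in $\mathcal{X}\setminus\{0,\hat{x}_t\}$ and $T_{x_0}(t)\leq T_{\mathcal{X}\setminus\{0\}}(t)/(|\mathcal{X}|-2)$ by pigeonhole.

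Fix $\epsilon>0$ and $\gamma\in(0,1)$ and set $N_0:=|\log\delta|^{1+\gamma}$, so that $N_0=o_{\delta\to0}(|\log\delta|^2)$. We introduce three high-probability events. Event $E_1:=\{N_{\mathcal{V}}(\epsilon)<N_0\}$: by Lemma~\ref{lem:bandit_0} (here with $\alpha=1$), its complement has probability $O(|\log\delta|^{-\beta})$, and on $E_1$ we have $|\mu(\hat{\nu}_x(t))-\mu(\nu_x)|\leq\epsilon$ for all $x$ and all $t\geq N_0$. Event $E_2$: $T_{\mathcal{X}\setminus\{0\}}(N_0)\leq C_2|\log\delta|^{\gamma}$ and $T_{\mathcal{X}\setminus\{0\}}(b|\log\delta|^2)\leq C_2 b|\log\delta|$ for some $C_2$ independent of $b$; since each step pulls an effective arm with conditional probability $\Theta_{\delta\to0}(|\log\delta|^{-1})$, this follows from Bernstein's inequality, or Freedman's inequality (Lemma~\ref{lem:Freedman}), applied to the martingale $\sum_{i\leq t}\bigl(\mathbf{1}(X_i\neq0)-\mathbb{E}[\mathbf{1}(X_i\neq0)\mid\mathcal{F}_{i-1}]\bigr)$, with complement probability exponentially small in $|\log\delta|^{\gamma}$. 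Event $E_3$: $(\mu(\hat{\nu}_x(t))-\mu(\nu_x))^2\leq c'\log|\log\delta|/T_x(t)$ for all $x$ and all $t\leq b|\log\delta|^2$ with $T_x(t)\geq1$; since $\mu(\hat{\nu}_x(t))$ is an average of $T_x(t)$ i.i.d.\ standard Gaussians, a maximal inequality and a union bound over the values $m\leq b|\log\delta|^2$ of $T_x(t)$ and over $x\in\mathcal{X}$ make its complement $O(|\log\delta|^{-\beta})$ once $c'=c'(\beta)$ is large, the union costing only a factor $O(\log|\log\delta|)$.

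On $E_1\cap E_2\cap E_3$ we bound $R_t$ in two regimes. For $N_0\leq t\leq b|\log\delta|^2$, event $E_1$ gives $\bigl(\mu(\hat{\nu}_{\hat{x}_t}(t))-\mu(\hat{\nu}_{x_0}(t))\bigr)^2\leq(d+2\epsilon)^2$ with $d:=\max_x\mu(\nu_x)-\min_x\mu(\nu_x)$, hence $R_t\leq\tfrac{(d+2\epsilon)^2}{2(|\mathcal{X}|-2)}T_{\mathcal{X}\setminus\{0\}}(t)\leq\tfrac{(d+2\epsilon)^2C_2 b}{2(|\mathcal{X}|-2)}|\log\delta|$. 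For $1\leq t<N_0$, either $R_t=0$, or (all non-null arms pulled) we bound $\bigl(\mu(\hat{\nu}_{\hat{x}_t}(t))-\mu(\hat{\nu}_{x_0}(t))\bigr)^2\leq 3\bigl(\mathrm{dev}_{\hat{x}_t}(t)^2+d^2+\mathrm{dev}_{x_0}(t)^2\bigr)$, with $\mathrm{dev}_x(t):=|\mu(\hat{\nu}_x(t))-\mu(\nu_x)|$, and combine $E_3$ for the deviations ($T_{x_0}(t)\mathrm{dev}_{x_0}(t)^2\leq c'\log|\log\delta|$, and $T_{x_0}(t)\mathrm{dev}_{\hat{x}_t}(t)^2\leq T_{\mathcal{X}\setminus\{0\}}(t)\,c'\log|\log\delta|$ using $T_{\hat{x}_t}(t)\geq1$) with $E_2$ ($T_{\mathcal{X}\setminus\{0\}}(t)\leq T_{\mathcal{X}\setminus\{0\}}(N_0)\leq C_2|\log\delta|^{\gamma}$) to get $R_t=O(|\log\delta|^{\gamma}\log|\log\delta|)=o_{\delta\to0}(|\log\delta|)$. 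Choosing $b$ small enough that $\tfrac{(d+2\epsilon)^2C_2 b}{2(|\mathcal{X}|-2)}<\tfrac12$ then makes $R_t<\tfrac12|\log\delta|\leq f^{-1}(\delta)\leq\Gamma_t$ for all $t\leq b|\log\delta|^2$ and all $\delta$ small, so $\tau>b|\log\delta|^2$ on $E_1\cap E_2\cap E_3$; a union bound over these events and the reduction to $\tilde{\pi}$ complete the proof.

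The step we expect to be the main obstacle is the uniform-in-$t$ control of $R_t$ in the regime $t<N_0$, where $\hat{\mathcal{V}}(t)$ has not yet concentrated around $\mathcal{V}$ and the empirical gaps cannot be bounded by a constant. The key is that $R_t$ vanishes until all non-null arms are sampled and that, once they are, each deviation contribution $T_{x_0}(t)\,\mathrm{dev}_x(t)^2$ is self-normalizing — an average of $T_{x_0}(t)$ standard Gaussians rescaled by $T_{x_0}(t)$ — so the only potentially growing term is $T_{x_0}(t)\,d^2\leq T_{\mathcal{X}\setminus\{0\}}(N_0)\,d^2$, which is $o(|\log\delta|)$ precisely because $\gamma<1$ and the effective-pull rate is $\Theta(|\log\delta|^{-1})$. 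A secondary point is that the leading constant $\tfrac{(d+2\epsilon)^2C_2 b}{2(|\mathcal{X}|-2)}$ must be forced below $f^{-1}(\delta)/|\log\delta|\to1$ by shrinking $b$ alone, which is possible only because $C_2$, $d$ and $|\mathcal{X}|$ do not depend on $b$.
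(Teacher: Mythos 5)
Your proposal is correct, and its engine is the same as the paper's: by time $b|\log\delta|^2$ the number of effective pulls $T_{\mathcal{X}\setminus\{0\}}$ is only $O(b|\log\delta|)$ with probability $1-e^{-\Omega(|\log\delta|)}$ (Freedman applied to $\sum_i(\mathbf{1}(X_i\neq0)-P_{X_i|\cdot}(\mathcal{X}\setminus\{0\}))$, using the $\Theta(1/|\log\delta|)$ effective-pull rate), while $\Gamma_t\geq f^{-1}(\delta)=|\log\delta|(1+o(1))$, so shrinking $b$ keeps $R_t$ below the threshold. Where you differ is in how the per-sample likelihood factor multiplying $T_{\mathcal{X}\setminus\{0\}}(t)$ is controlled. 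The paper simply writes $R_t\leq c\,T_{\mathcal{X}\setminus\{0\}}(t)$, justifying the constant $c$ by asserting that $\mathbb{D}(\hat\nu_x\Vert\nu_x')$ is bounded — an assertion that is not literally true for unit-variance Gaussians with unbounded empirical means, though it is repairable exactly along the lines you follow. You instead bound $R_t\leq\frac12 T_{x_0}(t)\bigl(\mu(\hat\nu_{\hat x_t}(t))-\mu(\hat\nu_{x_0}(t))\bigr)^2$ via the least-pulled competitor (a valid upper bound even without the exact alternative-bandit identity, by moving only $x_0$'s mean above $\hat x_t$'s), and then control the empirical gap by splitting time at $N_0=|\log\delta|^{1+\gamma}$: Lemma~\ref{lem:bandit_0} (applied with $\alpha=1$, as the paper itself does in Section~\ref{sec:prof_thm3}) for $t\geq N_0$, and a self-normalized Gaussian maximal bound plus $T_{\mathcal{X}\setminus\{0\}}(N_0)=O(|\log\delta|^{\gamma})$ for $t<N_0$, also handling the degenerate case of an unpulled arm where $R_t=0$. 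Your version is longer and needs three auxiliary events, but it is uniform in $t\leq b|\log\delta|^2$ and closes the boundedness gloss in the paper's one-line bound; the paper's version is shorter but leans on that unjustified constant. Only cosmetic caveats remain (e.g., the union bound in your event $E_3$ over $m\leq b|\log\delta|^2$ costs a polynomial factor in $|\log\delta|$ rather than $O(\log|\log\delta|)$ unless you peel dyadically, which is harmless since $c'$ can be taken large, and the uniform $\Theta(1/|\log\delta|)$ rate over all estimated bandits is the same assumption the paper invokes), so the argument stands.
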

Lemme~\ref{lem:bandit_1} implies that $\tau =\Omega_{\delta\rightarrow 0}(|\log\delta|^2)$ with high probability. We proceed to analyze the value of $R_t$ for all $t\geq \Omega_{\delta\rightarrow 0}(|\log \delta|^2)$. 
Fix any $\epsilon>0$, whenever $N_{\mathcal{V}}(\epsilon)<|\log \delta|^{1+\gamma}$ for some $0<\gamma<1$, there exists some $\xi_1(\epsilon)$, which has the property that $\lim_{\epsilon\rightarrow 0}\xi_1(\epsilon) = 0$, such that 
\begin{align}
    \inf_{\mathcal{V}'\in\mathcal{E}_{\text{Alt}}(\hat{\mathcal{V}}(t))} \sum_{x\in\mathcal{X}\setminus\{0\}} T_x(t) \mathbb{D}(\hat{\nu}_x(t)||\nu_x') \geq \inf_{\mathcal{V}'\in\mathcal{E}_{\text{Alt}}(\mathcal{V})} \sum_{x\in\mathcal{X}\setminus\{0\}} T_x(t) \mathbb{D}(\nu_x||\nu_x')(1-\xi_1(\epsilon))
    \label{eqn:avg_divergence_continuity}
\end{align}
for all $t\geq \Omega_{\delta\rightarrow 0}(|\log \delta|^2)$
when $\delta$ is sufficiently small, 
where we have used the definition of $N_{\mathcal{V}}(\epsilon)$, the properly that the relative entropy is a continuous function of its input, and the equality in \eqref{eqn:eqn:avg_divergence_expression}.  
Fix any $\zeta>0$. By defining $\tau^*_{\zeta}$ as 
\begin{align}
    \tau^*_{\zeta} = \frac{|\log \delta|}{\inf_{\mathcal{V}'\in \mathcal{E}_{\text{Alt}}(\mathcal{V})}\sum_{x\in\mathcal{X}\setminus\{0\}} \Bar{P}^{\zeta}_{\mathcal{V}}(k)\mathbb{D}(\nu_k||\nu_k')}\times \frac{1}{\alpha^{\zeta}_{\mathcal{V}}},
\end{align}
the probability that $\tau>\tau^*_{\zeta}(1+\epsilon')$ for any $\epsilon'>0$ can be written as follows. 
\begin{align}
    &\mathbb{P}_{\mathcal{V},\pi}\left(\tau > \tau^*_{\zeta}(1+\epsilon') \right) \nonumber\\
    &= \mathbb{P}_{\mathcal{V},\Tilde{\pi}}\left(  R_t < \Gamma_t \textnormal{ for all }1\leq t\leq \tau^*_{\zeta}(1+\epsilon')\right) \\
    &\leq \mathbb{P}_{\mathcal{V},\Tilde{\pi}}\left(  R_{\tau^*_{\zeta}(1+\epsilon')} < \Gamma_{\tau^*_{\zeta}(1+\epsilon')} \right)\\
    &\leq \mathbb{P}_{\mathcal{V},\Tilde{\pi}}\Bigg(  \inf_{\mathcal{V}'\in\mathcal{E}_{\text{Alt}}(\hat{\mathcal{V}}(\tau^*_{\zeta}(1+\epsilon')))} \sum_{x\in\mathcal{X}\setminus\{0\}} T_x(\tau^*_{\zeta}(1+\epsilon')) \mathbb{D}(\hat{\nu}_x(\tau^*_{\zeta}(1+\epsilon'))||\nu_x') <  \Gamma_{\tau^*_{\zeta}(1+\epsilon')}\Bigg)\\
    &\leq \mathbb{P}_{\mathcal{V},\Tilde{\pi}}\Bigg(  \inf_{\mathcal{V}'\in\mathcal{E}_{\text{Alt}}(\mathcal{V})} \sum_{x\in\mathcal{X}\setminus\{0\}} T_x(\tau^*_{\zeta}(1+\epsilon')) \mathbb{D}(\nu_x||\nu_x')(1-\xi_1(\epsilon))<  \Gamma_{\tau^*_{\zeta}(1+\epsilon')}, N_{\mathcal{V}}(\epsilon)<|\log \delta|^{1+\gamma}\Bigg) \nonumber\\
    &\quad+ \mathbb{P}_{\mathcal{V},\Tilde{\pi}}\left( N_{\mathcal{V}}(\epsilon) \geq |\log \delta|^{1+\gamma}\right)
    \label{eqn:sec_6_tau_0}
\end{align}
for some $0<\gamma<1$ when $\delta$ is sufficiently small, where \eqref{eqn:sec_6_tau_0} follows from the law of total probability and the fact that $\tau^*_{\zeta}=\Omega_{\delta\rightarrow 0}(|\log \delta|^2)$ so that there is some $\xi_1(\epsilon)$ satisfying the inequality in \eqref{eqn:avg_divergence_continuity} and the property that $\lim_{\epsilon\rightarrow 0}\xi_1(\epsilon) = 0$. 
For any $x\in\mathcal{X}\setminus\{0\}$, we define 
\begin{align}
    J_t(x) \triangleq \sum_{i=1}^t \mathbf{1}(X_i = x) -  \sum_{i=1}^t P_{X_t|X^{t-1},Y^{t-1}}(x)
\end{align}
for each $t\in\mathbb{N}^+$, so that $\{J_t(x)\}_{t=1}^{\infty}$ is a martingale. 
Then, for any $\epsilon''>0$, it holds that
\begin{align}
    &\mathbb{P}_{\mathcal{V},\Tilde{\pi}}\left( J_{\tau^*_{\zeta}(1+\epsilon')}(x) \leq -\epsilon'' \tau^*_{\zeta}(1+\epsilon')\alpha^{\zeta}_{\mathcal{V}} \Bar{P}^{\zeta}_{\mathcal{V}}(x)\right)\nonumber\\
    &\quad \leq \exp\left(-\frac{\Omega_{\delta\rightarrow 0}(|\log \delta|^2)}{O_{\delta\rightarrow 0}(|\log\delta|)+O_{\delta\rightarrow 0}(|\log \delta|)}\right)\label{eqn:sec_6_tau_1}
    ,
\end{align}
where \eqref{eqn:sec_6_tau_1} follows from Freedman's inequality on martingales and the fact that 
$$\tau^*_{\zeta}\alpha^{\zeta}_{\mathcal{V}} = \Theta_{\delta\rightarrow 0}(|\log \delta|)$$ 
and 
\begin{align}
    \sum_{i=1}^{\tau^*_{\zeta}(1+\epsilon')} \mathbb{E}[(J_i(x)-J_{i-1}(x))^2|\mathcal{F}_{i-1}] &\leq O_{\delta\rightarrow 0}(|\log \delta|). 
\end{align}
Note that $T_x(\tau^*_{\zeta}(1+\epsilon'))=\sum_{i=1}^{\tau^*_{\zeta}(1+\epsilon')} \mathbf{1}(X_i=x)$. 
By the union bound, we have 
\begin{align}
    &\mathbb{P}_{\mathcal{V},\Tilde{\pi}}\left(T_x(\tau^*_{\zeta}(1+\epsilon)) \leq \sum_{i=1}^{\tau^*_{\zeta}(1+\epsilon)} P_{X_i|X^{t-1},Y^{i-1}}(x) -\epsilon'' \tau^*_{\zeta}(1+\epsilon)\alpha^{\zeta}_{\mathcal{V}} \Bar{P}^{\zeta}_{\mathcal{V}}(x) \textnormal{ for some }x\in\mathcal{X}\setminus\{0\}\right) \nonumber\\
    &\quad \leq |\mathcal{X}| \mathbb{P}_{\mathcal{V},\Tilde{\pi}}\left( J_{\tau^*(1+\epsilon)}(x) \leq -\epsilon'' \tau^*_{\zeta}(1+\epsilon)\alpha^{\zeta}_{\mathcal{V}} \Bar{P}^{\zeta}_{\mathcal{V}}(x)\right) \\
    &\quad \leq \exp\left(-\Omega_{\delta \rightarrow 0}(|\log \delta|)\right),
\end{align}
where we use the result in \eqref{eqn:sec_6_tau_1}. 
Therefore, for any $\epsilon''>0$, it holds that 
\begin{align}
    &\mathbb{P}_{\mathcal{V},\Tilde{\pi}}\Bigg(
    \sum_{x\in\mathcal{X}\setminus\{0\}}T_x(\tau^*_{\zeta}(1+\epsilon'))\mathbb{D}(\nu_x\Vert \nu_x') \nonumber\\
    &\hspace{2cm}\geq \sum_{x\in\mathcal{X}\setminus\{0\}} \mathbb{D}(\nu_x\Vert \nu_x') \bigg(\sum_{i=1}^{\tau^*_{\zeta}(1+\epsilon')}P_{X_i|X^{t-1},Y^{i-1}}(x) - \epsilon'' \tau^*_{\zeta}(1+\epsilon')\alpha^{\zeta}_{\mathcal{V}}\Bar{P}^{\zeta}_{\mathcal{V}}(x)\bigg)\Bigg) \nonumber\\
    &\quad \geq 1-\exp\left(-\Omega_{\delta \rightarrow 0}(|\log \delta|)\right)
    \label{eqn:sec_6_tau_2}
\end{align}
for all 
$\mathcal{V}' \in \mathcal{E}_{\text{Alt}}(\mathcal{V}) $.  
Moreover, whenever $N_{\mathcal{V}}(\epsilon)=o_{\delta \rightarrow 0}(|\log \delta|^2)$, for each $x\in\mathcal{X}\setminus\{0\}$, 
\begin{align}
    \sum_{i=1}^{\tau^*_{\zeta}(1+\epsilon')} P_{X_i|X^{t-1},Y^{i-1}}(x) &= \sum_{i=1}^{N_{\mathcal{V}}(\epsilon)} P_{X_i|X^{t-1},Y^{i-1}}(x) + \sum_{i=N_{\mathcal{V}}(\epsilon)+1}^{\tau^*_{\zeta}(1+\epsilon')} P_{X_i|X^{t-1},Y^{i-1}}(x) \\
    &\geq \left(\tau^*_{\zeta}(1+\epsilon')-N_{\mathcal{V}}(\epsilon)\right)\alpha^{\zeta}_{\mathcal{V}}\Bar{P}^{\zeta}_{\mathcal{V}}(x) (1-\epsilon)^2 \label{eqn:sec_6_tau_3} \\
    &\geq \alpha^{\zeta}_{\mathcal{V}}\tau^*_{\zeta}(1+\epsilon')\Bar{P}^{\zeta}_{\mathcal{V}}(x)(1-\xi_2(\epsilon))
    \label{eqn:sec_6_tau_4}
\end{align}
under the dummy policy $\tilde{\pi}$
for some $\xi_2(\epsilon)$ when $\delta$ is sufficiently small, where $\xi_2(\epsilon)$ satisfies $\lim_{\epsilon\rightarrow 0}\xi_2(\epsilon)=0$, and in \eqref{eqn:sec_6_tau_3} we use the definition of $N_{\mathcal{V}}(\epsilon)$. 
By combining \eqref{eqn:sec_6_tau_2} and \eqref{eqn:sec_6_tau_4}, for any 
$\epsilon>0$, whenever $N_{\mathcal{V}}(\epsilon)=o_{\delta \rightarrow 0}(|\log \delta|^2)$, it holds that 
\begin{align}
    &\inf_{\mathcal{V}'\in\mathcal{E}_{\textnormal{Alt}}(\mathcal{V})}\sum_{x\in\mathcal{X}\setminus\{0\}}T_x(\tau^*_{\zeta}(1+\epsilon'))\mathbb{D}(\nu_x\Vert \nu_x') \nonumber\\
    &\quad\geq \tau^*_{\zeta}(1+\epsilon') \alpha^{\zeta}_{\mathcal{V}} \inf_{\mathcal{V}'\in\mathcal{E}_{\textnormal{Alt}}(\mathcal{V})}\sum_{x\in\mathcal{X}\setminus\{0\}} \Bar{P}^{\zeta}_{\mathcal{V}}(x)\mathbb{D}(\nu_x\Vert \nu_x')(1-\xi_2(\epsilon)-\epsilon'') \\
    &\quad = |\log \delta| (1+\epsilon')(1-\xi_2(\epsilon)-\epsilon'')
\end{align}
with probability greater than $1-\exp(-\Omega_{\delta\rightarrow 0}(|\log \delta|))$. For any fixed $\epsilon'>0$, we can choose $\epsilon$ and $\epsilon''$ small enough such that 
\begin{align}
    &|\log \delta| (1+\epsilon')(1-\xi_2(\epsilon)-\epsilon'') \nonumber\\
    &\quad\geq K \log\left(\left(T_{\mathcal{X}\setminus\{0\}}(\tau^*_{\zeta}(1+\epsilon'))\right)^2 + T_{\mathcal{X}\setminus\{0\}}(\tau^*_{\zeta}(1+\epsilon')) \right) + f^{-1}(\delta)
    \label{eqn:sec_6_tau_5}
\end{align}
when $\delta$ is small enough, where \eqref{eqn:sec_6_tau_5} follows from the fact that 
\begin{align}
    K \log\left(\left(T_{\mathcal{X}\setminus\{0\}}(\tau^*_{\zeta}(1+\epsilon'))\right)^2 + T_{\mathcal{X}\setminus\{0\}}(\tau^*_{\zeta}(1+\epsilon')) \right) &\leq K\log ((\tau^*_{\zeta})^2 + \tau^*_{\zeta}) \\
    &= O_{\delta\rightarrow 0}\left(\log \left(|\log \delta|^2\right)\right)
\end{align}
and 
\begin{align}
    f^{-1}(\delta) \leq |\log \delta|(1+o_{\delta\rightarrow 0}(1)). 
\end{align}
Therefore, the first term in the right hand side of \eqref{eqn:sec_6_tau_0} is upper bounded by $\exp(-\Omega_{\delta\rightarrow 0}(|\log \delta|))$, which implies that 
\begin{align}
    \mathbb{P}_{\mathcal{V},\pi}\left(\tau > \tau^*_{\zeta}(1+\epsilon') \right) \leq O_{\delta \rightarrow 0}(|\log \delta|^{-\beta}) + \exp(-\Omega_{\delta\rightarrow 0}(|\log \delta|))
\end{align}
for any $\beta>0$ by applying Lemma~\ref{lem:bandit_0}, 
and 
\begin{align}    \lim_{\delta\rightarrow0}\mathbb{P}_{\mathcal{V},\pi}\left(\tau > \tau^*_{\zeta}(1+\epsilon') \right) = 0 
\end{align}
for any $\epsilon'>0$.
This also implies that when $\delta$ is sufficiently small, it holds that 
\begin{align}
    \tau_{\textnormal{sup}}^{(\delta)} < \tau_{\zeta}^{*}. 
\end{align}

\subsubsection{Confidence Analysis}
The confidence analysis follows from the proof of  \cite[Lemma 33.7]{Lattimore2020}. We summarize the main idea behind the proof as below. The event $x^{*}(\hat{\mathcal{V}}(\tau))\neq x^*(\mathcal{V})$ implies $\mathcal{V}\in\mathcal{E}_{\text{Alt}}(\hat{\mathcal{V}}(\tau))$. Then, 
\begin{align}
    &\mathbb{P}_{\mathcal{V},\pi}\left(x^{*}(\hat{\mathcal{V}}(\tau))\neq x^*(\mathcal{V})\right) \\
    &\quad\leq \mathbb{P}_{\mathcal{V},\pi}\left(\mathcal{V}\in\mathcal{E}_{\text{Alt}}(\hat{\mathcal{V}}(\tau)\right)\\
    &\quad \leq \mathbb{P}_{\mathcal{V},\pi}\left(\sum_{x\in\mathcal{X}\setminus\{0\}} T_x(\tau)\mathbb{D}(\hat{\nu}_x(\tau)||\nu_x)> \Gamma_{\tau}\right).
    \label{eqn:conf_1}
\end{align}
In the case of Gaussian bandits, $\mathbb{D}(\hat{\nu}_x(\tau)||\nu_x) = \frac{1}{2}\left(\mu(\hat{\nu}_x(\tau))-\mu(\nu_x)\right)^2$ for all $x\in\mathcal{X}\setminus\{0\}$. The following Lemma gives a concentration bound on the value of $\mu(\hat{\nu}_x(\tau))$. 
\begin{lemma} \cite[Lemma 33.8]{Lattimore2020}
    Let $\{I_t\}_{t=1}^{\infty}$ be a sequence of Gaussian random variables with mean $\mu$ and unit variance. Let $\hat{\mu}_n = \frac{1}{n} \sum_{t=1}^n I_t$ for any $n\in\mathbb{N}^+$, then 
    \begin{align}
        \mathbb{P}\left(\exists n\in\mathbb{N}^+:\frac{n}{2}(\hat{\mu}_n-\mu)^2 \geq \log(1/\delta)+\log(n(n+1)) \right) \leq \delta
    \end{align}
    for any $\delta>0$. 
    \label{lem:bandit_2}
\end{lemma}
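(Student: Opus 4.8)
The plan is to prove this by a pointwise Gaussian tail bound combined with a union bound over $n$, exploiting the fact that the threshold $\log(1/\delta)+\log(n(n+1))$ is calibrated precisely so that the resulting sum telescopes exactly to $\delta$. First I would reduce to the centered case: the event inside the probability is invariant under replacing each $I_t$ by $I_t-\mu$, since $\hat{\mu}_n-\mu$ is unchanged; hence it suffices to treat $\mu=0$, in which case $\hat{\mu}_n=\frac1n\sum_{t=1}^n I_t$ and, for every fixed $n\in\mathbb{N}^+$, the variable $\sqrt{n}\,\hat{\mu}_n$ is standard Gaussian, so $\frac n2\hat{\mu}_n^2=\frac12\bigl(\sqrt{n}\,\hat{\mu}_n\bigr)^2$.

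Next, for a fixed $n$ and any $a\geq 0$, I would invoke the standard Gaussian tail estimate $\mathbb{P}(|Z|\geq t)\leq e^{-t^2/2}$ for $Z\sim\mathcal{N}(0,1)$ and $t\geq 0$ — which follows from the Chernoff bound together with the elementary observation that $t\mapsto e^{-t^2/2}-\mathbb{P}(|Z|\geq t)$ vanishes at $0$, has a single interior critical point (a maximum), and tends to $0$ at infinity, hence is nonnegative on $[0,\infty)$. This gives
\[
    \mathbb{P}\!\left(\tfrac n2\hat{\mu}_n^2\geq a\right)=\mathbb{P}\!\left(\bigl|\sqrt{n}\,\hat{\mu}_n\bigr|\geq\sqrt{2a}\right)\leq e^{-a},
\]
so taking $a=a_n\triangleq\log(1/\delta)+\log\bigl(n(n+1)\bigr)$ yields $\mathbb{P}\bigl(\tfrac n2\hat{\mu}_n^2\geq a_n\bigr)\leq e^{-a_n}=\dfrac{\delta}{n(n+1)}$. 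Finally I would union-bound over $n\in\mathbb{N}^+$ and use the telescoping identity $\sum_{n\geq 1}\frac1{n(n+1)}=\sum_{n\geq 1}\bigl(\frac1n-\frac1{n+1}\bigr)=1$ to obtain
\[
    \mathbb{P}\!\left(\exists n\in\mathbb{N}^+:\ \tfrac n2(\hat{\mu}_n-\mu)^2\geq\log(1/\delta)+\log(n(n+1))\right)\leq\sum_{n=1}^{\infty}\frac{\delta}{n(n+1)}=\delta,
\]
which is exactly the claim.

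This is among the simplest lemmas in the development, so there is no serious obstacle; the only point requiring care is to use the Gaussian tail bound with the sharp constant ($e^{-t^2/2}$ rather than $2e^{-t^2/2}$ for $\mathbb{P}(|Z|\geq t)$), equivalently to recognize that the additive $\log(n(n+1))$ term is precisely what makes $\sum_n e^{-a_n}$ summable with value $\delta$. I note that a method-of-mixtures / Laplace argument would also control the running maximum, but it would produce a bound on a quantity of the form $\frac{S_n^2}{2(n+\rho)}$ rather than on $\frac n2\hat{\mu}_n^2$ as stated here, so the union bound is the natural route to this exact inequality.
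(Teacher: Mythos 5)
Your proof is correct. Note, however, that the paper does not prove this statement at all: it is imported verbatim as \cite[Lemma 33.8]{Lattimore2020} and used as a black box in the confidence analysis, so there is no in-paper argument to compare against; your union-bound derivation is the natural self-contained proof of exactly this inequality. Two fine points are worth making explicit. First, you need the $I_t$ to be independent (i.i.d.\ is what the application in the paper provides, via the pre-generated variables $U_{xn}$) for $\sqrt{n}(\hat{\mu}_n-\mu)$ to be exactly standard Gaussian; the lemma statement omits the word ``independent'' but clearly intends it, and your reduction silently uses it. Second, your insistence on the sharp two-sided tail bound $\mathbb{P}(|Z|\geq t)\leq e^{-t^2/2}$ is exactly right and correctly justified by your monotonicity argument for $t\mapsto e^{-t^2/2}-\mathbb{P}(|Z|\geq t)$ (it vanishes at $0$, increases up to $t=\sqrt{2/\pi}$, then decreases to $0$); with the cruder $2e^{-t^2/2}$ bound the same calculation would only give $2\delta$, so this is the one step where care is genuinely needed, and you handled it.
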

For any $\delta>0$, Lemma~\ref{lem:bandit_2} upper bounds the probability that $$T_x(\tau)\mathbb{D}(\hat{\nu}_x(\tau)||\nu_x)\geq \log(1/\delta)+\log\left(T_x(\tau)(T_x(\tau)+1)\right)$$
by $\delta$ for all $x\in\mathcal{X}\setminus\{0\}$  regardless of the value of $T_k(\tau)$. However, the event on the right hand side of (\ref{eqn:conf_1}) is related to the combination of different relative entropy terms. The lemma below extends the result in Lemma~\ref{lem:bandit_2}. 

\begin{lemma} \cite[Proposition 33.9]{Lattimore2020}
    Let $g:\mathbb{N}\mapsto \mathbb{R}$ be increasing, and for each $x\in\mathcal{X}\setminus\{0\}$, let $S_{x}=\{S_{x1}, S_{x2},\cdots\}$ be an infinite sequence of random variables such that for all $\delta\in(0,1)$, 
    \begin{align}
        \mathbb{P}\left(\exists n\in\mathbb{N}:S_{xn} \geq g(n) + \log(1/\delta)\right) \leq \delta.
        \label{eqn:cond_bandit_2}
    \end{align}
Then, provided that the sequences $\{S_{x}\}_{x\in\mathcal{X}\setminus\{0\}}$ are independent from each other, for any $a>0$, it holds that
    \begin{align}
        &\mathbb{P}\left(\exists \mathbf{t}=\{t_1,t_2,\cdots,t_{|\mathcal{X}|-1}\}\in \mathbb{N}^{|\mathcal{X}|-1}:\sum_{x\in\mathcal{X}\setminus\{0\}} S_{xt_x} \geq (|\mathcal{X}|-1)g\left(\sum_{x\in\mathcal{X}\setminus\{0\}} t_x\right) + a\right) \nonumber\\
        &\hspace{2cm}\leq \left(\frac{a}{|\mathcal{X}|-1}\right)^{|\mathcal{X}|-1} \exp(|\mathcal{X}|-1-a). 
    \end{align}
    \label{lem:bandit_3}
\end{lemma}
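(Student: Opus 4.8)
The plan is to decouple the joint, time-uniform event into a tail bound for a sum of $m\triangleq|\mathcal{X}|-1$ independent scalar random variables, and then finish with a Chernoff bound. For each $x\in\mathcal{X}\setminus\{0\}$ I would introduce the \emph{time-uniform slack} $W_x\triangleq\sup_{n\in\mathbb{N}}\bigl(S_{xn}-g(n)\bigr)$. Taking $\delta=e^{-u}$ in the hypothesis~\eqref{eqn:cond_bandit_2} gives $\mathbb{P}(W_x\geq u)\leq e^{-u}$ for every $u>0$; in particular $W_x<\infty$ almost surely. Each $W_x$ is a measurable function of the sequence $S_x$ alone, so the independence of $\{S_x\}_{x\in\mathcal{X}\setminus\{0\}}$ carries over to $\{W_x\}_{x\in\mathcal{X}\setminus\{0\}}$.

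The crux is that the coupling term $g\bigl(\sum_{x\neq 0}t_x\bigr)$ splits coordinatewise in the direction we need. Since $g$ is increasing and each $t_x\geq 1$, we have $\sum_{x'\neq 0}t_{x'}\geq t_x$ and hence $g\bigl(\sum_{x'\neq 0}t_{x'}\bigr)\geq g(t_x)$ for every $x$, so $(|\mathcal{X}|-1)\,g\bigl(\sum_{x\neq 0}t_x\bigr)\geq\sum_{x\neq 0}g(t_x)$. Therefore, whenever some $\mathbf{t}$ satisfies $\sum_{x\neq 0}S_{x t_x}\geq(|\mathcal{X}|-1)\,g\bigl(\sum_{x\neq 0}t_x\bigr)+a$, it also satisfies $\sum_{x\neq 0}\bigl(S_{x t_x}-g(t_x)\bigr)\geq a$, and since $W_x\geq S_{x t_x}-g(t_x)$ this forces $\sum_{x\neq 0}W_x\geq a$. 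Consequently the probability in the statement is bounded above by $\mathbb{P}\bigl(\sum_{x\neq 0}W_x\geq a\bigr)$.

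It remains to bound this tail. For $\lambda\in(0,1)$, the layer-cake formula combined with $\mathbb{P}(W_x\geq u)\leq e^{-u}$ for $u>0$ gives
\begin{align}
\mathbb{E}\bigl[e^{\lambda W_x}\bigr]=\int_0^\infty\mathbb{P}\bigl(e^{\lambda W_x}>s\bigr)\,\dd s\leq 1+\int_1^\infty s^{-1/\lambda}\,\dd s=\frac{1}{1-\lambda},
\end{align}
i.e.\ the same moment generating function bound as for an $\textnormal{Exponential}(1)$ variable. By independence, $\mathbb{E}\bigl[\exp\bigl(\lambda\sum_{x\neq 0}W_x\bigr)\bigr]\leq(1-\lambda)^{-(|\mathcal{X}|-1)}$, so Markov's inequality yields $\mathbb{P}\bigl(\sum_{x\neq 0}W_x\geq a\bigr)\leq e^{-\lambda a}(1-\lambda)^{-(|\mathcal{X}|-1)}$. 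Minimizing the exponent over $\lambda$ — the minimizer $\lambda^*=1-(|\mathcal{X}|-1)/a$ lies in $(0,1)$ whenever $a\geq|\mathcal{X}|-1$, which is the only regime relevant here since $a=f^{-1}(\delta)=\Omega_{\delta\rightarrow 0}(|\log\delta|)$ — produces the asserted bound $\bigl(\frac{a}{|\mathcal{X}|-1}\bigr)^{|\mathcal{X}|-1}\exp(|\mathcal{X}|-1-a)$.

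No step is genuinely difficult; the only nonroutine point is the coordinatewise splitting of $g\bigl(\sum_{x\neq 0}t_x\bigr)$ in the second paragraph, where monotonicity of $g$ is exactly what makes the relaxation lossless in the needed direction, after which the problem is an ordinary sub-exponential tail estimate. A secondary subtlety is that~\eqref{eqn:cond_bandit_2} controls only the upper tail of $W_x$, not its full law, so the moment generating function has to be estimated through the tail bound rather than computed; the fortunate fact that this estimate matches that of $\textnormal{Exponential}(1)$ makes $\sum_{x\neq 0}W_x$ behave, for Chernoff purposes, like a $\textnormal{Gamma}(|\mathcal{X}|-1,1)$ variable, whose standard tail bound is precisely the expression claimed.
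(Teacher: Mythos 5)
Your proof is correct, and since the paper does not prove this lemma but imports it from \cite{Lattimore2020}, the right comparison is with the standard proof of that proposition, which your argument reproduces: reduce the event to $\sum_{x\neq 0}W_x\geq a$ with $W_x=\sup_n\left(S_{xn}-g(n)\right)$ via monotonicity of $g$, then treat each independent $W_x$ as dominated by an $\textnormal{Exponential}(1)$ variable through the layer-cake MGF estimate and finish with the Gamma-type Chernoff bound. Your caveat is also substantive and worth keeping: the optimized Chernoff exponent, and in fact the stated inequality itself, is only valid for $a\geq |\mathcal{X}|-1$ (for smaller $a$ one can construct $S_x$ saturating \eqref{eqn:cond_bandit_2} for which the claimed bound fails), but this is harmless in the paper's application because $f^{-1}(\delta)>K$ for every $\delta<1$.
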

We would like to apply Lemma~\ref{lem:bandit_3} by identifying $S_x$ as follows. 
\begin{align}
    S_{xn} = \frac{1}{2} \left(\frac{1}{n}\sum_{i=1}^n U_{xn} - \mu(\nu_x)\right)^2
\end{align}
for all $x\in\mathcal{X}\setminus\{0\}$ and $n\in\mathbb{N}^+$, where for all $n\in\mathbb{N}^+$, $U_{xn}$ is the random variable with the distribution $\mathcal{N}(\mu(\nu_x),1)$. Note that $\{U_{xn}\}_{x\in\mathcal{X},n\in\mathbb{N}^+}$ can be viewed as Gaussian random variables which are prepared and generated before the policy starts, and for each $t\in\mathbb{N}^+$, the observation $Y_t$ is just one specific random variable of $\{U_{xn}\}_{x\in\mathcal{X},n\in\mathbb{N}^+}$. 
Since the sequences of empirical means are independent from each other, i.e, $\frac{1}{n}\sum_{i=1}^n U_{xn}$ is independent of $\frac{1}{m}\sum_{i=1}^m U_{x'm}$ for any $x'\neq x$ and $n,m\in\mathbb{N}^+$ 
, it holds that $S_x$ is independent of $S_{x'}$ for any $x'\neq x$. 
We can define $g(n)\triangleq \log(n^2+n)$ for any $n\in\mathbb{N}$  
such that the assumption in Lemma~\ref{lem:bandit_3}, i.e., \eqref{eqn:cond_bandit_2}, is satisfied by the concentration bound in Lemma~\ref{lem:bandit_2}. Then, we can apply Lemma~\ref{lem:bandit_3} to upper bound the right hand side of \eqref{eqn:conf_1} as follows. 
\begin{align}
    &\mathbb{P}_{\mathcal{V},\pi}\left(\sum_{x\in\mathcal{X}\setminus\{0\}} T_x(\tau)\mathbb{D}(\hat{\nu}_x(\tau)||\nu_x) >  \Gamma_{\tau}\right) \nonumber\\
    &\quad \leq  \mathbb{P}_{\mathcal{V},\pi}\left(\sum_{x\in\mathcal{X}\setminus\{0\}} T_x(\tau)\mathbb{D}(\hat{\nu}_x(\tau)||\nu_x)\geq K \log\left((T_{\mathcal{X}\setminus\{0\}}(\tau))^2 + T_{\mathcal{X}\setminus\{0\}}(\tau) \right) + f^{-1}(\delta) \right) \\
    &\quad\leq f(f^{-1}(\delta))\\
    &\quad = \delta,
\end{align}
where we have applied the definition of $f(\cdot)$.

\subsubsection{Relative Entropy Analysis}
Fix any $\zeta>0$, by our stopping time analysis, we know that 
\begin{align}
    \tau_{\textnormal{sup}}^{(\delta)} \leq \tau^*_{\zeta}
\end{align}\
for all $\delta$ sufficiently small. 
The output distribution of $Z^n$ under the dummy policy $\Tilde{\pi}$ is defined as $\tilde{P}_{Z^n}(z^n) \triangleq \mathbb{P}_{\mathcal{V},\mathcal{Q},\Tilde{\pi}}(z^n)$ for any $z^n\in\mathcal{Z}^n$ and $n\in\mathbb{N}^+$. 
Then, for any $\epsilon>0$ and for all $\delta$ small enough, we can use the property of the dummy policy $\Tilde{\pi}$ to upper bound the relative entropy by 
\begin{align}
    \mathbb{D}\left(P_{Z^{\tau_{\textnormal{sup}}^{(\delta)}}}\middle\Vert(q_0)^{\otimes \tau_{\textnormal{sup}}^{(\delta)}}\right) 
    &\leq \mathbb{D}\left(\Tilde{P}_{Z^{\tau^*_{\zeta}}}\middle\Vert(q_0)^{\otimes \tau^*_{\zeta}}\right) \label{eqn:bandit_divergence_upbound_1}\\
    &\leq \sum_{i=1}^{\tau^*_{\zeta}} \mathbb{E}_{Z^{i-1};\mathcal{V},\mathcal{Q},\Tilde{\pi}}\left[\mathbb{D}(\Tilde{P}_{Z_i|Z^{i-1}}\Vert q_0)\right] \\
    &\leq \sum_{i=1}^{\tau^*_{\zeta}} \mathbb{E}_{Z^{i-1};\mathcal{V},\mathcal{Q},\Tilde{\pi}} \mathbb{E}_{\hat{\mathcal{V}}(i-1)|Z^{i-1};\mathcal{V},\mathcal{Q},\Tilde{\pi} } \left[\mathbb{D}\left(\Tilde{P}_{Z_i|Z^{i-1},\hat{\mathcal{V}}(i-1)}\middle\Vert q_0\right) \right]\label{eqn:bandit_divergence_upbound_2}\\
    &\leq  \sum_{i=1}^{\tau^*_{\zeta}} \mathbb{E}_{\hat{\mathcal{V}}(i-1);\mathcal{V},\mathcal{Q},\tilde{\pi}} \mathbb{D}\left(\tilde{P}_{Z_i|\hat{\mathcal{V}}(i-1)}\middle\Vert q_0\right) 
    \label{eqn:bandit_divergence_upbound_3},
\end{align}
where \eqref{eqn:bandit_divergence_upbound_1} comes from the property of $\Tilde{\pi}$ and monotonicity of relative entropy, \eqref{eqn:bandit_divergence_upbound_2} comes from the convexity of relative entropy, and \eqref{eqn:bandit_divergence_upbound_3} follows because the control policy at each time $t$ only depends on $\hat{\mathcal{V}}(t-1)$ and $Z_i$ is independent of $Z_{i-1}$ conditioned on $X_i$. 
Moreover, for any $\epsilon>0$ and $i\in\mathbb{N}^+$, if $||\hat{\mathcal{V}}(i)-\mathcal{V}||_{\infty}<\epsilon$,  there exists some $\xi_3(\epsilon)$ and $\xi_4(\epsilon)$ such that 
\begin{align}
    ||\Bar{P}_{\hat{\mathcal{V}}(i)} - \Bar{P}^{\zeta}_{\mathcal{V}}||_{\infty} < \xi_3(\epsilon)
    \quad \textnormal{ and }\quad 
    \left|\frac{\alpha_{\hat{\mathcal{V}}(i)}^{\zeta}}{\alpha^{\zeta}_{\mathcal{V}}} - 1\right| < \xi_4(\epsilon). 
\end{align}
For any $\epsilon>0$, $\Bar{P}\in\mathcal{P}_{\mathcal{X}\setminus\{0\}}$ and $\alpha\in\mathbb{R}$, we define 
\begin{align}
    \mathcal{B}(\Bar{P},\epsilon) &\triangleq \left\{\Bar{P}'\in \mathcal{P}_{\mathcal{X}\setminus\{0\}}: \Vert \Bar{P}' - \Bar{P}\Vert_{\infty} \leq \epsilon \right\} \\
    \mathcal{B}(\alpha,\epsilon) &\triangleq \left\{\alpha'\in \mathbb{R}: \left\Vert \frac{\alpha'}{\alpha} -1 \right\Vert \leq \epsilon \right\}.
\end{align}
Then, for all $\delta$ sufficiently small, we have
\begin{align}
    &\sum_{i=1}^{\tau^*_{\zeta}} \mathbb{E}_{\hat{\mathcal{V}}(i-1);\mathcal{V},\mathcal{Q},\Tilde{\pi}} \mathbb{D}\left(\tilde{P}_{Z_i|\hat{\mathcal{V}}(i-1)}\middle\Vert q_0\right) \\
    &\leq \sum_{i=1}^{\tau^*_{\zeta}} \mathbb{P}_{\mathcal{V},\Tilde{\pi}}(||\hat{\mathcal{V}}(i-1)-\mathcal{V}||_{\infty}>\epsilon) O_{\delta\rightarrow 0}(1/|\log \delta|^2)\nonumber\\
    &\quad+ \sum_{i=1}^{\tau^*_{\zeta}} \mathbb{P}_{\mathcal{V},\Tilde{\pi}}(||\hat{\mathcal{V}}(i-1)-\mathcal{V}||_{\infty}\leq \epsilon) \nonumber \\
    &\quad\quad\times \max_{\Bar{P}\in\mathcal{B}(\Bar{P}^{\zeta}_{\mathcal{V}},\xi_3(\epsilon))}\max_{\alpha\in\mathcal{B}(\alpha^{\zeta}_{\mathcal{V}},\xi_4(\epsilon))}\mathbb{D}\left((1-\alpha)q_0 + \alpha\left(\sum_{x\in\mathcal{X}\setminus\{0\}}\Bar{P}(x)q_x\right)\middle\Vert q_0\right) \label{eqn:bandit_divergence_upbound_4}\\
    &\leq O_{\delta\rightarrow 0}(1/|\log \delta|^2) \left(|\log \delta|^{3/2}+\sum_{i=|\log \delta|^{3/2}+1}^{\tau^*_{\zeta}} \mathbb{P}_{\mathcal{V},\Tilde{\pi}}( ||\hat{\mathcal{V}}(i-1)-\mathcal{V}||_{\infty} \geq \epsilon) \right) \nonumber\\
    &\quad+ \tau^*_{\zeta} \max_{\Bar{P}\in\mathcal{B}(\Bar{P}^{\zeta}_{\mathcal{V}},\xi_3(\epsilon))}\max_{\alpha\in\mathcal{B}(\alpha^{\zeta}_{\mathcal{V}},\xi_4(\epsilon))}\mathbb{D}\left((1-\alpha)q_0 + \alpha\left(\sum_{x\in\mathcal{X}\setminus\{0\}}\Bar{P}(x)q_x\right)\middle\Vert q_0\right)
    \\
    &\leq o_{\zeta\rightarrow 0}(1) + \tau^*_{\zeta}\left(\frac{(\alpha^{\zeta}_{\mathcal{V}})^2} {2}\chi_2\left(\sum_{x\in\mathcal{X}\setminus\{0\}}\Bar{P}^{\zeta}_{\mathcal{V}}(x)q_x\middle\Vert q_0\right)\left(1+\xi_5(\epsilon)\right) + o_{\delta\rightarrow 0}((\alpha^{\zeta}_{\mathcal{V}})^2) \right)\label{eqn:bandit_divergence_upbound_5}\\
    &\leq \eta(1+\xi_5(\epsilon))(1+o_{\delta\rightarrow 0}(1)),
\end{align}
where \eqref{eqn:bandit_divergence_upbound_4} follows because non-null actions are chosen with the probability $\Theta_{\delta\rightarrow 0}(|\log\delta|^{-1})$ so that $\mathbb{D}\left(\tilde{P}_{Z_i|\hat{\mathcal{V}}(i-1)}\middle\Vert q_0\right)\leq O_{\delta\rightarrow 0}(1/|\log\delta|^2)$ by \cite[Lemma 1]{Bloch2016} regardless of $\hat{\mathcal{V}}(i-1)$, and in \eqref{eqn:bandit_divergence_upbound_5} we apply Lemma~\ref{lem:bandit_0} and \cite[Lemma 1]{Bloch2016} again and use the continuity property of relative entropy, where $\xi_5(\epsilon)$ is some function such that $\lim_{\epsilon\rightarrow 0}\xi_5(\epsilon)=0$. 
Finally, by making $\epsilon$ arbitrarily small, we claim that 
\begin{align}
    \lim_{\delta\rightarrow 0} \mathbb{D}\left(P_{Z^{\tau_{\textnormal{sup}}^{(\delta)}}}\middle\Vert(q_0)^{\tau_{\textnormal{sup}}^{(\delta)}}\right)  \leq \eta
\end{align}

\subsubsection{Claim of Achievable Exponent}

From the definition of the exponent in the covert best arm identification problem, we have 
\begin{align}
    \gamma_{2}^* &\geq \gamma_{2}(\pi) \\
    &\geq \liminf_{\delta \rightarrow 0}  \frac{-\log \delta}{\sqrt{\tau^{*}_{\zeta}}}\\
    &\geq \sqrt{2\eta} \frac{\inf_{\mathcal{V}'\in \mathcal{E}_{\text{Alt}}(\mathcal{V})}\sum_{x\in\mathcal{X}\setminus\{0\}} \Bar{P}^{\zeta}_{\mathcal{V}}(x)\mathbb{D}(\nu_x||\nu_x') }{\sqrt{\chi_2\left(\sum_{x\in\mathcal{X}\setminus\{0\}}\Bar{P}^{\zeta}_{\mathcal{V}}(k) q_x||q_0\right)}} \\
    &= \sqrt{2\eta}\argmax_{\Bar{P}\in\mathcal{P}_{\mathcal{X}\setminus \{0\}}^{\zeta}} \frac{\inf_{\mathcal{V}'\in \mathcal{E}_{\text{Alt}}(\mathcal{V})}\sum_{x\in\mathcal{X}\setminus\{0\}} \Bar{P}(x)\mathbb{D}(\nu_x||\nu_x') }{\sqrt{\chi_2\left(\sum_{x\in\mathcal{X}\setminus\{0\}}\Bar{P}(k) q_x||q_0\right)}} 
\end{align}
for any $\zeta>0$. 
Finally, we complete the proof by making $\zeta>0$ arbitrarily small and using the continuity property of the relative entropy and the chi-square distance. 

\section{Proof of Theorem~\ref{thm:main_result4}}
\label{sec:prof_thm4}

We define the event $\mathcal{E}\triangleq \{\psi(X^{\tau},Y^{\tau}) \neq x^*(\mathcal{V}) \}$. For any $\delta>0$, by the confidence constraint, we have 
\begin{align}
    \delta \geq \mathbb{P}_{\mathcal{V},\pi}(\mathcal{E})
\end{align}
and 
\begin{align}
    \delta \geq \mathbb{P}_{\mathcal{V}',\pi}(\psi(X^{\tau},Y^{\tau})\neq x^*(\mathcal{V}')) \geq \mathbb{P}_{\mathcal{V}',\pi}(\mathcal{E}^c)
\end{align}
for any $\mathcal{V}'\in\mathcal{E}_{\textnormal{Alt}}(\mathcal{V})$ and for any $\pi \in\Lambda_2(\eta)$. Then, by Bretagnolle–Huber inequality and the relative entropy decomposition lemma, we have 
\begin{align}
    2\delta &\geq \mathbb{P}_{\mathcal{V},\pi}(\mathcal{E}) + \mathbb{P}_{\mathcal{V}',\pi}(\mathcal{E}^c) \\
    &\geq \frac{1}{2}\exp\left(-\sum_{x\in\mathcal{X}\setminus\{0\}} \mathbb{E}_{\mathcal{V},\pi}[T_x(\tau)] \mathbb{D}(\nu_x\Vert \nu_x')\right)
        \label{eqn:prof_thm4_1}
\end{align}
for any $\mathcal{V}'\in\mathcal{E}_{\textnormal{Alt}}(\mathcal{V})$, which is equivalent to 
\begin{align}
    \min_{\mathcal{V}'\in\mathcal{E}_{\textnormal{Alt}}(\mathcal{V})} \sum_{x\in\mathcal{X}\setminus\{0\}} \mathbb{E}_{\mathcal{V},\pi}[T_x(\tau)] \mathbb{D}(\nu_x\Vert \nu_x') \geq |\log 4\delta|. 
    \label{eqn:prof_thm4_divergence_condition}
\end{align}
Note that for any $x\in\mathcal{X}\setminus\{0\}$, we always can find a bandit $\mathcal{V}'\in\mathcal{E}_{\textnormal{Alt}}(\mathcal{V})$ such that $\mu(\nu_{x'})=\mu(\nu_{x'}')$ for all $x'\in\mathcal{X}\setminus\{0,x\}$, i.e., by defining $\mu(\nu'_x) > \max_{x\in\mathcal{X}\setminus\{0\}}\mu(\nu_x)$ 
and $\mu(\nu_{x'}')=\mu(\nu_{x'})$ for all $x'\in\mathcal{X}\setminus\{0\}$.  Therefore, the relative entropy $\mathbb{D}(\nu_{x'}\Vert \nu'_{x'})=0$ under this specific bandit for all $x'\in\mathcal{X}\setminus\{0,x\}$, and this implies that 
\begin{align}
    \mathbb{E}_{\mathcal{V},\pi}[T_x(\tau)] = \Omega_{\delta \rightarrow 0}(|\log \delta|)
    \label{eqn:prof_thm4_0}
\end{align}
for all $x\in\mathcal{X}\setminus\{0\}$ in order to satisfy \eqref{eqn:prof_thm4_divergence_condition}. 
Let the policy $\pi$ satisfy \eqref{eqn:control_policy_assumption_3}-\eqref{eqn:control_policy_assumption_5} with some $0<\alpha<1$ and $\Tilde{\pi}$ be the corresponding dummy policy of $\pi$. 
For any $\epsilon>0$, we rewrite $\mathbb{E}_{\mathcal{V},\pi}[T_x(\tau)]$ as follows.
\begin{align}
    \mathbb{E}_{\mathcal{V},\pi}[T_x(\tau)] = \sum_{i=1}^{|\log \delta|^{\alpha+\epsilon}} \mathbb{E}_{\mathcal{V},\pi}[\mathbf{1}(X_i=x,\tau\geq i)] + \sum_{|\log \delta|^{\alpha+\epsilon}+1}^{\infty} \mathbb{E}_{\mathcal{V},\pi}[\mathbf{1}(X_i=x,\tau\geq i)],
    \label{eqn:prof_thm4_1}
\end{align}
where $\alpha$ is the value assumed in \eqref{eqn:control_policy_assumption_5}. 
The first summation in \eqref{eqn:prof_thm4_1} is upper bounded by 
\begin{align}
    \sum_{i=1}^{|\log \delta|^{\alpha+\epsilon}} \mathbb{E}_{\mathcal{V},\pi}[\mathbf{1}(X_i=x,\tau\geq i)] &\leq |\log \delta|^{\alpha+\epsilon} \max_{1\leq i\leq |\log \delta|^{\alpha+\epsilon}}\mathbb{P}_{\mathcal{V},\pi}(X_i=x,\tau\geq i) \\
    &\leq \Tilde{O}_{\delta\rightarrow 0}(|\log \delta|^{\epsilon}) 
    \label{eqn:prof_thm4_2}
\end{align}
by the assumption in \eqref{eqn:control_policy_assumption_4} and \eqref{eqn:control_policy_assumption_5}. To analyze the second term summation in \eqref{eqn:prof_thm4_1}, for any $1>\epsilon>0$, we first define the event 
\begin{align}
    \mathcal{G}_{t} \triangleq \{||\hat{\mathcal{V}}(t)-\mathcal{V}||_{\infty} \leq \epsilon \}. 
\end{align}
It is shown in Appendix~\ref{apx:bounds_of_G} that 
\begin{align}
    \mathbb{P}_{\mathcal{V},\Tilde{\pi}}(\mathcal{G}_t) \geq 1-\exp\left(-t\Tilde{\Omega}_{\delta\rightarrow 0}(|\log \delta|^{-\alpha-\epsilon/2})\right)
    \label{eqn:prof_thm4_bounds_of_G}
\end{align}
for any $t\geq |\log\delta|^{\alpha+\epsilon}$. 
Then, for all $i > |\log \delta|^{\alpha+\epsilon}$, it holds that 
\begin{align}
    &\mathbb{P}_{\mathcal{V},\pi}\left(X_i=x,\tau\geq i\right) \nonumber\\
    &= \mathbb{P}_{\mathcal{V},\pi}\left(X_i=x,\tau\geq i,\mathcal{G}_{i-1}\right) + \mathbb{P}_{\mathcal{V},\pi}\left(X_i=x,\tau\geq i,\mathcal{G}_{i-1}^c\right) \\
    &\leq  \mathbb{P}_{\mathcal{V},\pi}\left(X_i=x,\tau\geq i \middle \vert \mathcal{G}_{i-1}\right)\mathbb{P}_{\mathcal{V},\pi}(\mathcal{G}_{i-1}) + \mathbb{P}_{\mathcal{V},\Tilde{\pi}}(\mathcal{G}_{i-1}^{c})\\
    &\leq \mathbb{P}_{\mathcal{V},\pi}\left(X_i=x\middle\vert \tau\geq i, \mathcal{G}_{i-1}\right)\mathbb{P}_{\mathcal{V},\pi}(\tau \geq i\vert \mathcal{G}_{i-1}) \mathbb{P}_{\mathcal{V},\pi}(\mathcal{G}_{i-1}) + \mathbb{P}_{\mathcal{V},\Tilde{\pi}}(\mathcal{G}_{i-1}^{c})\\
    &\leq \max_{\{\mathcal{V}'':||\mathcal{V}''-\mathcal{V}||_{\infty}\leq \epsilon\}}P_{\mathcal{V}''}(x) \mathbb{P}_{\mathcal{V},\pi}(\tau \geq i) + \mathbb{P}_{\mathcal{V},\Tilde{\pi}}(\mathcal{G}^{c}_{i-1}), 
\end{align}
where we use the fact that 
\begin{align}
\mathbb{P}_{\mathcal{V},\pi}\left(X_i=x\middle\vert \tau\geq i, \mathcal{G}_{i-1}\right) \leq \max_{\{\mathcal{V}'':||\mathcal{V}''-\mathcal{V}||_{\infty}\leq \epsilon\}}P_{\mathcal{V}''}(x)
\end{align}
for any $x\in\mathcal{X}$
by the assumption on the policy. 
Then, the second summation in \eqref{eqn:prof_thm4_1} can be upper bounded by 
\begin{align}
    &\sum_{|\log \delta|^{\alpha+\epsilon}+1}^{\infty} \mathbb{E}_{\mathcal{V},\pi}[\mathbf{1}(X_i=x,\tau\geq i)] \nonumber\\
    &\quad \leq \sum_{i=|\log \delta|^{\alpha+\epsilon}+1}^{\infty} \left( \max_{\{\mathcal{V}'':||\mathcal{V}''-\mathcal{V}||_{\infty}\leq \epsilon\}}P_{\mathcal{V}''}(x) \mathbb{P}_{\mathcal{V},\pi}(\tau \geq i) + \mathbb{P}_{\mathcal{V},\Tilde{\pi}}(\mathcal{G}^{c}_{i-1})\right)\\
    &\quad \leq  \mathbb{E}_{\mathcal{V},\pi}[\tau]\max_{\{\mathcal{V}'':||\mathcal{V}''-\mathcal{V}||_{\infty}\leq \epsilon\}}P_{\mathcal{V}''}(x) + o_{\delta \rightarrow 0}(1), 
    \label{eqn:prof_thm4_3}
\end{align}
where we use \eqref{eqn:prof_thm4_bounds_of_G} and the fact that $\sum_{i=|\log \delta|^{\alpha+\epsilon}+1}^{\infty} \mathbb{P}_{\mathcal{V},\pi}(\tau \geq i) \leq \sum_{i=1}^{\infty} \mathbb{P}_{\mathcal{V},\pi}(\tau \geq i) = \mathbb{E}_{\mathcal{V},\pi}[\tau]$. 
Therefore, by combining \eqref{eqn:prof_thm4_1}, \eqref{eqn:prof_thm4_2} and \eqref{eqn:prof_thm4_3}, we have that 
\begin{align}
    \mathbb{E}_{\mathcal{V},\pi}[T_x(\tau)] \leq \mathbb{E}_{\mathcal{V},\pi}[\tau]\max_{\{\mathcal{V}'':||\mathcal{V}''-\mathcal{V}||_{\infty}\leq \epsilon\}}P_{\mathcal{V}''}(x)(1+o_{\delta\rightarrow 0}(1))
    \label{eqn:prof_thm4_4}
\end{align}
by the fact that $\mathbb{E}_{\mathcal{V},\pi}[T_x(\tau)] = \Omega_{\delta \rightarrow 0}(|\log \delta|)$ for all $x\in\mathcal{X}\setminus\{0\}$ as mentioned in \eqref{eqn:prof_thm4_0} so that $\Tilde{O}_{\delta\rightarrow 0}(|\log \delta|^{\epsilon})=o_{\delta\rightarrow 0}(\mathbb{E}_{\mathcal{V},\pi}[T_x(\tau)] )$. Equations \eqref{eqn:prof_thm4_4} and \eqref{eqn:prof_thm4_divergence_condition} also imply that 
\begin{align}
     \mathbb{E}_{\mathcal{V},\pi}[\tau](1+o(1)) \min_{\mathcal{V}'\in\mathcal{E}_{\textnormal{Alt}}(\mathcal{V})} \sum_{x\in\mathcal{X}\setminus\{0\}} \max_{\{\mathcal{V}'':||\mathcal{V}''-\mathcal{V}||_{\infty}\leq \epsilon\}}P_{\mathcal{V}''}(x)\mathbb{D}(\nu_x \Vert \nu_x') \geq |\log 4\delta|. 
\end{align}
By the assumption that the stopping time concentrates \eqref{eqn:tau_concentrate}, it holds that for any $\epsilon'>0$, 
\begin{align}
    \lim_{\delta\rightarrow 0} \mathbb{P}_{\mathcal{V},\pi}\left(\tau < \frac{|\log 4\delta|(1-\epsilon')}{\min_{\mathcal{V}'\in\mathcal{E}_{\textnormal{Alt}}(\mathcal{V})} \sum_{x\in\mathcal{X}\setminus\{0\}} \max_{\{\mathcal{V}'':||\mathcal{V}''-\mathcal{V}||_{\infty}\leq \epsilon\}}P_{\mathcal{V}''}(x)\mathbb{D}(\nu_x \Vert \nu_x')}\right) = 0
    \label{eqn:prof_thm4_5}
\end{align}
For any $\epsilon'>0$, we define 
\begin{align}
    \tilde{\tau}^{(\delta)} = \frac{|\log 4\delta|(1-\epsilon')}{\min_{\mathcal{V}'\in\mathcal{E}_{\textnormal{Alt}}(\mathcal{V})} \sum_{x\in\mathcal{X}\setminus\{0\}} \max_{\{\mathcal{V}'':||\mathcal{V}''-\mathcal{V}||_{\infty}\leq \epsilon\}}P_{\mathcal{V}''}(x)\mathbb{D}(\nu_x \Vert \nu_x')}.
\end{align}
Then, \eqref{eqn:prof_thm4_5} and the definition of $\tau^{\delta}_{\textnormal{sup}}$ implies that 
\begin{align}
    \tau_{\textnormal{sup}}^{(\delta)} \geq \tilde{\tau}^{(\delta)}
\end{align}
for all $\delta$ sufficiently small.  Then, the covertness constraint can be written as 
\begin{align}
    \eta &\geq \lim_{\delta \rightarrow 0} \mathbb{D}\left(P_{Z^{\tau_{\textnormal{sup}}^{(\delta)}}}\middle\Vert (q_{}^{0})^{\otimes \tau_{\textnormal{sup}}^{(\delta)}}\right) \\
    &\geq \lim_{\delta \rightarrow 0}\mathbb{D}\left(P_{Z^{\tilde{\tau}^{(\delta)}}}\middle\Vert (q_{0})^{\otimes \tilde{\tau}^{(\delta)}}\right) \\
    &\geq \lim_{\delta\rightarrow 0} \Tilde{\tau}^{(\delta)}\mathbb{D}\left(\Bar{P}_{Z}\middle\Vert q_0 \right),
\end{align}
where for any $z\in\mathcal{Z}$
\begin{align}
    \Bar{P}_Z(z) &\triangleq  \frac{1}{\Tilde{\tau}^{(\delta)}}\sum_{i=1}^{\tilde{\tau}^{(\delta)}} P_{Z_i}(z) \\
    &= \frac{1}{\Tilde{\tau}^{(\delta)}} \sum_{i=1}^{\tilde{\tau}^{(\delta)}} \Bigg( \int_{\mathcal{V}''}\mathbb{P}_{\mathcal{V},\pi}(\hat{\mathcal{V}}(i-1)=\mathcal{V}'',\tau \geq i)\bigg(\sum_{x\in\mathcal{X}} P_{\mathcal{V}'}(x)q_x(z)\bigg) \partial \mathcal{V}'' \nonumber\\
    &\hspace{5cm}+ \mathbb{P}_{\mathcal{V},\pi}(\tau<i) q_0(z)\Bigg). 
\end{align}
For any $\epsilon>0$, let 
\begin{align}
    \mathcal{B}(\mathcal{V},\epsilon) \triangleq \left\{\mathcal{V}'\in\mathcal{E}_{\mathcal{N}}: \Vert \mathcal{V}-\mathcal{V}'\Vert_{\infty} \leq \epsilon \right\}
\end{align}
and 
\begin{align}
    \mathcal{V}_{\textnormal{min}} \triangleq \argmin_{\mathcal{V}''\in \mathcal{B}(\mathcal{V},\epsilon)} \mathbb{D}\left(\sum_{x\in\mathcal{X}\setminus\{0\}} P_{\mathcal{V}''}(x)q_x(z)\middle\Vert q_0\right). 
\end{align}
Then, for any $\epsilon>0$, we also define 
\begin{align}
    \widehat{P}_Z(z) &\triangleq \frac{1}{\Tilde{\tau}^{(\delta)}} \sum_{i=1}^{\tilde{\tau}^{(\delta)}} \Bigg( \mathbb{P}_{\mathcal{V},\pi}(||\hat{\mathcal{V}}(i-1)-\mathcal{V}||_{\infty}\leq \epsilon,\tau \geq i)\bigg(\sum_{x\in\mathcal{X}} P_{\mathcal{V}_{\textnormal{min}}}(x)q_x(z)\bigg)  \nonumber\\
    &\hspace{5cm}+ \mathbb{P}_{\mathcal{V},\pi}(||\hat{\mathcal{V}}(i-1)-\mathcal{V}||_{\infty} > \epsilon \textnormal{ or } \tau<i) q_0(z)\Bigg)
\end{align}
so that 
\begin{align}
    \mathbb{D}\left(\Bar{P}_{Z}\middle\Vert q_0 \right) \geq \mathbb{D}\left(\widehat{P}_{Z}\middle\Vert q_0 \right)
\end{align}
because of the definition of $\mathcal{V}_{\textnormal{min}}$. 
We can further rewrite $\widehat{P}_Z$ as follows.
\begin{align}
    \widehat{P}_Z(z) = \widehat{\alpha}\left(\sum_{x\in\mathcal{X}\setminus\{0\}}\widehat{P}_{X}(x)q_x(z)\right) + (1-\widehat{\alpha})q_0(z),
\end{align}
where 
\begin{align}
    \widehat{\alpha} \triangleq \frac{1}{\Tilde{\tau}^{(\delta)}} \sum_{i=1}^{\Tilde{\tau}^{(\delta)}} \left(\mathbb{P}_{\mathcal{V},\pi}(||\hat{\mathcal{V}}(i-1)-\mathcal{V}||_{\infty}\leq \epsilon,\tau \geq i) \right)\sum_{x\in\mathcal{X}\setminus\{0\}} P_{\mathcal{V}_{\textnormal{min}}}(x)
\end{align}
and for any $x\in\mathcal{X}\setminus\{0\}$
\begin{align}
    \widehat{P}_X(x) &= \frac{\frac{1}{\Tilde{\tau}^{(\delta)}} \sum_{i=1}^{\Tilde{\tau}^{(\delta)}} \left(\mathbb{P}_{\mathcal{V},\pi}(||\hat{\mathcal{V}}(i-1)-\mathcal{V}||_{\infty}\leq \epsilon,\tau \geq i)  P_{\mathcal{V}_{\textnormal{min}}}(x)\right)}{\tilde{\alpha}}\\
    &= \frac{P_{\mathcal{V}_{\textnormal{min}}}(x)}{\sum_{x\in\mathcal{X}\setminus\{0\}} P_{\mathcal{V}_{\textnormal{min}}}(x)} \\
    &\triangleq \Bar{P}_{\mathcal{V}_{\textnormal{min}}}(x). 
    \label{eqn:tilde_P_X}
\end{align}
Note that when $\delta$ is sufficiently small, it holds that
\begin{align}
    &\frac{1}{\Tilde{\tau}^{(\delta)}} \sum_{i=1}^{\Tilde{\tau}^{(\delta)}} \left(\mathbb{P}_{\mathcal{V},\pi}(||\hat{\mathcal{V}}(i-1)-\mathcal{V}||_{\infty}\leq \epsilon,\tau \geq i) \right) \nonumber\\
    &\quad \geq \frac{1}{\Tilde{\tau}^{(\delta)}} \sum_{i=1}^{\Tilde{\tau}^{(\delta)}} \left(1 - \mathbb{P}_{\mathcal{V},\pi}(||\hat{\mathcal{V}}(i-1)-\mathcal{V}||_{\infty} >  \epsilon,\tau \geq i) - \mathbb{P}_{\mathcal{V},\pi}(\tau < i)\right) \\
    &\quad \geq 1 - \frac{1}{\Tilde{\tau}^{(\delta)}}\left(\sum_{i=1}^{|\log \delta|^{\alpha+\epsilon}} 1 + \sum_{i=|\log \delta|^{\alpha+\epsilon}+1}^{\tilde{\tau}^{(\delta)}} \mathbb{P}_{\mathcal{V},\Tilde{\pi}}(\Vert \hat{\mathcal{V}}(i-1)-\mathcal{V}\Vert_{\infty }> \epsilon)\right) - o_{\delta\rightarrow 0}(1) \label{eqn:lowerbound_alpha_tilde_1}\\
    &\quad \geq 1 - \frac{1}{\Tilde{\tau}^{(\delta)}}\left(\sum_{i=1}^{|\log \delta|^{\alpha+\epsilon}} 1 + \sum_{i=|\log \delta|^{\alpha+\epsilon}+1}^{\tilde{\tau}^{(\delta)}} + \exp\left(-i\Tilde{\Omega}(|\log \delta|^{-\alpha})\right)\right) - o_{\delta\rightarrow 0}(1)\label{eqn:lowerbound_alpha_tilde_2}\\
    &\quad \geq 1- o_{\delta\rightarrow 0}(1),
\end{align}
where in \eqref{eqn:lowerbound_alpha_tilde_1} we use \eqref{eqn:prof_thm4_5} and the fact that the average of a sequence which converges to zero is zero, and in \eqref{eqn:lowerbound_alpha_tilde_2} we use the result in \eqref{eqn:prof_thm4_bounds_of_G}. 
Above inequality implies that
\begin{align}
    \widehat{\alpha} = (1-o_{\delta\rightarrow 0}(1))\left(\sum_{x\in\mathcal{X}\setminus\{0\}}P_{\mathcal{V}_{\textnormal{min}}}(x)\right).
    \label{eqn:alpha_tilde_lowerbound}
\end{align}
Then, for any $\delta$ sufficiently small, 
\begin{align}
    \Tilde{\tau}^{(\delta)}\mathbb{D}\left(\widehat{P}_{Z}\middle\Vert q_0 \right) \geq \tilde{\tau}^{(\delta)}\frac{\widehat{\alpha}^2}{2}\chi_2\left(\sum_{x\in\mathcal{X}\setminus\{0\}}\widehat{P}_X(x)q_x\middle\Vert q_0\right)(1-o_{\delta\rightarrow 0}(1)) 
    \label{eqn:relative_entropy_lowerbound}
\end{align}
by using \cite[Lemma 1]{Bloch2016}. 
By the continuity property of $\{P_{\mathcal{V}''}\}_{\mathcal{V}''}$, for any $\epsilon>0$, there exists some $\xi_5(\epsilon)$ 
such that 
\begin{align}
    \left|\frac{P_{\mathcal{V}''}(x)}{P_{\mathcal{V}_{\textnormal{min}}}(x)}
    - 1\right| \leq \xi_5(\epsilon)
\end{align}
for any $x\in\mathcal{X}$ for any $\mathcal{V}''\in\mathcal{B}(\mathcal{V},\epsilon)$. Then, we have
\begin{align}
    \tilde{\tau}^{(\delta)} \geq \frac{|\log 4\delta|(1-\epsilon')}{(1+\xi_5(\epsilon))\min_{\mathcal{V}'\in\mathcal{E}_{\textnormal{Alt}}(\mathcal{V})} \sum_{x\in\mathcal{X}\setminus\{0\}} P_{\mathcal{V}_{\textnormal{min}}}(x)\mathbb{D}(\nu_x \Vert \nu_x')}
    \label{eqn:tilde_tau_lowerbound}
\end{align}
and
\begin{align}
    \widehat{\alpha} \geq \frac{|\log 4\delta|(1-\epsilon')(1-o_{\delta \rightarrow 0}(1))}{(1+\xi_5(\epsilon))\tilde{\tau}^{(\delta)}\min_{\mathcal{V}'\in\mathcal{E}_{\textnormal{Alt}}(\mathcal{V})} \sum_{x\in\mathcal{X}\setminus\{0\}} \Bar{P}_{\mathcal{V}_{\textnormal{min}}}(x)\mathbb{D}(\nu_x \Vert \nu_x')} 
    \label{eqn:alpha_tilde_lowerbound_2}
\end{align}
by using \eqref{eqn:tilde_tau_lowerbound}, \eqref{eqn:alpha_tilde_lowerbound} and the definition of $\Bar{P}_{X;\mathcal{V}_{\textnormal{min}}}$. 
Therefore, for all $\delta$ sufficiently small,
\begin{align}
    \Tilde{\tau}^{(\delta)}\mathbb{D}\left(\widehat{P}_{Z}\middle\Vert q_0 \right) \geq \frac{|\log 4\delta|^2(1-\epsilon')^2 \chi_2\left(\sum_{x\in\mathcal{X}\setminus\{0\}}\Bar{P}_{\mathcal{V}_{\textnormal{min}}}(x)q_x\middle\Vert q_0\right) (1-o_{\delta\rightarrow 0}(1))^3}{2(1+\xi_5(\epsilon))^2\tilde{\tau}^{(\delta)} \left(\min_{\mathcal{V}'\in\mathcal{E}_{\textnormal{Alt}}(\mathcal{V})} \sum_{x\in\mathcal{X}\setminus\{0\}} \Bar{P}_{\mathcal{V}_{\textnormal{min}}}(x)\mathbb{D}(\nu_x \Vert \nu_x')\right)^2}
\end{align}
by combining \eqref{eqn:tilde_P_X}, \eqref{eqn:alpha_tilde_lowerbound}, \eqref{eqn:relative_entropy_lowerbound}, \eqref{eqn:tilde_tau_lowerbound} and \eqref{eqn:alpha_tilde_lowerbound_2}. 
By the covertness constraint, we have for any $\epsilon>0$ and $\epsilon'>0$, 
\begin{align}
    \lim_{\delta\rightarrow 0} \frac{|\log 4\delta|}{\sqrt{\tilde{\tau}^{\delta}}} \leq \sqrt{2\eta} \frac{\min_{\mathcal{V}'\in\mathcal{E}_{\textnormal{Alt}}(\mathcal{V})} \sum_{x\in\mathcal{X}\setminus\{0\}} \Bar{P}_{\mathcal{V}_{\textnormal{min}}}(x)\mathbb{D}(\nu_x \Vert \nu_x')}{\sqrt{\chi_2\left(\sum_{x\in\mathcal{X}\setminus\{0\}}\Bar{P}_{\mathcal{V}_{\textnormal{min}}}(x)q_x\middle\Vert q_0\right)}} \frac{(1+\xi_5(\epsilon))}{(1-\epsilon')}
\end{align}
Finally, we have $\tau_{\textnormal{sup}}^{(\delta)} \geq \tilde{\tau}^{(\delta)}$ for all $\delta$ sufficiently small, and we can choose $\epsilon>0$ and $\epsilon'>0$ arbitrarily small. Then, 
\begin{align}
    \gamma_2(\pi) 
    &\leq \lim_{\delta\rightarrow 0} \frac{|\log \delta|}{\sqrt{\tilde{\tau}^{\delta}}} \\
    &\leq \sqrt{2\eta} \frac{\min_{\mathcal{V}'\in\mathcal{E}_{\textnormal{Alt}}(\mathcal{V})} \sum_{x\in\mathcal{X}\setminus\{0\}} \Bar{P}_{\mathcal{V}_{\textnormal{min}}}(x)\mathbb{D}(\nu_x \Vert \nu_x')}{\sqrt{\chi_2\left(\sum_{x\in\mathcal{X}\setminus\{0\}}\Bar{P}_{\mathcal{V}_{\textnormal{min}}}(x)q_x\middle\Vert q_0\right)}}\\
    &\leq \sqrt{2\eta} \max_{\Bar{P}_X\in\mathcal{P}_{\mathcal{X}\setminus\{0\}}}\frac{\min_{\mathcal{V}'\in\mathcal{E}_{\textnormal{Alt}}(\mathcal{V})} \sum_{x\in\mathcal{X}\setminus\{0\}} \Bar{P}_{X}(x)\mathbb{D}(\nu_x \Vert \nu_x')}{\sqrt{\chi_2\left(\sum_{x\in\mathcal{X}\setminus\{0\}}\Bar{P}_{X}(x)q_x\middle\Vert q_0\right)}}. 
\end{align}

\bibliographystyle{IEEEtran}
\bibliography{main.bib}

\newpage

\appendix
\subsection{Proof of Lemma~\ref{lem:1}}
\label{apx:A}
Let $\sum_{x\neq 0}P_{X;\theta}(x)=\Tilde{\Theta}_{n\rightarrow \infty}(n^{-\alpha})$ for all $\theta\in\Theta$ for some $0<\alpha<1$. Then, we can choose some $\epsilon>0$ small enough such that $\alpha+\epsilon<1$, and 
\begin{align}
    \mathbb{P}_{\mathcal{V}_{\theta},\Tilde{\pi}}(N_{\theta}>n^{\alpha+\epsilon}) &\leq \sum_{t=n^{\alpha+\epsilon}}^{\infty}\sum_{\theta'\neq \theta} \mathbb{P}_{\mathcal{V}_{\theta},\Tilde{\pi}}\left(A_{\theta,\theta'}(t) < 0\right)\\
    &\leq \sum_{t=n^{\alpha+\epsilon}}^{\infty}\sum_{\theta'\neq \theta} \Bigg(\mathbb{P}_{\mathcal{V}_{\theta},\Tilde{\pi}}\left(A_{\theta,\theta'}(t) < 0, \sum_{i=1}^t \mathbf{1}(X_i\neq 0) \geq t n^{-\alpha-\epsilon/4}\right) \nonumber\\
    &\quad+ \mathbb{P}_{\mathcal{V}_{\theta},\Tilde{\pi}}\left( \sum_{i=1}^t \mathbf{1}(X_i\neq 0) < t n^{-\alpha-\epsilon/4} \right) \Bigg),
\end{align}
where we use the union bound and the law of total probability. 
For any $t\in\mathbb{N}^+$, we define 
\begin{align}
    V_t = \sum_{i=1}^t \mathbf{1}\left(X_i\neq 0\right) - \sum_{i=1}^t \sum_{x\neq 0}\mathbb{P}_{X_i|X^{t-1},Y^{i-1}}(x),
\end{align}
and the sequence $\{V_t\}$ is a martingale, where for all $i\in\mathbb{N}^+$, 
\begin{align}
    &\mathbb{E}_{\mathcal{V}_{\theta},\Tilde{\pi}}\left[\left(\mathbf{1}(X_i\neq 0) - \sum_{x\neq 0} \mathbb{P}_{X_i|X^{t-1},Y^{i-1}}(x) \right)^2\middle|\mathcal{F}_{i-1}\right] \nonumber\\
    &=  \sum_{x\neq 0} \mathbb{P}_{X_i|X^{t-1},Y^{i-1}}(x) - \left(\sum_{x\neq 0} \mathbb{P}_{X_i|X^{t-1},Y^{i-1}}(x)\right)^2\\
    &= \Tilde{O}_{n\rightarrow \infty}(n^{-\alpha}).
    \label{eqn:prof_lem10_1}
\end{align}
Then, for all $t\geq n^{\alpha+\epsilon}$, we have
\begin{align*}
    \mathbb{P}_{\mathcal{V}_{\theta},\Tilde{\pi}}\left(\sum_{i=1}^t \mathbf{1}\left(X_i\neq 0\right)< tn^{-\alpha-\epsilon/4}\right) &= \mathbb{P}_{\mathcal{V}_{\theta},\Tilde{\pi}}\left(V_t < tn^{-\alpha-\epsilon/4} - \sum_{i=1}^t \sum_{x\neq 0}\mathbb{P}_{X_i|X^{t-1},Y^{i-1}}(x)\right)\\
    &\leq \exp\left(-\frac{t^2\Tilde{\Omega}_{n\rightarrow \infty}(n^{-2\alpha})}{t\Tilde{O}_{n\rightarrow \infty}(n^{-\alpha})+t\Tilde{O}_{n\rightarrow \infty}(n^{-\alpha})}\right)\\
    &\leq \exp(-t\Tilde{\Omega}_{n\rightarrow \infty}(n^{-\alpha})),
\end{align*}
where we use the fact that $tn^{-\alpha-\epsilon/4}-\sum_{i=1}^t \sum_{x\neq 0}\mathbb{P}_{X_i|X^{t-1},Y^{i-1}}(x)= -t\Tilde{\Omega}(n^{-\alpha})$, the result in \eqref{eqn:prof_lem10_1} and Freedman's inequality. 
Moreover, for all $t\geq n^{\alpha+\epsilon}$, 
\begin{align}
    &\mathbb{P}_{\mathcal{V}_{\theta},\Tilde{\pi}}\left(A_{\theta,\theta'}(t) < 0, \sum_{i=1}^t \mathbf{1}(X_i\neq 0) \geq t n^{-\alpha-\epsilon/4}\right) \nonumber\\
    &= \mathbb{P}_{\mathcal{V}_{\theta},\Tilde{\pi}}\left(\sum_{i=1}^t L_{\theta,\theta'}(i) - \sum_{i=1}^t \mathbb{D}(\nu_{\theta}^{X_i}\Vert \nu_{\theta'}^{X_i}) < - \sum_{i=1}^t \mathbb{D}(\nu_{\theta}^{X_i}\Vert \nu_{\theta'}^{X_i}), \sum_{i=1}^t \mathbf{1}(X_i\neq 0) \geq tn^{-\alpha-\epsilon/4}\right)\nonumber\\
    &\leq \mathbb{P}_{\mathcal{V}_{\theta},\Tilde{\pi}}\left(\sum_{i=1}^t L_{\theta,\theta'}(i) - \sum_{i=1}^t \mathbb{D}(\nu_{\theta}^{X_i}\Vert \nu_{\theta'}^{X_i}) < -tn^{-\alpha-\epsilon/4} \min_{x\in\mathcal{X}\setminus\{0\}} \mathbb{D}(\nu_{\theta}^x\Vert \nu_{\theta'}^x)\right)\\
    &\leq \exp\left(- \frac{t^2\Tilde{\Omega}_{n\rightarrow \infty}(n^{-2\alpha-\epsilon/2})}{t\Tilde{O}_{n\rightarrow \infty}(n^{-\alpha})+t\Tilde{O}_{n\rightarrow \infty}(n^{-\alpha-\epsilon/4})}\right)\label{eqn:prof_lem10_2}\\
    &\leq \exp\left(t \Tilde{\Omega}_{n\rightarrow \infty}(n^{-\alpha - \epsilon/2})\right),
\end{align}
where in \eqref{eqn:prof_lem10_2} we use Freedman's inequality and the fact that 
\begin{align}
    W_t \triangleq \sum_{i=1}^t L_{\theta,\theta'}(i) - \sum_{i=1}^t \mathbb{D}(\nu_{\theta}^{X_i}\Vert \nu_{\theta'}^{X_i}) 
\end{align}
is a martingale, and 
\begin{align}
    \mathbb{E}_{\mathcal{V}_{\theta},\Tilde{\pi}}\left[\left(L_{\theta,\theta'}(t)-\mathbb{D}(\nu_{\theta}^{X_t}\Vert \nu_{\theta'}^{X_t})\right)^2\middle|\mathcal{F}_{t-1}\right] &\leq \mathbb{E}_{\mathcal{V},\Tilde{\pi}}[(L_{\theta,\theta'}(t))^2]\\
    &\leq \Tilde{O}_{n\rightarrow \infty}(n^{-\alpha}). 
\end{align}
Then, for any $\epsilon>0$, we have 
\begin{align}
    \mathbb{P}_{\mathcal{V}_{\theta},\Tilde{\pi}}(N_{\theta}>n^{\alpha+\epsilon}) &\leq \sum_{t=n^{\alpha+\epsilon}}^{\infty}\sum_{\theta'\neq \theta} \left(\exp\left(-t\Tilde
    {\Omega}_{n\rightarrow \infty}(n^{-\alpha})\right) + \exp\left(-t\Tilde{\Omega}_{n\rightarrow \infty}(n^{-\alpha-\epsilon/2})\right)\right)\\
    &\leq O_{n\rightarrow \infty}(n^{-\beta})  
\end{align}
for any $\beta>0$.

\subsection{Proof of Lemma~\ref{lem:2}}
\label{apx:prof_lem6} 

Assuming the policy $\pi$ can achieve the detection error exponent $\gamma$, then for any $0<\kappa<1$, we have from \eqref{eqn:conv_4} that
\begin{align}
    \lim_{n\rightarrow\infty} \mathbb{P}_{\mathcal{V}_{\theta},\pi}\left(A_{\theta,\theta''}(\tau) \geq \kappa \gamma\sqrt{n}\right) = 1
    \label{eqn:prof_lem_2_1}
\end{align}
for any $\theta\in\Theta$ and $\theta''\neq\theta$ . By the fact that $\Theta$ is a finite set, we have
\begin{align}
    \lim_{n\rightarrow\infty} \mathbb{P}_{\mathcal{V}_{\theta},\pi}\left(\min_{\theta''\neq \theta}A_{\theta,\theta''}(\tau) \geq \sqrt{n}\kappa\gamma\right) &= 1 - \lim_{n\rightarrow\infty} \mathbb{P}_{\mathcal{V}_{\theta},\pi}\left(\min_{\theta''\neq \theta}A_{\theta,\theta''}(\tau) < \sqrt{n}\kappa\gamma\right) \\
    &\geq 1 - \lim_{n\rightarrow\infty}\sum_{\theta''\neq \theta} \mathbb{P}_{\mathcal{V}_{\theta},\pi}\left(A_{\theta,\theta''}(\tau) < \sqrt{n}\kappa\gamma\right) \label{eqn:prof_lem_2_2}\\
    &= 1, 
    \label{eqn:prof_lem_2_3}
\end{align}
where \eqref{eqn:prof_lem_2_2} follows from \eqref{eqn:prof_lem_2_1}. 
By the law of total probability, for any $0<\zeta<1$, $\theta\in\Theta$ and any $\theta'\neq\theta$,  
\begin{align}
    \mathbb{P}_{\mathcal{V}_{\theta},\pi}\left(\min_{\theta''\neq \theta}A_{\theta,\theta''}(\tau) \geq \sqrt{n}\kappa\gamma\right) &\leq \mathbb{P}_{\mathcal{V}_{\theta},\pi}\left(\tau > \frac{ \kappa\zeta \sqrt{n}\gamma}{\sum_{x}P_{X;\theta}(x)\mathbb{D}(\nu_{\theta}^x\Vert \nu_{\theta'}^x)}\right) \nonumber\\
    &+ \mathbb{P}_{\mathcal{V}_{\theta},\Tilde{\pi}}\left( \max_{1\leq t\leq \frac{ \kappa\zeta\sqrt{n}\gamma}{\sum_{x}P_{X;\theta}(x)\mathbb{D}(\nu_{\theta}^x\Vert \nu_{\theta'}^x)}} \min_{\theta''\neq\theta} A_{\theta,\theta''}(t) \geq \sqrt{n}\kappa\gamma \right),
    \label{eqn:prof_lem_2_4}
\end{align}
where we have replace the policy $\pi$ by its dummy policy $\tilde{\pi}$ in the second term of the right hand side \eqref{eqn:prof_lem_2_4}. 
Combining \eqref{eqn:prof_lem_2_3} and \eqref{eqn:prof_lem_2_4}, we have that 
\begin{align}
    &\lim_{n\rightarrow\infty} \mathbb{P}_{\mathcal{V}_{\theta},\pi}\left(\tau > \frac{ \kappa\zeta \sqrt{n}\gamma}{\sum_{x}P_{X;\theta}(x)\mathbb{D}(\nu_{\theta}^x\Vert \nu_{\theta'}^x)}\right) \nonumber\\
    &\quad\geq 1- \lim_{n\rightarrow\infty} \mathbb{P}_{\mathcal{V}_{\theta},\Tilde{\pi}}\left( \max_{1\leq t\leq \frac{ \kappa\zeta\sqrt{n}\gamma}{\sum_{x}P_{X;\theta}(x)\mathbb{D}(\nu_{\theta}^x\Vert \nu_{\theta'}^x)}} \min_{\theta''\neq\theta} A_{\theta,\theta''}(t) \geq \sqrt{n}\kappa\gamma \right)
    \label{eqn:prof_lem_2_9}
\end{align}
for any $\theta\in\Theta$ and $\theta'\neq \theta$. 
Note that for any $t\in\mathbb{N}^+$, the event $\min_{\theta''\neq\theta} A_{\theta,\theta''}(t) > 0$ implies that the ML estimation $\hat{\theta}_{\textnormal{ML}}(t)$ is correct when the true hypothesis is $\theta$. Therefore, the event $\min_{\theta''\neq \theta}A_{\theta,\theta''}(t)>\sqrt{n}\kappa \gamma$ implies that there exists some time $\ell<t$ such that $\min_{\theta''\neq\theta}A_{\theta,\theta''}(\ell) < C$ for some $C=O_{n\rightarrow \infty}(1)$ 
and $\hat{\theta}_{\textnormal{ML}}(i)=\theta$ for all $ \ell\leq i\leq t$. In another word, there exists some $\ell$ such that the ML estimate is correct for all $\ell\leq i\leq t$ and the minimum log likelihood ratio $\min_{\theta''\neq\theta}A_{\theta,\theta''}(\ell)$ is upper bounded by some constant at the time $\ell$. 
Then, we can upper bound the second probability term on the right hand side of \eqref{eqn:prof_lem_2_4} as follow. 
\begin{align}
    &\mathbb{P}_{\mathcal{V}_{\theta},\Tilde{\pi}}\left( \max_{1\leq t\leq \frac{ \kappa\zeta\sqrt{n}\gamma}{\sum_{x}P_{X;\theta}(x)\mathbb{D}(\nu_{\theta}^x\Vert \nu_{\theta'}^x)}} \min_{\theta''\neq \theta}A_{\theta,\theta''}(t) \geq \sqrt{n}\kappa\gamma \right) \nonumber\\
    &\quad \leq \sum_{t=1}^{\frac{ \kappa\zeta\sqrt{n}\gamma} {\sum_{x}P_{X;\theta}(x)\mathbb{D}(\nu_{\theta}^x\Vert \nu_{\theta'}^x)}} \sum_{\ell=1}^t \mathbb{P}_{\mathcal{V}_{\theta},\Tilde{\pi}}\left( \min_{\theta''\neq \theta} \sum_{i=\ell}^t L_{\theta,\theta''}(i) \geq \sqrt{n}\kappa\gamma - C, \hat{\theta}_{\textnormal{ML}}(i)=\theta \textnormal{ for all } i \geq \ell\right)\\
    &\quad \leq \sum_{t=1}^{\frac{ \kappa\zeta\sqrt{n}\gamma} {\sum_{x}P_{X;\theta}(x)\mathbb{D}(\nu_{\theta}^x\Vert \nu_{\theta'}^x)}} \sum_{\ell=1}^t \mathbb{P}_{\mathcal{V}_{\theta},\tilde{\pi}}\Bigg(  \sum_{i=\ell}^t L_{\theta,\theta'}(i) - \sum_{i=\ell}^t\mathbb{D}(\nu_{\theta}^{X_i}\Vert \nu_{\theta'}^{X_i})\nonumber\\
    &\hspace{4cm}\geq \sqrt{n}\kappa\gamma - \sum_{i=\ell}^t\mathbb{D}(\nu_{\theta}^{X_i}\Vert \nu_{\theta'}^{X_i}) - C 
    , \hat{\theta}_{\textnormal{ML}}(i)=\theta \textnormal{ for all } i \geq \ell\Bigg)
    \label{eqn:prof_lem_2_8}
\end{align}
Note that for any  $x\in\mathcal{X}\setminus\{0\}$, $1\leq t \leq \frac{ \kappa\zeta\sqrt{n}\gamma} {\sum_{x}P_{X;\theta}(x)\mathbb{D}(\nu_{\theta}^x\Vert \nu_{\theta'}^x)}$, $\ell\leq t$ and $\epsilon''>0$, it holds that
\begin{align}
    &\mathbb{P}_{\mathcal{V}_{\theta},\Tilde{\pi}}\left(\sum_{i=\ell}^t \mathbb{D}(\nu_{\theta}^{X_i}\Vert \nu_{\theta'}^{X_i}) > (1+\epsilon'')\kappa\zeta\sqrt{n}\gamma,\hat{\theta}_{\textnormal{ML}}(i)=\theta \textnormal{ for all }\ell \leq i \leq t \right) \nonumber\\
    &\quad \leq \mathbb{P}_{\mathcal{V}_{\theta},\Tilde{\pi}}\Bigg(\sum_{i=\ell}^t \mathbb{D}(\nu_{\theta}^{X_i}\Vert \nu_{\theta'}^{X_i}) - \sum_{i=\ell}^{t} \sum_{x} P_{X;\theta}(x) \mathbb{D}(\nu_{\theta}^{x}\Vert \nu_{\theta'}^x)> \epsilon'' \kappa\zeta\sqrt{n}\gamma \nonumber\\
    &\hspace{5cm}, \hat{\theta}_{\textnormal{ML}}(i)=\theta \textnormal{ for all }\ell \leq i \leq t\Bigg) \label{eqn:prof_lem_2_5}\\
    &\quad\leq \exp\left(-\frac{ \left(\epsilon''  \kappa\zeta\sqrt{n}\gamma \right)^2}{\frac{ \kappa\zeta\sqrt{n}\gamma} {\sum_{x'}P_{X;\theta}(x')\mathbb{D}(\nu_{\theta}^x\Vert \nu_{\theta'}^x)} O_{n\rightarrow \infty}\left(\sum_{x\neq 0}P_{X;\theta}(x)\right) + O_{n\rightarrow \infty}\left(\kappa\zeta\sqrt{n}\gamma\right)}\right)\label{eqn:prof_lem_2_6}\\
    &\quad \leq \exp(-\Omega_{n\rightarrow \infty}(n^{1/2}))  \label{eqn:prof_lem_2_7}
\end{align}
where \eqref{eqn:prof_lem_2_5} comes from the fact that $t \leq \frac{ \kappa\zeta\sqrt{n}\gamma} {\sum_{x}P_{X;\theta}(x)\mathbb{D}(\nu_{\theta}^x\Vert \nu_{\theta'}^x)}$, and \eqref{eqn:prof_lem_2_6} follows from Bernstein's inequality and the similar change of measure technique as in \eqref{eqn:4_A_2_2}-\eqref{eqn:4_A_2_5} so that $P_{X_i|X^{i-1},Y^{i-1}}$ are replaced by $P_{X;\theta}$ for all $\ell\leq i\leq t$ under the joint event $\hat{\theta}_{\textnormal{ML}}(i)=\theta \textnormal{ for all }\ell \leq i \leq t$.
When 
\begin{align}
    \sum_{i=\ell}^t \mathbf{1}(X_i=x) \leq (1+\epsilon'')\frac{ \kappa\zeta\sqrt{n}\gamma} {\sum_{x}P_{X;\theta}(x)\mathbb{D}(\nu_{\theta}^x\Vert \nu_{\theta'}^x)}P_{X;\theta} (x),
\end{align}
it holds that 
\begin{align}
    \sqrt{n}\kappa\gamma - \sum_{i=\ell}^t\mathbb{D}(\nu_{\theta}^{X_i}\Vert \nu_{\theta'}^{X_i}) - C  \geq \sqrt{n}\kappa\gamma - C - (1+\epsilon'')\kappa\zeta\gamma \sqrt{n}.
\end{align}
For each $0<\kappa<1$ and $0<\zeta<1$, we can choose $\epsilon''$ small enough such that $\sqrt{n}\kappa\gamma - C - (1+\epsilon'')\kappa\zeta\gamma \sqrt{n}>\sqrt{n}\epsilon'''$ for some $\epsilon'''>0$. Then, by the law of total probability, each term in the summation of \eqref{eqn:prof_lem_2_8} can be upper bounded by 
\begin{align}
    &\exp(-\Omega_{n\rightarrow \infty}(n^{1/2})) + \mathbb{P}_{\mathcal{V}_{\theta},\Tilde{\pi}}\Bigg(  \sum_{i=\ell}^t L_{\theta,\theta'}(i) - \sum_{i=\ell}^t\mathbb{D}(\nu_{\theta}^{X_i}\Vert \nu_{\theta'}^{X_i}) \geq \epsilon'''\sqrt{n}\Bigg) \nonumber\\
    &\quad \leq \exp(-\Omega_{n\rightarrow \infty}(n^{1/2})) + \exp\left(-\frac{\Omega_{n\rightarrow \infty}(\epsilon'''\sqrt{n})^2}{ \frac{ \kappa\zeta\sqrt{n}\gamma} {\sum_{x}P_{X;\theta}(x)\mathbb{D}(\nu_{\theta}^x\Vert \nu_{\theta'}^x)} O_{n\rightarrow \infty}\left(\sum_{x\neq 0}P_{X;\theta}(x)\right) + O_{n\rightarrow \infty}(\sqrt{n})}\right) \nonumber \\
    &\quad \leq \exp(-\Omega_{n\rightarrow \infty}(n^{1/2}))
\end{align}
by Freedman's inequality. 
Then, for any $\theta\in\Theta$ and $\theta'\neq \theta$, we have 
\begin{align}
    \lim_{n\rightarrow \infty} \mathbb{P}_{\mathcal{V}_{\theta},\Tilde{\pi}}\left( \max_{1\leq t\leq \frac{ \kappa\zeta\sqrt{n}\gamma}{\sum_{x}P_{X;\theta}(x)\mathbb{D}(\nu_{\theta}^x\Vert \nu_{\theta'}^x)}} \min_{\theta''\neq \theta}A_{\theta,\theta''}(t) \geq \sqrt{n}\kappa\gamma \right) = 0,
\end{align}
which completes the proof by plugging above inequality into \eqref{eqn:prof_lem_2_9}. 

\subsection{Proof of Lemma~\ref{lem:3}}
\label{apx:proof_lem3}
We assume that $\sum_{x\neq 0}P_{X;\theta}(x)=\omega_{n \rightarrow \infty}(n^{-1/2})$ for some $\theta\in\Theta$ and show that the covertness constraint is violated. Let $\sum_{x\neq 0}P_{X;\theta}(x)=\Tilde{\Theta}_{n\rightarrow \infty}(n^{-\alpha})$ for some $\alpha < 1/2$. 
Note that for any $\theta$, $\theta'\neq\theta$ and some $\epsilon>0$ small enough, it holds that 
\begin{align}
    &\frac{1}{\Tilde{n}_{\theta,\theta'}}\sum_{i=1}^{\Tilde{n}_{\theta,\theta'}}\mathbb{P}_{\mathcal{V}_{\theta},\pi}(\hat{\theta}(i-1)=\theta,\tau \geq i) 
    \nonumber\\
    &\quad \geq 1 - \frac{1}{\Tilde{n}_{\theta,\theta'}}\sum_{i=1}^{\Tilde{n}_{\theta,\theta'}} \mathbb{P}_{\mathcal{V}_{\theta},\pi}(\tau < i) - \frac{1}{\Tilde{n}_{\theta,\theta'}}\sum_{i=1}^{\Tilde{n}_{\theta,\theta'}} \mathbb{P}_{\mathcal{V}_{\theta},\pi}(\hat{\theta}(i-1)\neq \theta, \tau \geq i)\\
    &\quad \geq 1 - o_{n\rightarrow \infty}(1) - \frac{1}{\Tilde{n}_{\theta,\theta'}}\sum_{i=1}^{\Tilde{n}_{\theta,\theta'}} \mathbb{P}_{\mathcal{V}_{\theta},\Tilde{\pi}}(N_{\theta} \geq i-1)\\
    &\quad \geq 1 - o_{n\rightarrow \infty}(1) - \frac{1}{\Tilde{n}_{\theta,\theta'}} \left(n^{\alpha+\epsilon} + \sum_{i=n^{\alpha+\epsilon}+1}^{\tilde{n}_{\theta,\theta'}} \mathbb{P}_{\mathcal{V}_{\theta},\Tilde{\pi}}(N_{\theta} \geq i-1)\right)\\
    &\quad \geq 1 - o_{n\rightarrow \infty}(1)
\end{align}
where we have use the result in Lemma~\ref{lem:1}, Lemma~\ref{lem:2} and the fact that $\tilde{n}_{\theta,\theta'} = \Tilde{\Omega}_{n\rightarrow\infty}(n^{1/2+\alpha})$
Therefore, there exists some $C>0$ such that 
\begin{align}
    \widehat{\alpha} &> C \left(\sum_{x\neq 0}P_{X;\theta}(x)\right),
\end{align}
where $\widehat{\alpha}$ is defined in \eqref{eqn:def_alpha_tilde}. 
Then, by using \eqref{eqn:prof_thm2_divergencebound2} and \eqref{eqn:prof_thm2_divergencebound3}, we have for any $\theta\in\Theta$ and $\theta'\neq \theta$ 
\begin{align}
    \mathbb{D}(P_{Z^n;\theta}\Vert (q_{\theta}^{0})^{\otimes n}) &\geq 
    \Tilde{n}_{\theta,\theta'} \mathbb{D}(\widehat{P}_{Z;\theta}||q_{\theta}^{0}) \\
    &\geq \Tilde{n}_{\theta,\theta'} \frac{\widehat{\alpha}^2}{2} \chi_2\left(\sum_{x\neq 0}\widehat{P}_{X;\theta}(x)q_{\theta}^x(z)\middle\Vert q_{\theta}^0\right)(1-o(1)) \\
    &\geq \Omega_{n\rightarrow \infty}\left(n^{1/2}\times \left(\sum_{x\neq 0}P_{X;\theta}(x)\right)\right) \\
    &= \omega_{n\rightarrow \infty}(1),
\end{align}
which violates the covertness constraint.

\subsection{Proof of Lemma~\ref{lem:bandit_0}}
\label{apx:D}
From \cite[Problem 33.4 (a)]{Lattimore2020}, one can observe that 
\begin{align*}
    \inf_{\mathcal{V}'\in \mathcal{E}_{\text{Alt}}(\hat{\mathcal{V}}(t))}\sum_{x\in\mathcal{X}\setminus\{0\}} \Bar{P}(x)\mathbb{D}(\hat{\nu}_x(t)||\nu_x') 
\end{align*}
is a continuous function of $\hat{\mathcal{V}}(t)$ and $\Bar{P}$.  By combining the fact that the Chi-square distance 
\begin{align*}
\chi_2\left(\sum_{x\in\mathcal{X}\setminus\{0\}}\Bar{P}(k) q_x||q_0\right)
\end{align*}
is also a continuous function of $\Bar{P}$, we claim that $\Bar{P}_{\mathcal{V}'}^{\zeta}$ is a continuous function of $\mathcal{V}'$ so as $\alpha_{\mathcal{V}'}^{\zeta}$. 
Therefore, for any $\epsilon>0$, there exists some $\hat{\epsilon}(\epsilon)$ related to the value of $\epsilon$ such that 
\begin{align}
    \left\Vert\hat{\mathcal{V}}(t)-\mathcal{V}\right\Vert_{\infty} \leq  \epsilon \textnormal{ and } \left\Vert \Bar{P}_{\hat{\mathcal{V}}(t)}^{\zeta} -\Bar{P}_{\mathcal{V}}^{\zeta}\right\Vert_{\infty} \leq \epsilon \textnormal{ and } \left\vert\frac{\alpha_{\hat{\mathcal{V}}(t)}^{\zeta}}{\alpha_{\mathcal{V}}^{\zeta}}-1\right\vert \leq \epsilon
\end{align}
whenever 
\begin{align}
    \left\Vert\hat{\mathcal{V}}(t)-\mathcal{V}\right\Vert_{\infty} \leq  \hat{\epsilon}(\epsilon). 
\end{align}
For any $\epsilon>0$ and $\gamma>0$, 
\begin{align}
    &\mathbb{P}_{\mathcal{V},\tilde{\pi}}\left( N_{\nu}(\epsilon) \geq |\log \delta|^{\alpha+\gamma}\right) \nonumber\\
    &\leq \mathbb{P}_{\mathcal{V},\tilde{\pi}}\left(\exists t\geq |\log \delta|^{\alpha+\gamma} \textnormal{ s.t  } ||\hat{\mathcal{V}}(t)-\mathcal{V}||_{\infty}>\hat{\epsilon}(\epsilon) \right)\\
    &\leq \sum_{t=|\log \delta|^{\alpha+\gamma}}^{\infty} \sum_{x\in\mathcal{X}\setminus\{0\}}\mathbb{P}_{\mathcal{V},\tilde{\pi}}\left( \left|\mu(\hat{\nu}_x(t))-\mu(\nu_x)\right| >  \hat{\epsilon}(\epsilon)\right) \\
    &= \sum_{t=|\log \delta|^{\alpha+\gamma}}^{\infty} \sum_{x\in\mathcal{X}\setminus\{0\}}\mathbb{P}_{\mathcal{V},\tilde{\pi}}\left( \left|\frac{1}{T_x(t)}\sum_{i=1}^t Y_i \mathbf{1}(X_i= x)-\mu(\nu_x)\right| >  \hat{\epsilon}(\epsilon) \right).
\end{align}
When $\delta$ is sufficiently small, for any $t\geq |\log\delta|^{\alpha+\gamma}$ and $x\in\mathcal{X}\setminus\{0\}$, 
it holds that 
\begin{align}
    &\mathbb{P}_{\mathcal{V},\tilde{\pi}} \left(T_x(t)< t|\log \delta|^{-\alpha-\gamma/2}\right) \nonumber\\
    &= \mathbb{P}_{\mathcal{V},\tilde{\pi}} \left(\sum_{i=1}^t \mathbf{1}(X_i=x) - \sum_{i=1}^t P_{X_i|Y^{i-1},X^{i-1}}(x) < t|\log \delta|^{-\alpha-\gamma/2}- \sum_{i=1}^t P_{X_i|Y^{i-1},X^{i-1}}(x)\right) \nonumber\\
    &\leq \mathbb{P}_{\mathcal{V},\tilde{\pi}} \left(\sum_{i=1}^t \mathbf{1}(X_i=x) - \sum_{i=1}^t P_{X_i|Y^{i-1},X^{i-1}}(x) < - t\Tilde{\Omega}_{\delta \rightarrow 0}(|\log \delta|^{-\alpha})\right) \\
    &\leq \exp\left(-\frac{ t^2 \Tilde{\Omega}_{\delta \rightarrow 0}(|\log \delta|^{-2\alpha})}{t \Tilde{O}_{\delta \rightarrow 0}(|\log \delta|^{-\alpha})+t \Tilde{O}_{\delta \rightarrow 0}(|\log \delta|^{-\alpha})}\right)\\
    &\leq \exp\left(-t \Tilde{\Omega}_{\delta \rightarrow 0}(|\log \delta|^{-\alpha})\right)
\end{align}
by Freedman's inequality. 
Therefore, when $\delta$ is sufficiently small, for any $t\geq |\log\delta|^{\alpha+\gamma}$ and $x\in\mathcal{X}\setminus\{0\}$, we have 
\begin{align}
    &\mathbb{P}_{\mathcal{V},\Tilde{\pi}}\left( \left|\frac{1}{T_x(t)}\sum_{i=1}^t Y_i \mathbf{1}(X_i= x)-\mu(\nu_x)\right| >  \hat{\epsilon}(\epsilon) \right) \nonumber\\
    &\leq \mathbb{P}_{\mathcal{V},\Tilde{\pi}} \left(T_x(t)< t |\log \delta|^{-\alpha-\gamma/2}\right) \nonumber\\
    &\quad + \sum_{k=t|\log \delta|^{-\alpha-\gamma/2}}^t \mathbb{P}_{\mathcal{V},\Tilde{\pi}}\left( \left|\frac{1}{T_x(t)}\sum_{i=1}^t Y_i \mathbf{1}(X_i= x)-\mu(\nu_x)\right| >  \hat{\epsilon}(\epsilon) , T_x(t) = k\right) \\
    &\leq \exp\left(-t \Tilde{\Omega}_{\delta\rightarrow 0}(|\log \delta|^{-\alpha})\right) + \sum_{k=t|\log \delta|^{-\alpha-\gamma/2}}^t\exp\left(-\frac{k(\hat{\epsilon}(\epsilon))^2}{2}\right) \label{eqn:prof_lem13_1}\\
    &\leq \exp\left(-t \Tilde{\Omega}_{\delta\rightarrow 0}(|\log \delta|^{-\alpha})\right) + t \exp\left(-\frac{t|\log \delta|^{-\alpha-\gamma/2}(\hat{\epsilon}(\epsilon))^2}{2}\right),
\end{align}
where in \eqref{eqn:prof_lem13_1} we use the Chernoff bound for sub-Gaussian random variables. 
Finally, we have for all $\delta$ sufficiently small, 
\begin{align} 
    \mathbb{P}_{\mathcal{V},\Tilde{\pi}}\left( N_{\nu}(\epsilon) \geq |\log \delta|^{\alpha+\gamma}\right) &\leq 
|\mathcal{X}|
    \sum_{t=|\log \delta|^{\alpha+\gamma}}^{\infty} \exp\left(-t \Tilde{\Omega}_{\delta\rightarrow 0}(|\log \delta|^{-\alpha})\right) + t \exp\left(-\frac{t|\log \delta|^{-\alpha-\gamma/2}\epsilon^2}{2}\right) \nonumber\\
    &\leq O_{\delta\rightarrow 0}(|\log \delta|^{-\beta})
\end{align}
for any $\beta>0$ and $\epsilon>0$.

\subsection{Proof of Lemma~\ref{lem:bandit_1}}

Since $\Gamma_t = \Omega_{\delta\rightarrow 0}(|\log \delta|)$, there exists some $a>0$ such that $\Gamma_t > a|\log \delta|$ when $\delta$ is small enough. 
Then, fixed any $b>0$ such that for all $\delta$ small enough, 
\begin{align}
    &\mathbb{P}_{\mathcal{V},\pi}\left(\tau \leq b|\log \delta|^2\right) \nonumber\\
    &\leq  \mathbb{P}_{\mathcal{V},\pi}\left(R_{b|\log \delta|^2} > \Gamma_t \right) \\
    &\leq \mathbb{P}_{\mathcal{V},\pi}\left(R_{b|\log \delta|^2} > a|\log \delta| \right)\\
    &\leq \mathbb{P}_{\mathcal{V},\pi}\left(\inf_{\nu'\in\mathcal{E}_{\text{Alt}}(\hat{\mathcal{V}}(b|\log \delta|^2))} \sum_{x\in\mathcal{X}\setminus\{0\}} T_x(b|\log \delta|^2) \mathbb{D}(\hat{\nu}_x(b|\log \delta|^2)||\nu_x') > a|\log \delta| \right)\\
    &\leq \mathbb{P}_{\mathcal{V},\pi}\left(   T_{\mathcal{X}\setminus\{0\}}(b|\log \delta|^2) \times c > a|\log \delta| \right)
    \label{eqn:prof_lem_bandit_1_1}, 
\end{align}
where \eqref{eqn:prof_lem_bandit_1_1} follows from the fact that the relative entropy $\mathbb{D}(\nu_x'\Vert v_x'')$ is bounded for all $\nu',\nu''$ and $x\in\mathcal{X}\setminus\{0\}$ so that we upper bound $\mathbb{D}(\hat{\nu}_x(b|\log \delta|^2)||\nu_x')$ by some $c>0$ for all $x\in\mathcal{X}\setminus\{0\}$. 
By choosing $b$ sufficiently small, there exists some $\epsilon'>0$ such that 
\begin{align}
    \frac{a}{c}|\log \delta| - \sum_{i=1}^{b|\log \delta|^2}\sum_{x\neq 0} P_{X_i|X^{i-1},Y^{i-1}}(x) \geq \epsilon' |\log \delta|
\end{align}
because 
\begin{align}
    \sum_{x\neq 0} P_{X_i|X^{i-1},Y^{i-1}}(x) = O_{\delta\rightarrow 0}(1/|\log \delta|)
\end{align}
for any $i\in\mathbb{N}$ by out construction of the policy. 
Then, we can use Freedman's concentration bound on martingales to bound $\eqref{eqn:prof_lem_bandit_1_1}$ as follows. 
\begin{align*}
     &\mathbb{P}_{\mathcal{V},\pi}\left(   T_{\mathcal{X}\setminus\{0\}}(b|\log \delta|^2) \times c > a|\log \delta| \right) \nonumber\\
     & \leq \mathbb{P}_{\mathcal{V},\pi}\left(  \sum_{i=1}^{b|\log \delta|^2} \mathbf{1}(X_i\neq 0)  > \frac{a}{c}|\log \delta| \right)\\
     & \leq \mathbb{P}_{\mathcal{V},\pi}\left(  \sum_{i=1}^{b|\log \delta|^2} \mathbf{1}(X_i\neq 0) - \sum_{i=1}^{b|\log \delta|^2}\sum_{x\neq 0} P_{X_i|X^{i-1},Y^{i-1}}(x) > \epsilon'|\log \delta|\right)\\
     &\leq \exp\left(-\Omega_{\delta\rightarrow 0}(|\log \delta|)\right) \\
     &\leq O_{\delta\rightarrow 0}(|\log \delta|^{-\beta})
\end{align*}
for any $\beta>0$, where we have use the fact that 
\begin{align}
    \left\{\sum_{i=1}^{n} \mathbf{1}(X_i\neq 0) - \sum_{i=1}^{n}\sum_{x\neq 0} P_{X_i|X^{i-1},Y^{i-1}}(x)\right\}_{n=1}^{\infty}
\end{align}
is a martingale sequence.

\subsection{Proof of \eqref{eqn:prof_thm4_bounds_of_G}}
\label{apx:bounds_of_G}
For any $t\geq |\log \delta|^{\alpha+\epsilon}$, it holds that 
\begin{align*}
    \mathbb{P}_{\mathcal{V},\Tilde{\pi}}\left(\mathcal{G}^c\right) &= \mathbb{P}_{\mathcal{V},\Tilde{\pi}}\left(||\hat{\mathcal{V}}(t)-\mathcal{V}||_{\infty} > \epsilon\right) \\
    &\leq \sum_{x\in\mathcal{X}\setminus\{0\}}\mathbb{P}_{\mathcal{V},\Tilde{\pi}}\left(|\mu(\hat{\nu}_x(t))-\mu(\nu_x)| > \epsilon\right) \\
    &\leq \sum_{x\in\mathcal{X}\setminus\{0\}} \Bigg( \mathbb{P}_{\mathcal{V},\Tilde{\pi}}\left(|\mu(\hat{\nu}_x(t))-\mu(\nu_x)| > \epsilon, T_x(t) \geq t|\log \delta|^{-\alpha-\epsilon/2} \right) \nonumber\\
    &\hspace{4cm}+ \mathbb{P}_{\mathcal{V},\Tilde{\pi}}\left(T_x(t)<t|\log \delta|^{-\alpha-\epsilon/2}\right)\Bigg)\\
    &\leq \sum_{x\in\mathcal{X}\setminus\{0\}} \left(\exp\left(-\frac{\epsilon^2 t |\log\delta|^{-\alpha-\epsilon/2}}{2}\right)+\exp\left(-\frac{t^2\Omega_{\delta\rightarrow 0}(|\log\delta|^{-2\alpha})}{t\Tilde{O}_{\delta\rightarrow 0}(|\log \delta|^{-\alpha})+t\Tilde{O}_{\delta\rightarrow 0}(|\log \delta|^{-\alpha})}\right)\right)\\
    &\leq \exp\left(-t\Tilde{\Omega}_{\delta\rightarrow 0}(|\log \delta|^{-\alpha-\epsilon/2})\right) 
\end{align*}
by the Chernoff bound of Gaussian random variables and Freeman's inequality that have been used frequently in this paper, and we omit the details. 
\end{document}